\definecolor{shadecolor}{rgb}{0.9, 0.9, 0.86}
\definecolor{darkgreen}{rgb}{0.2, 0.5,  0}
\definecolor{darkblue}{rgb}{0.1,0.1,0.45}
\def\&{\vspace{-5pt}&}
\def\Im{\mathrm {Im}\,}
\def\e{{\rm e}}
\def\dd{{\rm d}}
\newcommand\norm[1]{\left\lVert#1\right\rVert}
\tikzset{->-/.style={decoration={
 markings,
 mark=at position #1 with {\arrow{>}}},postaction={decorate}}}
\def \eqref#1{(\ref{#1})}
\def \& {&\hspace{-10pt}}
\renewcommand{\d}{\mathrm d}
\newtheorem{theorem}{Theorem}[section]
\newtheorem{example}[theorem]{Example}
\newtheorem{exercise}[theorem]{Exercise}
\newtheorem{lemma}[theorem]{Lemma}
\newtheorem{remark}[theorem]{Remark}
\newtheorem{problem}[theorem]{Riemann--Hilbert Problem}
\newtheorem{proposition}[theorem]{Proposition} 
\newtheorem{corollary}[theorem]{Corollary} 
\newtheorem{definition}[theorem]{Definition}
\newtheorem{assumption}[theorem]{Assumption}
\def\d{{\rm d}}
\def\tr{{\mathrm{Tr}}}
\def\bt{\begin{theorem}}
\def\et{\end{theorem}}
\def\bc{\begin{corollary}}
\def\ec{\end{corollary}}
\def\bx{\begin{example}}
\def\ex{\end{example}}
\def\bxr{\begin{exercise}\small}
\def\exr{\end{exercise}}
\def\bl{\begin{lemma}}
\def\el{\end{lemma}}
\def\bd{\begin{definition}}
\def\ed{\end{definition}}
\def\bp{\begin{proposition}}
\def\ep{\end{proposition}}
\def\br{\begin{remark}}
\def\er{\end{remark}}
\def\be{\begin{equation}}
\def\ee{\end{equation}}
\def\&{\hspace{-15pt}&}
\def\bea{\begin{eqnarray}}
\def\eea{\end{eqnarray}}
\def\beas{\begin{eqnarray*}}
\def\eeas{\end{eqnarray*}}
\def\1{{\bf 1}}
\def\i{\mathrm i}
\DeclareMathOperator{\im}{Im}
\newcommand{\cmnew}[1]{\textcolor{violet}{#1}}
\date{}                     %% if you don't need date to appear
\title{Unitary ensembles with a critical edge point, their multiplicative statistics and the Korteweg-de-Vries hierarchy}
\author[1]{Mattia Cafasso}
\author[2]{Carla Mariana da Silva Pinheiro}
\affil[1]{\textit{Univ Angers, CNRS, LAREMA, SFR MATHSTIC, F-49000 Angers, France ;} \texttt{mattia.cafasso@univ-angers.fr}}
\affil[2]{\textit{Instituto de Ciências Matemáticas e de Computação -  Universidade de São Paulo;} \texttt{carla.pinheiro@usp.br}}
\begin{document}
\maketitle

\begin{abstract}
We study the multiplicative statistics associated to the limiting determinantal point process describing eigenvalues of unitary random matrices with a critical edge point, where the limiting eigenvalue density vanishes like a power 5/2. We prove that these statistics are governed by the first three equations of the KdV hierarchy, and study the asymptotic behavior of the relevant solutions.
\end{abstract}

\tableofcontents \hfill\\

\noindent\textbf{2020 Mathematics Subject Classification:} 60B20, 37K10.

\newpage

%\color{brown}
%\section*{Other corrections}

%\textbf{Point 7:}\\
%Do we cite de historically famous reference? \m{I put some historical reference, going back to the very first articles about KdV! As for the boundary condition, I really do not get the comment of the referee...}

%\bibitem{Segal}
%	Segal, G., Wilson, G. (1985). 
%	\newblock{Loop groups and equations of KdV type.} \emph{Publications mathématiques de l’I.H.É.S.}, 61, pp. 5-65.
	
%The boundary condition is the asymptotic characterization?

%\color{black}
\section{Introduction and statement of the results}

Determinantal point processes were introduced by O. Macchi in 1975 \cite{Macchi}. They appear naturally in the study of fermionic systems, random matrices, random permutations, tilings and many other different models, see \cite{Soshnikov} and \cite{JohanssonDPP} for a pedagogical introduction. In this paper, we are interested in a particular point process, describing the universal behavior of eigenvalues of unitary random matrix ensembles near a critical edge point, where the limiting eigenvalue density vanishes like power 5/2. In the physics literature, this kind of multicritical models begun to be studied in the nineties \cite{BoBre, BreMaPa}. The relevant kernel was introduced, in the mathematical literature, by Claeys and Vanlessen \cite{CV07}. It is of integrable type (in the sense of Its-Izergin-Korepin and Slavnov \cite{IIKS}), as it is written in the form
\begin{equation}\label{eq:introkernel}
K(\lambda,\mu;t_0,t_1) := \frac{\phi_1(\lambda;t_0,t_1)\phi_2(\mu;t_0,t_1) - \phi_2(\lambda;t_0,t_1)\phi_1(\mu;t_0,t_1)}{-2 \pi \i(\lambda -\mu)}.
\end{equation}
The functions $\phi_j, j=1,2$ are best described as entries of the (unique) solution of a given Riemann-Hilbert problem, depending parametrically on $t_0, t_1 \in \mathbb R$, and are related to a distinguished solution $y = y(t_0,t_1)$ of the second member of the Painlev\'e I hierarchy, usually denoted by $PI^{(2)}$. This is a fourth-order analogue of the Painlevé I equation, and it reads
\begin{equation}
	\frac{\partial_{t_0}^4y}{64} +\frac{5}{8}\left((\partial_{t_{0}}y)^2 + 2y\partial_{t_0}^2y\right)+10y^3+4y t_1 - 4 t_{0}=0.
\end{equation}
The Painlev\'e trascendent $y$ plays a distinguished role in the so-called Dubrovin's conjecture about the universality of the critical behavior of Hamiltonian perturbation of hyperbolic PDEs, see \cite{DubConjecture} and also \cite{GravaClaeys2012} for a partial solution of the conjecture restricted to the equations of the KdV hierarchy, which are defined below. The precise definition of $y$, as well as the one of $\phi_j, j = 1,2$, is formulated in terms of the Riemann-Hilbert problem \ref{RHP1} below, in Section \ref{sec:2}.
 %(see equation (1.14) in \cite{CV07}). 
%The precise definition of $\phi_j, j = 1,2$ as well as the one of $y$ are recalled below, in Section \ref{sec:2}. 
The kernel \eqref{eq:introkernel} should be thought of as a higher-order analogue of the Airy kernel, the latter appearing near regular edge points of  unitary random matrices, when the limiting eigenvalue density vanishes like power 1/2 (as, for example, in the case of the Wigner semi-circle law, associated to the Gaussian Unitary Ensemble). As for the Airy point process, one can prove that the point process $\mathcal X$ associated to \eqref{eq:introkernel} has (almost surely) a largest particle and, as such, almost all realisations of the point process are given by a strictly decreasing sequence of points $\xi_0 > \xi_1 > \xi_2 > \cdots$. We introduce a real function $\sigma : \mathbb R \to \mathbb R$, satisfying the following assumptions: 
 \begin{assumption}\label{assumption1}\hfill
 	\begin{enumerate}
 		\item $\sigma : \mathbb R \to [0,1]$ is a non-decreasing function, $C^\infty$ everywhere except, possibly, at a finite number of points $r_1, \ldots, r_k$. At those points, the left and right limits of $\sigma$ and of all its derivatives exist (but are not, in general, the same).
		\item The function $r \to r^4\sigma(r)$ is in $L^2(\mathbb R^-, \d r)$.  
	\end{enumerate}
 \end{assumption}
We will denote $\iota := \lim_{r \to \infty} \sigma(r)$ (because of the assumption above, such a $\iota$ exists and $0 \leq \iota \leq 1$).

We are interested in:
\begin{equation}\label{def:Qsigma}
	Q_{\sigma}(t_0,t_1,s_0,s_1) := \mathbb E \left[ \prod_{j \geq 0} \left( 1 - \sigma\left(\frac{\xi_j}{s_0^{2/7}} - \frac{s_1}{s_0}\right)\right)\right],
\end{equation}
where $s_0 > 0$ and $s_1 \in \mathbb R$. These two parameters $s_0$ and $s_1$ parametrise affine transformations of the point process $\mathcal X$. 
\begin{remark}
One might wonder about the peculiar choice of the parametrisation of affine transformations in \eqref{def:Qsigma}. Besides being convenient in the sequel of the paper, they are also related to the (integrated) density of the point process $\mathcal X$. Using the corollary \ref{cor:density} in Appendix \ref{AppendixPI}, heuristically we expect that
$$
	\mathbb E \left[ \# \left\{ j \in \mathbb N \; : \; \xi_j > -r \right\} \right] \sim \int_{-r}^0 \frac{1}{\pi} \rho^{\frac{5}2} \mathrm{d} \rho =  \frac{2}{7\pi} r^{7/2},
$$
so that the expected number of points, to the right of $-r$, for the rescaled point process is inversely proportional to $s_0$:
$$
	\mathbb E \left[ \# \left\{ j \in \mathbb N \; : \; \frac{\xi_j}{s_0^{2/7}} > -r \right\} \right]  \sim \frac{2}{7\pi s_0} r^{7/2}.
$$
A similar parametrisation, related to the density of the Airy point process, was also used in \cite{CCR}, see equation (1.4) in loc. cit. 
\end{remark}
%\m{Carla, it would be nice to know exactly the constant $C$ such that $K(u,u) |u|^{-5/2} \to C $ as $u \to -\infty$, so that we can write what $C$ is in the expression above. Do you think it is easy to do?}

\begin{remark}
The quantities $Q_{\sigma}(t_0,t_1,s_0,s_1)$ are usually called \emph{multiplicative statistics} associated to the point process. In the last years, due to their applications to solvable models in the KPZ universality class and to integrable PDEs, they attracted considerable attention by researchers interested in integrability. They have the following elegant interpretation in terms of \emph{thinning} of a point process, see for instance \cite{ClaeysGlesner} and references therein. Consider a realisation $\xi_0 > \xi_1 > \xi_2 > \cdots$ of the point process $\mathcal X$ and construct a new configuration of points by keeping each $\xi_j, \, j \geq 0,$ with probability $\sigma(\xi_j s_0^{-2/7} - s_1s_0^{-1})$, and by deleting it with probability $1 - \sigma(\xi_j s_0^{-2/7} - s_1s_0^{-1})$. In this way, a new point process $\mathcal X^{\sigma}$ is created by thinning the original one. $Q_{\sigma}(t_0,t_1,s_0,s_1) $ is the probability that $\mathcal X^{\sigma}$ is empty:
$$
Q_\sigma(t_0,t_1,s_0,s_1) := \mathbb P\left( \mathcal X^{\sigma} = \emptyset\right).
$$
\end{remark}
Since $\mathcal X$ is a \emph{determinantal} point process, by general results about determinantal point processes (see for instance \cite{JohanssonDPP}), $Q_{\sigma}(t_0,t_1,s_0,s_1)$ are Fredholm determinants, see equations and \eqref{eq:QasFredDet} and \eqref{eqintro:FredSeries} below. Moreover, being the associated kernel of \emph{integrable type}, in the sense of \cite{IIKS}, a natural connection is established with Riemann-Hilbert problems and, ultimately, with integrable PDEs. The ones relevant in this work are the first three equations of the KdV hierarchy, which we now define.
\begin{definition}
Let $R_k$ be the Lenard differential polynomials, depending on $v$ and its derivatives with respect to $\tau_1$, defined by the recursion relation
\begin{equation}\label{LenardRecursion}
	D_1 \mathcal{R}_{2k+3}[v] = \left( \frac{1}4 D_1^3 + 2vD_1 + v_{\tau_1} \right) \mathcal{R}_{2k + 1}[v], \quad \quad \mathcal{R}_1[v] := v,
\end{equation}
 where $D_k := \frac{\partial}{\partial \tau_k}$ and $v_{\tau_k} = D_k v, \quad k \geq 1$.
 The KdV hierarchy is, by definition, the set of commuting flows given by the equations
 \begin{equation}\label{KdVhierarchy}
 	v_{\tau_{2k + 1}} = D_1 \big(\mathcal{R}_{2k + 1}[v] \Big), \quad k \geq 0.
 \end{equation}
 \end{definition}
It is easy to see that, when $k = 1$, the equation \eqref{KdVhierarchy} becomes the celebrated KdV equation \eqref{eq:kdv1}. This equation, as firstly observed in 1968 in a seminal paper by Gardner, Kruskal and Miura \cite{GKM68},  possesses an infinite number of conserved quantities, each one giving an equation of the hierarchy, defined by the flow with respect to the parameter $\tau_{2k + 1}$. The recursion \eqref{LenardRecursion}, which is already present in \cite{GGKM}, equation (3.20), has been successively interpreted by Magri \cite{Magri} as a bihamiltonian recursion, $D_1$ and $\frac{1}4 D_1^3 + 2vD_1 + v_{\tau_1}$ being the two (compatible) Poisson operators associated to the equation.
%\textcolor{red}{Several works had been carried out recently in this matter. For instance, in \cite{CR2023} the authors explored a deformation of the discrete Bessel ensembles, in \cite{CCR2021} they explored a deformation of the Airy kernel and in \cite{SC2023} it was discussed the sine deformed kernel. The present work is a generalization of \cite{CCR2021}, in the sense that Airy kernel corresponds to the case $k=0$ of the kernel presented by Claeys, Its and Krasovsky in \cite{CIK2010}, while our paper brings up results for the case $k=1$.} %For reasons that are intimately connected with integrable probability (see for instance \cite{BG} and references therein) it is very interesting to study which integrable PDEs govern the multiplicative statistics of determinantal point processes, and the asymptotics of the relevant solutions. For the case of the Airy point process, this project has been carried out in the papers \cite{CCR, CChR}. For the sine point process, the associated integrable PDEs appeared in the seminal paper \cite{IIKS}, and have been recently re-derived in \cite{CT}. The asymptotics of the solutions have been studied in \cite{Xu} \textcolor{blue}{I have to check if this is correct}. For the Bessel process, the equations were found in \cite{RuzzaBessel} and the asymptotics in \cite{}. 

In order to state our first main result, we need to introduce the change of variable $\{t_0,t_1,s_0,s_1 \} \mapsto \{\tau_1,\tau_3,\tau_5,\tau_7 \}$ defined by the one-to-one map
\begin{equation}\label{eq:changeofvar}
	\tau_1 = \displaystyle \frac{5 s_1^3}{8 s_0^2} + \frac{t_1 s_1}{s_0^{4/7}} - 2t_0s_0^{1/7}, \quad  \tau_3 = \displaystyle \frac{5 s_1^2}{4 s_0} + \frac{2 t_1 s_0^{3/7}}3, \quad \tau_5 = s_1, \quad \tau_7 = \displaystyle \frac{2}7 s_0.
\end{equation}
Our first main result states that $Q_{\sigma}(\tau_1,\tau_3,\tau_5,\tau_7)$ produces a family of solutions, depending on the function $\sigma$, to the first three equations of the Korteweg-de-Vries hierarchy, and also to three more equations detailed in the theorem below.
 
\begin{theorem}\label{thm:Main1}
	Let $\sigma : \mathbb R \longrightarrow [0,1]$ be a function satisfying Assumptions \ref{assumption1} and $$y = y(t_0(\tau_1,\tau_3,\tau_5,\tau_7), t_1(\tau_1,\tau_3,\tau_5,\tau_7)), \quad \quad  h = h(t_0(\tau_1,\tau_3,\tau_5,\tau_7),t_1(\tau_1,\tau_3,\tau_5,\tau_7))$$ 
	defined as in \eqref{eq:asympPhi}. Then 
\begin{enumerate}
	\item The function
	\begin{equation}\label{def:v}
		v = v(\tau_1,\tau_3,\tau_5,\tau_7) := \frac{\partial^2}{\partial \tau_1^2} \log Q_\sigma + \left(\frac{2}{7 \tau_7}\right)^{2/7}\!\!\!\!\!\! y - \frac{1}7 \frac{\tau_5}{\tau_7}
	\end{equation}
	satisfies the first three equations of the KdV hierarchy:
	\begin{equation}\label{eq:KdVhierarchy}
		v_{\tau_{2k + 1}} = D_1 \Big(\mathcal{R}_{2k + 1}[v] \Big), \quad k = 1,2,3.
	\end{equation}
	\item The function 
	\begin{equation}\label{def:u}
		u = u(\tau_1,\tau_3, \tau_5, \tau_7) := \frac{\partial}{\partial \tau_1} \log Q_\sigma - \left(\frac{2}7\right)^{1/7}\!\!\!\!\frac{h}{\tau_7^{1/7}} + \frac{3}{98}\frac{\tau_3 \tau_5^2}{\tau_7^2} - \frac{15}{2744}\frac{\tau_5^4}{\tau_7^3} - \frac{1}7\frac{\tau_1\tau_5}{\tau_7},
	\end{equation}
	satisfies the ``potential KdV'' equations
	\begin{equation}\label{eq:pKdVhierarchy}
		u_{\tau_{2k + 1}} =  \mathcal{R}_{2k + 1}[u_{\tau_1}], \quad k = 1,2,3.
	\end{equation}
	%These are sometimes called potential KdV equations.
\end{enumerate}
\end{theorem}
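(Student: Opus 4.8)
The plan is to use the integrable (Its--Izergin--Korepin--Slavnov) structure of the kernel \eqref{eq:introkernel} to realise $Q_\sigma$ as the isomonodromic tau function of a Riemann--Hilbert problem, and then to extract the equations \eqref{eq:KdVhierarchy} and \eqref{eq:pKdVhierarchy} from the compatibility of a Lax pair attached to that problem. First I would write $Q_\sigma = \det(\mathbf I - K_\sigma)$, where $K_\sigma$ is the trace-class operator whose kernel is \eqref{eq:introkernel} dressed by the multiplier $1 - \sigma(\lambda s_0^{-2/7} - s_1 s_0^{-1})$; since $\mathcal X$ is determinantal this is the standard Fredholm representation of the thinning probability $\mathbb P(\mathcal X^\sigma = \emptyset)$. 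Because $K$ is of integrable type, the IIKS formalism associates to $K_\sigma$ a Riemann--Hilbert problem for a matrix $\RR(\lambda)$, normalised to $\mathbf I$ at $\lambda = \infty$, whose jump is a perturbation of the identity determined by $\phi_1, \phi_2$ and the multiplier $1 - \sigma$. The resolvent of $K_\sigma$ is expressed through $\RR$, and, via the Bertola tau differential, the logarithmic derivatives of $Q_\sigma$ in any deformation parameter are given by residues at the singular points of bilinear expressions in $\RR$ and its $\lambda$-derivative. In particular $\partial_{\tau_1}\log Q_\sigma$ and $\partial_{\tau_1}^2\log Q_\sigma$ are identified with prescribed coefficients in the large-$\lambda$ expansion of $\RR$.

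The conceptual reason KdV appears is that the $PI^{(2)}$ equation governing the background functions $\phi_j$ is the self-similar (string-equation) reduction of the KdV hierarchy: the isomonodromic times $(t_0,t_1)$ of RHP \ref{RHP1}, together with the affine-scaling parameters $(s_0,s_1)$, are repackaged by \eqref{eq:changeofvar} into exactly four KdV times $\tau_1,\tau_3,\tau_5,\tau_7$, which is why one obtains precisely the first three flows relative to the base variable $\tau_1$ and no more. Concretely, I would combine $\RR$ with the bare solution of RHP \ref{RHP1} to build a wave function $\Ppsi$; the technical heart is to check that $\Ppsi$ has $\lambda$-independent jumps (the $\sigma$-dependence of the jump of $\RR$ must cancel against that of the bare solution in the product). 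Granting this, $A := (\partial_\lambda\Ppsi)\Ppsi^{-1}$ and $B_{2k+1} := (\partial_{\tau_{2k+1}}\Ppsi)\Ppsi^{-1}$, $k = 0,1,2,3$, have no jump across the contour, hence are entire in $\lambda$, and by their controlled growth at infinity are polynomial in $\lambda$ of the standard KdV form --- i.e. a Lax pair. The self-similar Painlevé part of $\Ppsi$ then contributes exactly the explicit corrections $(2/(7\tau_7))^{2/7}y - \tau_5/(7\tau_7)$ in \eqref{def:v} and the rational terms in \eqref{def:u}: these are the component of the full KdV potential carried by the background rather than by $Q_\sigma$.

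With the Lax pair in hand, the equations follow from the zero-curvature conditions $\partial_{\tau_{2k+1}}A - \partial_\lambda B_{2k+1} + [A, B_{2k+1}] = 0$ by reading off the coefficient of the appropriate power of $\lambda$. It is cleanest to organise the two statements around the identity $v = u_{\tau_1}$: comparing \eqref{def:v} and \eqref{def:u}, this holds provided $\partial_{\tau_1} h \propto y$ (with the constant dictated by \eqref{eq:changeofvar}), an identity belonging to the Painlevé background that I would establish from the $t_0$-Lax equation of RHP \ref{RHP1}. Granting it, it suffices to prove the potential form \eqref{eq:pKdVhierarchy}, $u_{\tau_{2k+1}} = \mathcal R_{2k+1}[u_{\tau_1}]$, which is the identity naturally produced by the tau-function interpretation of $u = \partial_{\tau_1}\log(\text{full }\tau)$; differentiating it once in $\tau_1$ and using $v = u_{\tau_1}$ together with $\mathcal R_1[v] = v$ and the commutation of derivatives yields \eqref{eq:KdVhierarchy}, $v_{\tau_{2k+1}} = D_1\mathcal R_{2k+1}[v]$ for $k = 1,2,3$.

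I expect the main obstacle to be the construction and control of the wave function $\Ppsi$. First, one must prove that its jumps are genuinely $\lambda$-independent despite the $\sigma$-dependent jump of $\RR$, so that $A$ and $B_{2k+1}$ are polynomial in $\lambda$ and $\Ppsi$ is a bona fide KdV wave function; this is where the finite-rank integrable structure of the kernel and the $\lambda$-ODE satisfied by $\phi_1,\phi_2$ must be used in an essential way. Second, one must expand $\RR$ at $\lambda = \infty$ to high enough order to pin down all four coefficients entering \eqref{def:v}--\eqref{def:u}. A last, purely computational but delicate point is the bookkeeping forced by the nonlinear change of variables \eqref{eq:changeofvar}: one has to verify that its Jacobian applied to the background flows reproduces the correction terms exactly, that the relation $\partial_{\tau_1} h \propto y$ holds so that $v = u_{\tau_1}$, and that the three KdV flows are mutually compatible with the $PI^{(2)}$ constraint, so that \eqref{eq:KdVhierarchy} and \eqref{eq:pKdVhierarchy} hold simultaneously for $k = 1,2,3$.
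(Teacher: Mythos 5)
Your overall architecture is the same as the paper's: represent $Q_\sigma$ as the Fredholm determinant of the dressed integrable kernel, solve the associated resolvent Riemann--Hilbert problem for a matrix $Y$ normalised at infinity, multiply by the bare solution $\Phi$ of Riemann--Hilbert problem \ref{RHP1} to absorb the $\phi_j$-dependence of the jump, and read off Lax matrices and zero-curvature conditions in the four times $\tau_1,\tau_3,\tau_5,\tau_7$ produced by \eqref{eq:changeofvar}; your organising identity $v=u_{\tau_1}$ (via $\partial_{t_0}h=2y$) is also exactly how the paper passes from \eqref{eq:pKdVhierarchy} to \eqref{eq:KdVhierarchy}. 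However, there is one genuine gap in the engine of your derivation. The jump of the dressed wave function is \emph{not} $\lambda$-independent: after the conjugation the $\phi_j$'s cancel but the thinning function does not, and the jump of $X=AY\Phi$ (in the rescaled variable $\zeta$) is $\left(\begin{smallmatrix}1 & 1-\sigma(\zeta)\\ 0&1\end{smallmatrix}\right)$, which for a general $\sigma$ satisfying Assumption \ref{assumption1} depends genuinely on $\zeta$. Consequently $(\partial_\lambda\Ppsi)\Ppsi^{-1}$ picks up the nonzero jump $X_-(\partial_\zeta J)J^{-1}X_-^{-1}$ across $\mathbb R$, is \emph{not} entire, and is not a polynomial Lax matrix; the zero-curvature condition in the form $\partial_{\tau_{2k+1}}A-\partial_\lambda B_{2k+1}+[A,B_{2k+1}]=0$, from which you propose to read off the equations, is therefore unavailable. (This is precisely the mechanism by which such deformed kernels produce integro-differential, rather than ordinary, equations in the spectral variable.)

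What \emph{is} true, and what the paper uses, is that the jump is independent of the deformation parameters $\tau_j$, so only the time-ratios $B_{2k+1}=(\partial_{\tau_{2k+1}}X)X^{-1}$ are jump-free, entire, and polynomial in $\zeta$ of degree $k+1$. The equations are then extracted from the pairwise compatibility $\partial_{\tau_1}B_{2k+1}-\partial_{\tau_{2k+1}}B_1=[B_1,B_{2k+1}]$ for $k=1,2,3$, supplemented by coefficient identities obtained by expanding $(\partial_{\tau_{2k+1}}X)X^{-1}$ to order $\zeta^{-1}$ and $\zeta^{-2}$ at infinity (this is where the potential KdV form $u_{\tau_{2k+1}}=\mathcal R_{2k+1}[u_{\tau_1}]$ is actually produced, replacing the role you assign to the $\lambda$-equation). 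You gesture at the mutual compatibility of the flows only as a final consistency check, but it must carry the entire argument. The remaining ingredients of your plan --- the identification of $\partial_{\tau_1}\log Q_\sigma$ with an entry of $Y^{(1)}$, the appearance of the background corrections $(2/(7\tau_7))^{2/7}y-\tau_5/(7\tau_7)$ and the rational terms in \eqref{def:u} from the expansion of $\Phi$, and the reduction of part 1 to part 2 via $v=u_{\tau_1}$ --- are sound and match the paper.
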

\begin{remark}
The functions $v$ and $u$, equations \eqref{def:v} and \eqref{def:u}, are well defined because, under Assumption \ref{assumption1}, $Q_\sigma(t_0,t_1,s_0,s_1) > 0$, see Remark \ref{rem:nonzerodets} below.
\end{remark}	
	
In a more explicit form, denoting with a prime the derivative with respect to $\tau_1$, the equations satisfied by $v$, defined in \eqref{def:v}, are
	\begin{align}
		v_{\tau_3} =&\frac{1}{4}v'''+3 v v', \label{eq:kdv1}\\
		v_{\tau_5} =& \frac{1}{16}v^{(\mathrm v)}+\frac{5}{2}v'v''+\frac{5}{4}vv'''+\frac{{15}}{2}v^2v', \label{eq:kdv2}\\
		v_{\tau_7} = & \frac{1}{64} v^{(\mathrm{vii})} + \frac{35}{16}v''v''' + \frac{35}2 vv'v'' + \frac{7}{16}vv^{(\mathrm{v})} + \frac{35}2 v^3v' + \frac{35}8 v'^3 + \frac{21}{16}v'v^{(\mathrm{iv})} + \frac{35}8 v^2 v'''. \label{eq:kdv3}
	\end{align}
while those of $u$, defined in equation \eqref{def:u}, are
\begin{align}
		u_{\tau_3} =&\frac{1}{4}u'''+ \frac{3}2 (u')^2, \label{eq:pkdv1}\\
		u_{\tau_5} =& \frac{1}{16}u^{(\mathrm v)}+\frac{5}{4}u'u'''+\frac{5}{8}(u'')^2+\frac{5}2 (u')^3, \label{eq:pkdv2}\\
		u_{\tau_7} = & \frac{1}{64} u^{(\mathrm{vii})} + \frac{7}{16}u'u^{(\mathrm{v})} + \frac{21}{32} (u''')^2 + \frac{7}8 u''u^{(\mathrm{iv})} + \frac{35}8 u'''(u')^2 + \frac{35}8 u' (u'')^2 + \frac{35}8 (u')^4.  \label{eq:pkdv3}
	\end{align}
%\begin{remark}
	The first part of Theorem \ref{thm:Main1} above should be regarded as the analog, for the Claeys-Vanlessen kernel, of Theorem 1.3 in \cite{CCR} for the Airy kernel. The asymptotic results of \cite{CCR} have been further improved in \cite{ChCR}. As other recent examples of multiplicative statistics associated to integrable evolution equations, let us also mention \cite{CT} for the sine kernel, \cite{RuzzaBessel} for the Bessel kernel, and \cite{CR,CMR} for the discrete Bessel kernel. Multiplicative statistics of determinantal point processes are also associated to integro-differential versions of Painlev\'e equations, and can be described using operator-valued Lax systems; see for instance \cite{BotRew,Kraj,BCT, BL}.
%Integrable PDEs appears more and more frequently in the realm of integrable models in probability, see also \cite{QR, MQR} 
%\end{remark}

\begin{remark}\label{rem:sigmaHeaviside}
	Consider the particular case in which $\sigma = H$, with $H$ the Heaviside step function. In this case, the function $Q_H$ in \eqref{def:Qsigma} reduces to the cumulative distribution function associated to the largest particle in the point process:
	\begin{equation}\label{eq:QH}
		Q_H(\tau_1,\tau_3,\tau_5,\tau_7) = \mathbb P\left(\xi_1 \leq \frac{\tau_5}{\left(\frac{7}2 \tau_7\right)^{\frac{5}{7}}}\right).
	\end{equation}
In other words, $Q_H$ depends just on a particular combination of $\tau_5$ and $\tau_7$ and so does $v$, defined in \eqref{def:v}. By consequence, it is easy to prove that, for each fixed $\tau_5$ and $\tau_7$, we have that
$$\tau_5 v_{\tau_5} + \frac{7}5 \tau_7 v_{\tau_7} = 0.$$
Using the definition of the KdV hierarchy and integrating once, we find that
\begin{equation}\label{eq:statred}
	5\tau_5 \mathcal R_5[v] + 7\tau_7 \mathcal R_7[v] = \mathrm{const},
\end{equation}
which is known as a \emph{stationary KdV reduction}. These equations, in the periodic case, have been studied since the seventies, in relation with finite-gap solutions of the KdV equation, see \cite{Nov74,Dub75,ItsMatveev75}. In the non-periodic case, stationary reductions are related to multi-solitons solutions and their stability; see \cite{MaddSachs}.
On the other hand, the Claeys-Vanlessen kernel \eqref{eq:introkernel} is the second one (the first one being the Airy kernel) of a hierarchy of kernels, whose gap probability $Q_H$ has been studied in \cite{CIK}, and put in relation with the Painlev\'e II hierarchy. Even in the case of $\sigma = H$, the results found in this paper are different from the ones in \cite{CIK}. Indeed, in this paper $v$ is the double log-derivative of $Q_\sigma$ with respect to $\tau_1$, while Painlevé type equations are obtained in \cite{CIK} differentiating $Q_H$ by the combination of the variables $\tau_5$ and $\tau_7$ appearing in \eqref{eq:QH}; see nevertheless Remark \ref{rem:asymptotics} below for a comparaison between the asymptotics obtained in the present paper and in \cite{CIK} for $n = 2$. We will study the multiplicative statistics associated to the whole family of kernels in \cite{CIK} in a subsequent work.
\end{remark}

\begin{remark}\label{rem:KdV}
	It can be verified (see details in Section \eqref{sec:2}) that 
	\begin{equation}\label{eq:ytilde}
		\tilde y(\tau_1,\tau_3,\tau_5,\tau_7) := \left(\frac{2}{7 \tau_7}\right)^{2/7}\!\!\!\!\!\! y - \frac{1}7 \frac{\tau_5}{\tau_7}
	\end{equation}
	is itself a solution of \eqref{eq:kdv1}. Analogously, \begin{equation}\label{eq:htilde}
\tilde h(\tau_1,\tau_3,\tau_5,\tau_7) : = \left(\frac{2}7\right)^{1/7}\!\!\!\!\frac{h}{\tau_7^{1/7}} - \frac{3}{98}\frac{\tau_3 \tau_5^2}{\tau_7^2} + \frac{15}{2744}\frac{\tau_5^4}{\tau_7^3} + \frac{1}7\frac{\tau_1\tau_5}{\tau_7}
\end{equation}
is itself a solution of \eqref{eq:pkdv1}.
\end{remark}
%The evolution equations described above can be integrated once, leading to the following
%\begin{proposition}\label{prop:potentialKdV}
%Let $\sigma : \mathbb R \longrightarrow [0,1]$ be a function satisfying Assumptions \ref{assumption1} and $h = h(t_0(\tau_1,\tau_3,\tau_5,\tau_7),t_1(\tau_1,\tau_3,\tau_5,\tau_7))$, defined as in \eqref{eq:asympPhi}. Then
%	 Explicitly, they read
	
%\end{proposition}
%\begin{remark}\label{rem:pKdV}
%As in the case of the KdV equation, also in this case one can prove  (see details in Section \eqref{sec:2}) that 

%\end{remark}	

The second main result concerns the (singular) behavior of the functions $v$ and $u$ when $\tau_7 \to 0$. From now on, we will work under a stronger assumption for $\sigma$:
\begin{assumption}\label{assumption2}
	There exist $k_1, k_2, k_3, K >0$ positive real constants such that
	\begin{equation}
		|\sigma(z) - \iota\chi_{(0,\infty)}(z)| \leq k_1 {\rm e}^{-k_2|z|^3}, \quad \forall z \in \mathbb R
	\end{equation}
	\begin{equation}
		|\sigma'(z))| \leq k_3|z|^{-2}, \quad \forall z \in \mathbb R, \quad |z| > K.
	\end{equation}
\end{assumption}

%\textcolor{blue}{This theorem has to be rewritten}
\begin{theorem}\label{thm:asympuvgamma}
	Consider $Q_\sigma(\tau_1,\tau_3,\tau_5,\tau_7)$ as defined in \eqref{def:Qsigma}, with $\sigma$ satisfying Assumptions \ref{assumption1} and \ref{assumption2}, and $u$ and $v$ defined respectively in \eqref{def:u} and \eqref{def:v}. Then:
	\begin{enumerate}
		\item Let $\tau_5 \geq M (\tau_7)^{5/7}$, for a fixed real constant $M > 0$ and $\tau_1,\tau_3$ satisfying the Assumptions \ref{assumptionsfirstregime}. Then, there exist two positive constants $T$ and $k$ such that
		\begin{align}
		%\gamma =&\, O (\e^{-k \tau_5/\tau_7^{5/7}}),\\
		u =& - \left(\frac{2}{7\tau_7}\right)^{1/7}h - \left(\frac{\tau_{1} \tau_{5}}{7\tau_{7}} - \frac{3 \tau_{3} \tau_{5}^{2}}{98\tau_{7}^{2}} + \frac{15 \tau_{5}^{4}}{2744 \tau_{7}^{3}}\right) +O (\e^{-k \tau_5/\tau_7^{5/7}}),\\
		v =&\, \left(\frac{2}{7\tau_7}\right)^{2/7}y-\frac{\tau_5}{7\tau_7}+O (\e^{-k \tau_5/\tau_7^{5/7}}), \label{vasymp1}
\end{align}
uniformly for $\tau_7 \in (0,T)$, where 
$$h = h(t_0(\tau_1,\tau_3,\tau_5,\tau_7),t_1(\tau_1,\tau_3,\tau_5,\tau_7)) \quad \text{and} \quad y = y(t_0(\tau_1,\tau_3,\tau_5,\tau_7),t_1(\tau_1,\tau_3,\tau_5,\tau_7)),$$ defined in \eqref{eq:asympPhi}, are respectively distinguished solutions to the potential KdV equation \eqref{eq:pKdV} and to the $PI^{(2)}$ equation \eqref{eq:PI2}.
		%\item Let $-M \tau_7^{5/7} \leq \tau_5 \leq M \tau_7^{5/7}$ for a fixed real constant $M > 0$. In this regime,
		%\begin{equation}
		%	v = \left(\frac{2}{7 \tau_7}\right)^{1/7}\left( 2^{-2/7}q_{\tau_1} +  \left(\frac{2}{7 \tau_7}\right)^{1/7}q^2 \right) + \mathcal O(\tau_7^{2/7})
		%\end{equation}
		%where $q = q(t_0,t_1)$ is a solution of the $PII^{(3)}$ equation (see, for instance, eq. (1.35) in \cite{CIK}). In the case $\iota = 1$, this is the same solution as in Theorem 1.8 in \cite{CIK}. \textcolor{blue}{maybe we should write explicitly, to avoid mismatching notations!}, while, for $\iota < 1$, this is characterized by an explicit Riemann-Hilbert problem satisfying a vanishing lemma.
		\item Let $|\tau_k| \leq M \tau_7^{k/7}$, $k = 1,3,5$ for a large enough constant $M$. Then, as $\tau_7 \to 0$, %\textcolor{blue}{Carla, could you please add the result for $\gamma$?}
\begin{align}
%\gamma &= -2^{5/7}\left(\frac{q}{2}+ p \right)-h  -\left(\frac{7}{2}\right)^{1/7}\left(\frac{\tau_{1} \tau_{5}}{7\tau_{7}^{6/7}} - \frac{3 \tau_{3} \tau_{5}^{2}}{98\tau_{7}^{13/7}} + \frac{15 \tau_{5}^{4}}{2744\tau_{7}^{20/7}}\right)+O(\tau_7^{2/7})\\
u &= \left(\frac{2^6}{7\tau_7}\right)^{1/7}\left(\frac{q}{2}+ p \right)+ O(\tau_7^{1/7})\\
v &= \left(\frac{2^6}{7\tau_7}\right)^{1/7}\left(\frac{\partial}{\partial \tau_1}\frac{q}{2}- \left(\frac{2^6}{7\tau_7}\right)^{1/7}\frac{q^2}{2} + O\left(\tau_7^{1/7}\right)\right), \label{vasymp2}
\end{align}
where $q = q(x_0,x_1,x_2)$ is a solution to the third member of the Painlevé II hierarchy \eqref{PII3} and the first two equations of the modified KdV hierarchy \eqref{mKdV1} \eqref{mKdV2}, and  $\frac{\partial}{\partial x_0} p = \frac{q^2}2$. The variables $x_0,x_1,x_2$ are given, in function of $\tau_j, \; j = 1,3,5,$ by
\begin{equation}
	x_0 = - a \frac{\tau_1}{\tau_7^{1/7}}, \quad x_1 = \frac{3a^3}{4}  \frac{\tau_3}{\tau_7^{3/7}}, \quad x_2 = \frac{5a^5}{16} \frac{\tau_5}{\tau_7^{5/7}}.
\end{equation}
where $a=(2^6/7)^{1/7}$. Moreover,  $p$ and $q$ are both characterized by  the uniquely solvable Riemann-Hilbert Problem \ref{RHPP2} with $s = \i(\iota-1)$, see equation \eqref{eq:pandq}.

%Moreover, for $\iota=1$, $q$ is the solution to the third member of the Painlevé II whose asymptotic behavior  has been studied in \cite{CIK}, Theorem 1.8.
		\item Suppose that $\iota = 1$. Let $M_1,M_2$ and $M_3$ be positive real constants such that $-M_1 \tau_7^{4/7} \leq \tau_5 \leq -M_2 \tau_7^{5/7}$, $| \tau_3 | \leq M_1 \tau_7^{2/7}$, $-M_3 \leq \tau_1 \leq x$ where 
$$x=-\tau_7\left(\frac{2}{7\tau_7}\right)^{6/7}+\tau_5\left(\frac{2}{7\tau_7}\right)^{4/7}-|\tau_3|\left(\frac{2}{7\tau_7}\right)^{2/7},$$ 
and $(-M_3,x) \neq \emptyset$. Then, as $\tau_7 \to 0$,
\begin{align}
u &= \cmnew{-} p_{\sigma}(-\tau_1) + O(x),\\
v &= -p_{\sigma}^{2}(-\tau_1) \cmnew{+} 2 q_{\sigma}(-\tau_1) + O(x), \label{vsigma}
\end{align}
where $p_\sigma$ and $q_\sigma$, which depend solely on $\tau_1$, are characterized by the uniquely solvable Riemann-Hilbert problem \ref{RHPHsigma}.
\end{enumerate}
\end{theorem}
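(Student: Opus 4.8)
\emph{Overall strategy.} The plan is to carry out a Deift--Zhou nonlinear steepest descent analysis of the Riemann--Hilbert problem attached, via the Its--Izergin--Korepin--Slavnov formalism, to the integrable kernel whose Fredholm determinant is $Q_\sigma$, treating each of the three parameter regimes separately and reading off $u$ and $v$ from the large-$\lambda$ expansion $Y(\lambda)=\Id+Y_1\lambda^{-1}+Y_2\lambda^{-2}+\cdots$ of the transformed solution. The first ingredient is a differential identity (of Jimbo--Miwa--Ueno type) expressing $\partial_{\tau_1}\log Q_\sigma$ and $\partial_{\tau_1}^2\log Q_\sigma$ through the entries of $Y_1$ and $Y_2$. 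The subtractions in \eqref{def:u} and \eqref{def:v} are arranged precisely so that the ``bare'' contributions of $h$ and $y$ cancel, so that the asymptotics of $u,v$ measure exactly the effect of the thinning function $\sigma$.

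\emph{Reduction.} Because $\phi_1,\phi_2$ are entries of the solution $\Psi$ of the $PI^{(2)}$ Riemann--Hilbert problem of Section \ref{sec:2}, I would first conjugate the IIKS problem by $\Psi$, producing an equivalent problem whose jump is supported on $\mathbb R$ and built solely from the symbol $1-\sigma$, the transcendents $y,h$ now entering only through the known large-argument behavior of $\Psi$. The limit $\tau_7\to 0$ is a semiclassical limit ($s_0\to 0$), and each regime corresponds to a different balance between the phase induced by $\Psi$ and the location of the transition layer of $1-\sigma$, hence to a different limiting local problem.

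\emph{The three regimes.} In Part 1 the chosen rescaling places the transition layer of $1-\sigma$ exponentially far from the relevant saddle, so the conjugated problem is an exponentially small perturbation of the pure $PI^{(2)}$/potential-KdV problem; a single small-norm estimate then reproduces $h$ and $y$ for $u$ and $v$ up to the stated error $O(\e^{-k\tau_5/\tau_7^{5/7}})$. In Part 2 the balanced scaling $|\tau_k|\leq M\tau_7^{k/7}$ trivialises the $\Psi$-phase in the limit while the thinning survives; after conjugating by the appropriate phase and passing to the variables $x_0,x_1,x_2$ with $a=(2^6/7)^{1/7}$, the problem converges to Riemann--Hilbert Problem \ref{RHPP2} with $s=\i(\iota-1)$, and the coefficients in \eqref{vasymp2} and in the companion expansion of $u$ are extracted from its $O(\lambda^{-1})$ term, yielding $q$, $p$ with $\partial_{x_0}p=q^2/2$. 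In Part 3, with $\iota=1$ and the anisotropic window, $\tau_1$ becomes the active variable and the limiting local problem is Riemann--Hilbert Problem \ref{RHPHsigma}, whose data still retain the full function $\sigma$; its large-argument expansion gives $u=-p_\sigma(-\tau_1)+O(x)$ and $v=-p_\sigma^2(-\tau_1)+2q_\sigma(-\tau_1)+O(x)$.

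\emph{Main obstacle.} The hardest part will be the construction and matching of the local parametrices and, above all, the proof that every error bound is \emph{uniform} in the remaining free parameters across each regime. This is most delicate in Part 3: the scaling is anisotropic, the limiting problem depends on $\sigma$ rather than only on $\iota$, and one must control the narrow window $(-M_3,x)$ and verify the consistency condition $(-M_3,x)\neq\emptyset$, which amounts to checking that the saddle structure persists in this transition zone interpolating between the Painlev\'e-I behavior of Part 1 and the Painlev\'e-II behavior of Part 2.
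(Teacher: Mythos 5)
Your proposal follows essentially the same route as the paper: the conjugation by the $PI^{(2)}$ solution $\Phi$ (the paper's $X = A\,s_0^{\sigma_3/14}\,Y\,\Phi$ of Riemann--Hilbert problem \ref{RHP3}) reduces the jump to the upper-triangular matrix with off-diagonal entry $1-\sigma(\zeta)$, the functions $u,v$ are read off from $X^{(1)},X^{(2)}$ via the identity \eqref{eq:logderivativeQ}, and the three regimes are treated exactly as you describe, by small-norm estimates against, respectively, a trivial parametrix, the Painlev\'e~II-hierarchy parametrix of Problem \ref{RHPP2}, and the $H_\sigma$ parametrix of Problem \ref{RHPHsigma}. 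The one ingredient your outline leaves implicit is the vanishing lemma the paper proves in Section 5 to guarantee unique solvability of the model problem \ref{RHPP34} (equivalently \ref{RHPP2}) for the whole range $s=\i(\iota-1)$, $\iota\in[0,1]$, which must be established before the small-norm argument of Part~2 can be run.
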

\begin{remark}
One should note the similarities between Theorem \ref{thm:asympuvgamma} and the analog theorem for the Airy point process, Theorem 1.8 in \cite{CCR}. In particular, the quantity $-p_{\sigma}^{2}(-\tau_1) + 2 q_{\sigma}(-\tau_1) $ in \eqref{vsigma} is the same as $v_\sigma$ in  Theorem 1.8 in \cite{CCR}, equation (1.25)\footnote{In fact, $v_\sigma$ is defined in \cite{CCR} by means of two functions $q_{\sigma}^0, p_{\sigma}^0$ as $v_{\sigma} = -(p_{\sigma}^0)^{2}-2q_{\sigma}^0$, and through equation \eqref{eq:connectCCR} one can easily verify that $q_{\sigma}^0(.)$ in \cite{CCR} is equal to $-q_{\sigma}(.)$ in this paper.}, see details in Section \ref{sec:6}. Moreover, in both the first and second asymptotic regimes, the asymptotic approximations of $v$, equations \eqref{vasymp1} and \eqref{vasymp2}, are themselves solutions of the KdV equation \eqref{eq:kdv1}, as it was already happening in \cite{CCR}. For the first asymptotic regime, this has been already observed in Remark \ref{rem:KdV}. For the second asymptotic regime, one recognizes that
$$\tilde{q}(\tau_1, \tau_3, \tau_5, \tau_7) := \left(\frac{2^6}{7\tau_7}\right)^{1/7}\left(\frac{\partial}{\partial \tau_1}\frac{q}{2}- \left(\frac{2^6}{7\tau_7}\right)^{1/7}\frac{q^2}{2} \right)$$
is, essentially, a Miura transformation of a solution of modified KdV, plus a re-scaling of coordinates. More specifically, from equation \eqref{mKdV1} we know that $q$ solves $\frac{\partial}{\partial \tau_3}q=-6 \left(\frac{1}{14 \tau_7} \right)^{2/7}q^2\frac{\partial}{\partial \tau_1}q+\frac{1}{4}\frac{\partial^3}{\partial \tau_1^3}q$, and this allows us to conclude that $\tilde{q}$ satisfies equation \eqref{eq:kdv1}.
%\begin{align*}
%\frac{\partial}{\partial \tau_3} \tilde{q} =& \frac{1}{4}\frac{\partial^3}{\partial \tau_1^3} \tilde{q}+ 3 \tilde{q}\frac{\partial}{\partial \tau_1}  \tilde{q}.
%\end{align*}
\end{remark}

\begin{remark}\label{rem:asymptotics}
Let us consider again the special case in which $\sigma = H$ the Heaviside function, as in Remark \ref{rem:sigmaHeaviside}. We compare our asymptotic results with the ones obtained in \cite{CIK} (for $n = 2$) for the "higher Tracy-Widom distribution" describing the position of the largest particle in the point process $\mathcal X$. We start observing that, as $s_1/s_0^{5/7} \to \infty$, a close inspection of the proof of Lemma \ref{lemma:SNcase1} shows that
$$\frac{\partial}{\partial t_0} \log Q_{\sigma} = O\left(\e^{-c(s_1/s_0^{5/7})^{7/2}}\right),
$$
in agreement with \cite{CIK}.

On the other side, Theorem \ref{thm:Main1} implies that (in the $PI^{(2)}$ variables),
\begin{align*}
\frac{\partial}{\partial s_1}\frac{\partial}{\partial t_0} \log Q_{\sigma} =&- \frac{5 s_{1}^{3} }{8 s_{0}^{20/7}} - \frac{s_{1} t_{1}}{s_{0}^{10/7}} + \frac{2t_{0}}{s_{0}^{5/7}} - 2 s_{0}^{1/7} \frac{\partial}{\partial s_1} u. 
\end{align*}
Moreover, it can be checked that, in the third asymptotic regime of Theorem \ref{thm:asympuvgamma} with $x \leq -\delta$ for some fixed $\delta>0$,
$$ \frac{\partial}{\partial s_1} u = \mathrm{const}+O(s_0^{1/7})$$ 
where $\mathrm{const} \approx 1.1341$, as it can be verified from the asymptotic expansion of the Bessel Riemann-Hilbert problem stated in equation \eqref{eq:defPhiBe}. Consequently, as $s_0 \to 0$,
\begin{align}
\frac{\partial}{\partial s_1}\frac{\partial}{\partial t_0} \log Q_{\sigma} =&- \frac{5 s_{1}^{3} }{8 s_{0}^{20/7}} - \frac{s_{1} t_{1}}{s_{0}^{10/7}} + \frac{2t_{0}}{s_{0}^{5/7}} + O(s_{0}^{1/7}).\label{eq:dels1selt0Q}
\end{align}
This result is consistent with \cite{CIK}. In fact,  Theorem 1.5 from \cite{CIK} claims that, as $s_1/s_0^{5/7} \to -\infty$,
\begin{align*}
\frac{\partial}{\partial s_1} \log Q_{\sigma} =&\frac{25 s_{1}^{6}}{256 s_{0}^{5}} + \frac{5 s_{1}^{4} t_{1}}{16 s_{0}^{25/7}} - \frac{5 s_{1}^{3} t_{0}}{8 s_{0}^{20/7}} + \frac{s_{1}^{2} t_{1}^{2}}{4 s_{0}^{15/7}} - \frac{s_{1} t_{0} t_{1}}{s_{0}^{10/7}} + \frac{t_{0}^{2}}{s_{0}^{5/7}} - \frac{3}{8 s_{1}}+ O\left(s_0^{10/7}/s_1^3\right).
\end{align*}
Formally differentiating with respect to $t_0$, one obtains the same leading behaviour as in Equation \eqref{eq:dels1selt0Q}.

\end{remark}

\section{The Riemann-Hilbert characterization of $Q_\sigma$}\label{sec:2}
We start recalling the Riemann-Hilbert problem related to the kernel \eqref{eq:introkernel}. We slightly reformulate the one described in \cite{CIK}, page 3 and 4 and $k = 1$ (see also the previous work \cite{CV07}, Section 3.6.1, but with different normalization), ``flattening'' the set of contours on the real axis.
\begin{problem}\label{RHP1}\hfill
\begin{enumerate}
\item $ \Phi(t_0,t_1;z) \equiv \Phi(z)$ is analytic on $\mathbb{C}\setminus \mathbb R$, and has continuous boundary values $\Phi_\pm$ satisfying the jump relation
\begin{equation}
\Phi_+(z)=
\Phi_-(z)
\begin{pmatrix}
1 & 1 \\ 0 & 1
\end{pmatrix},  z\in \mathbb R.
\end{equation}
\item As $z\to \infty$,
\begin{equation}\label{eq:asympPhi}
\begin{split}\Phi(z)=z^{-\sigma_3/4}N\left(I+\dfrac{1}{z^{1/2}}\begin{pmatrix}
-h & 0 \\ 0 & h
\end{pmatrix}+\dfrac{1}{2z}\begin{pmatrix}
h^2 & \i y \\ -\i y & h^2
\end{pmatrix}+ \sum_{j\geq 2}\dfrac{\Phi^{(j)}}{z^{j/2}} \right)
\e^{-\theta(z)\sigma_3} \\ \times \left\lbrace \begin{array}{cc}
I \hspace{0.3cm}&  |\arg z|<\pi-\epsilon\\
\begin{pmatrix}
1 & 0 \\ \pm 1 & 1
\end{pmatrix}  \hspace{0.3cm}& \pi-\epsilon<\pm \arg z<\pi  
\end{array}\right.,\end{split}
\end{equation}
where $N := \frac{1}{\sqrt{2}}\begin{pmatrix} 1 & 1 \\ -1 & 1	\end{pmatrix}\e^{-\pi \i \sigma_3/4}$ and $\theta(z) := \frac{2}{7} z^{7/2} + \frac{2}{3} t_1 z^{3/2} - 2 t_0 z^{1/2}$. In the formula above, $0 < \epsilon < \pi/2$, and the principal branches of $z^{-\sigma_3/4}$ and $z^{1/2}$ are taken, analytic in $\mathbb C/(-\infty, 0]$ and positive for $z > 0.$ 
\end{enumerate}
\end{problem}

\begin{remark}
The scalars $h, y$ and the matrices $\Phi^{(j)}$, even if not explicitly written, are functions of the parameters $t_0$ and $t_1$.
\end{remark}
%\textcolor{blue}{Existence and unicity of the solution? Maybe in ClaeysVanlessen?}

\begin{remark}
%\textcolor{blue}{[we should quote ClaeysVanlessen] are we using it somewhere? Also, let's write explicitely the symmetry argument (if we decide to keep this remark)}
It was proven in \cite{CV06} that the Riemann-Hilbert problem above is uniquely solvable, when $t_0$ and $t_1$ are real. Using symmetries of the Riemann-Hilbert problem, it is also proven, in the same paper, that the matrix-valued functions $\Phi^{(j)}$ in the asymptotic expansion of the Riemann-Hilbert problem \ref{RHP1} present the following structure,
\begin{align}\label{eq:symmetriesPhi}
\Phi^{(2k+1)} = \begin{pmatrix}
q_k& \i r_k\\
\i r_k & -q_k
\end{pmatrix}, \qquad \Phi^{(2k+2)} = \begin{pmatrix}
v_{k}& \i w_{k}\\
-\i w_{k} & v_{k}
\end{pmatrix},
\end{align}
where the functions $q_k, r_k,v_k$ and $w_k$ are real-valued.
\end{remark}
Differentiating the solution $\Phi$ of the Riemann-Hilbert problem above with respect to $t_0,t_1$ and $z$, one can write a Lax system which, in turn, gives differential equations for $y = y(t_0,t_1)$. In particular, we have the following

\begin{proposition}\label{prop:eqPI2} 
 The functions $h = h(t_0,t_1)$ and $y = y(t_0,t_1)$ defined in equation \eqref{eq:asympPhi} satisfy the relation 
\begin{equation}\label{eq:handy}
  \frac{\partial}{\partial t_0} h = 2 y.
 \end{equation}
 The function $h$ satisfy the potential KdV equation
\begin{equation}\label{eq:pKdV}
	\partial_{t_1}h + \frac{1}4 \left( \partial_{t_0} h\right)^2 + \frac{1}{48} \partial_{t_0}^3 h = 0,
\end{equation}
 while $y$ solves the $PI^{(2)}$ and the KdV equation
\begin{equation}\label{eq:PI2}
 	\frac{1}{64} \partial_{t_0}^4y +\frac{5}{8}\left((\partial_{t_{0}}y)^2 + 2y\partial_{t_0}^2y\right)+10y^3+4y t_1 - 4 t_{0}=0,
\end{equation}
\begin{equation}\label{eq:KdV}
	\partial_{t_1} y + y\partial_{t_0} y + \frac{1}{48}\partial_{t_0}^3 y = 0.
\end{equation} 
\end{proposition}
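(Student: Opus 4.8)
The plan is to exploit the fact that the jump matrix in Riemann--Hilbert Problem \ref{RHP1} is independent of $z,t_0,t_1$, which turns the analytic data of $\Phi$ into a \emph{polynomial} Lax system. I would set $\mathcal A := (\partial_z\Phi)\Phi^{-1}$, $\mathcal U := (\partial_{t_0}\Phi)\Phi^{-1}$, $\mathcal V := (\partial_{t_1}\Phi)\Phi^{-1}$. Since $\Phi_+ = \Phi_-\begin{pmatrix}1&1\\0&1\end{pmatrix}$ with a constant jump, each of $\mathcal A,\mathcal U,\mathcal V$ has no jump across $\mathbb R$ and so extends to a single-valued analytic function on $\mathbb C\setminus\{0\}$. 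Writing $\Phi = z^{-\sigma_3/4}N\,Y(z)\,e^{-\theta(z)\sigma_3}$ with $Y = I + Y_1 z^{-1/2} + Y_2 z^{-1} + \cdots$ from the asymptotic expansion \eqref{eq:asympPhi}, the $z$-, $t_0$- and $t_1$-derivatives of $e^{-\theta\sigma_3}$ produce the leading growth ($\partial_z\theta\sim z^{5/2}$, $\partial_{t_1}\theta\sim z^{3/2}$, $\partial_{t_0}\theta = -2z^{1/2}$); the conjugation by $z^{-\sigma_3/4}N$ converts these half-integer powers into integer ones, and I expect the symmetry structure \eqref{eq:symmetriesPhi} of the $\Phi^{(j)}$ to force all negative powers of $z$ to cancel. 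Thus $\mathcal A,\mathcal U,\mathcal V$ should be polynomials in $z$ of degrees $3,1,2$, with coefficients that are explicit expressions in $h$, $y$ and the first few $\Phi^{(j)}$.

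The relation \eqref{eq:handy} should already drop out at this stage. Computing $\mathcal U$ explicitly, one finds $\mathcal U = \begin{pmatrix}0&0\\-2&0\end{pmatrix}z + \begin{pmatrix}0&-2\\ \ast&0\end{pmatrix}$, and the coefficient of the would-be $z^{-1}$ term is proportional to $\partial_{t_0}h - 2y$. Requiring $\mathcal U$ to be a genuine polynomial (no pole at the origin) then forces $\partial_{t_0}h = 2y$.

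Next I would extract the evolution equations from the zero-curvature (compatibility) conditions. Cross-differentiating the $t_0$- and $t_1$-flows gives $\partial_{t_1}\mathcal U - \partial_{t_0}\mathcal V + [\mathcal U,\mathcal V] = 0$; matching coefficients of $z^k$ for $k = 3,\ldots,0$, the top orders vanish automatically (the leading coefficients of $\mathcal U$ and $\mathcal V$ are nilpotent of the same lower-triangular form and commute), while the lower orders express the auxiliary coefficients of $\mathcal V$ through $y$ and close up to produce the KdV equation \eqref{eq:KdV} for $y$, with $t_0$ as space and $t_1$ as time. The potential KdV equation \eqref{eq:pKdV} then follows: substituting $y = \tfrac12\partial_{t_0}h$ from \eqref{eq:handy}, equation \eqref{eq:KdV} is exactly the $t_0$-derivative of \eqref{eq:pKdV}, so \eqref{eq:pKdV} holds up to a function of $t_1$ alone, pinned to zero by the explicit constant term of $\mathcal V$ (equivalently, by the normalisation built into \eqref{eq:asympPhi}). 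Finally, cross-differentiating the $z$- and $t_0$-flows yields the isomonodromy condition $\partial_{t_0}\mathcal A - \partial_z\mathcal U + [\mathcal A,\mathcal U] = 0$; matching powers of $z$ and eliminating the intermediate coefficients of $\mathcal A$ should produce the fourth-order ODE for $y$, which is precisely $PI^{(2)}$, equation \eqref{eq:PI2}.

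The main obstacle is twofold and essentially computational. First, one must rigorously justify that $\mathcal A,\mathcal U,\mathcal V$ are polynomial, i.e. that the apparent singularity at $z=0$ and the branch cut along $(-\infty,0]$ are genuinely absent; this is where the precise alternating diagonal/off-diagonal pattern of \eqref{eq:symmetriesPhi}, together with the sector-dependent triangular factors of \eqref{eq:asympPhi} near the negative axis, must be used carefully. Second, isolating $PI^{(2)}$ from the isomonodromy condition requires carrying the expansion of $\mathcal A$ to sufficiently high order and eliminating the auxiliary entries $q_k,r_k,v_k,w_k$; this is the heaviest algebra and the one place where keeping track of the exact numerical coefficients appearing in \eqref{eq:PI2} is delicate.
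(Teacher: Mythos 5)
Your proposal follows essentially the same route as the paper's Appendix B: the constant jump makes each logarithmic derivative of $\Phi$ a polynomial in $z$ (the matrices $W$, $V$, $U$ for the $t_0$-, $t_1$- and $z$-flows), the vanishing of the subleading coefficients of the expansion at infinity yields $\partial_{t_0}h = 2y$ and (in the paper, directly rather than by integrating KdV) the potential KdV relation, and the zero-curvature conditions $\partial_z W - \partial_{t_0}U + [W,U] = 0$ and $\partial_{t_1}W - \partial_{t_0}V + [W,V] = 0$ produce $PI^{(2)}$ and KdV exactly as you describe. One small correction: the constraint forcing $\partial_{t_0}h = 2y$ (and, more generally, the polynomiality of the Lax matrices) comes from Liouville's theorem at infinity --- an entire function of polynomial growth equals the polynomial part of its asymptotic expansion, so all negative-power coefficients must vanish --- not from the absence of a pole at the origin, where $\Phi$ and $\Phi^{-1}$ are already regular.
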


\begin{remark}
We draw the attention of the reader to the fact that both the KdV equation and the potential KdV equation appear twice in the paper, with two different normalizations: see respectively equations \eqref{eq:pkdv1}, \eqref{eq:pKdV} and \eqref{eq:kdv1} and \eqref{eq:KdV}. For completeness, we report here the necessary rescaling bringing one equation into the other:
\begin{align*}
v(\tau_1, \tau_3) = \kappa_0 y(\kappa_1 \tau_1, \kappa_3\tau_3),
\end{align*}
where $\kappa_0=2^{2/5}3^{2/5}, \; \kappa_1 = 2^{4/5}3^{1/5}$, and $\kappa_3 = -\kappa_0$.
$$
u(\tau_1, \tau_3) = \nu_0h(\nu_1 \tau_1, \nu_3\tau_3),
$$
where $\nu_0 = -(3/2)^{1/4}$, $\nu_1=-2^{3/4}3^{1/4}$ and $\nu_3 =-\nu_0$. 
\end{remark}

The Proposition above is well known (see, for instance, \cite{CV07}, \cite{CIK} and references therein), except perhaps for the equation \eqref{eq:pKdV}. For the readers' convenience, we re-derived them in Appendix \ref{AppendixPI}. Then, it is straightforward to check, using the change of coordinates \eqref{eq:changeofvar}, combined with equations \eqref{eq:pKdV} and \eqref{eq:KdV},  that $\tilde y$ and $\tilde h$, defined in Remark \ref{rem:KdV}, satisfy the Korteweg-de-Vries and potential Korteweg-de-Vries equations \eqref{eq:kdv1} and \eqref{eq:pkdv1}.

\begin{definition}
	Let $\Phi$ be the unique solution of the Riemann-Hilbert problem \ref{RHP1}. We define
	\begin{equation}\label{eq:defphi}
		\phi_1(\lambda;t_0,t_1) := \Big(\Phi_+(\lambda;,t_0,t_1)\Big)_{11}, \quad \quad \phi_2(\lambda;t_0,t_1) := \Big(\Phi_+(\lambda;t_0,t_1)\Big)_{21},
	\end{equation}
	and the Claeys-Vanlessen kernel 
	\begin{equation}\label{secchapter:introkernel}
		K(\lambda,\mu;t_0,t_1) := \frac{\phi_1(\lambda;t_0,t_1)\phi_2(\mu;t_0,t_1) - \phi_2(\lambda;t_0,t_1)\phi_1(\mu;t_0,t_1)}{-2 \pi \i(\lambda -\mu)}.
	\end{equation}
\end{definition}
%\textcolor{blue}{To be checked: one can choose the minus boundary values and the definition of $K$ does not change}
\begin{remark}\label{rem:compactkernel}
Using the jumps of the Riemann-Hilbert problem \ref{RHP1}, one can verify that the kernel $K(\lambda,\mu;t_0,t_1)$ is actually independent from which boundary value of $\Phi$ we are taking in equation \eqref{eq:defphi}. This is easily seen, for instance, rewriting the kernel $K(\lambda,\mu;t_0,t_1) \equiv K(\lambda,\mu)$ as
$$
	K(\lambda,\mu) = \frac{1}{2 \pi \i} \frac{\begin{pmatrix} 1 & 0 \end{pmatrix} \Phi_+^T(\lambda)\Phi_+^{-T}(\mu) \begin{pmatrix} 0 \\ 1\end{pmatrix}}{\lambda - \mu}.
$$
\end{remark}

Standard arguments in the theory of determinantal point processes (see for instance \cite{JohanssonDPP}) allow us to express the multiplicative statistics $Q_\sigma$, defined in \eqref{def:Qsigma}, as a Fredholm determinant. To the point, define $f(\lambda) = f(\lambda;t_0,t_1,s_0,s_1)$ and $h(\lambda) = h(\lambda;t_0,t_1,s_0,s_1)$ as
\begin{equation}
f(\lambda) := \frac{1}{\sqrt{\pi i}}\sqrt{\sigma\left(\frac{\lambda}{s_0^{2/7}}-\frac{s_1}{s_0}\right)}\begin{pmatrix}
\phi_1(\lambda) \\ \phi_2(\lambda)
\end{pmatrix}, \quad h(\lambda) := \dfrac{1}{2\sqrt{\pi i}}\sqrt{\sigma\left(\frac{\lambda}{s_0^{2/7}}-\frac{s_1}{s_0}\right)}\begin{pmatrix}
-\phi_2(\lambda) \\ \phi_1(\lambda).
\end{pmatrix},
\label{eq:rhp:1}
\end{equation}
and the associated integrable kernel as
\begin{equation}
K_{\sigma}(\lambda,\mu;t_0,t_1,s_0,s_1) := \frac{f^T(\lambda)h(\mu)}{\lambda-\mu}.
\end{equation}

Then,

\begin{equation}\label{eq:QasFredDet}
	Q_\sigma(t_0,t_1,s_0,s_1) = \det(I - \mathbb K_{\sigma}),
\end{equation}
where $\mathbb K_\sigma$ is the integral operator associated to the kernel $K_\sigma$, whose Fredholm determinant is equal to
\begin{equation}\label{eqintro:FredSeries}
	\det(I - \mathbb K_{\sigma}) = 1 + \sum_{k \geq 1} (-1)^k \int_{\mathbb R^k} \det(K_\sigma(\lambda_i,\lambda_j)) \d \lambda_1 \cdots \d \lambda_k.
\end{equation}

\begin{remark}\label{rem:nonzerodets}
The decaying condition we imposed in Assumption \ref{assumption1} for $\sigma$ (see point 2. of the Assumption) and the fact that $K(\lambda,\lambda) = O(|\lambda|^{5/2})$ as $\lambda \to -\infty$ (see Appendix \ref{AppendixPI}) implies that $K_\sigma(\lambda,\lambda) \in L^1(\mathbb R)$ so that $\mathbb K_\sigma$ is a (non-negative) trace-class operator, and hence the Fredholm series \eqref{eqintro:FredSeries} is convergent.\\
We can also prove that $Q_\sigma(t_0,t_1,s_0,s_1) > 0$ for any $t_0,t_1,s_1 \in \mathbb R$ and $s_0 > 0$. To see this, we prove that $1$ is not an eigenvalue of $\mathbb K_\sigma$.  Indeed, suppose that $f \in L^2(\mathbb R)$ is such that $\mathbb K_\sigma f = f$. We now show that $f = 0$. We start rewriting $\mathbb K_\sigma := \mathbb M \mathbb K \mathbb M$, where $\mathbb K$ is the projection operator (for example, because of equation (1.35) in \cite{CV07})  associated to the Claeys-Vanlessen kernel \eqref{eq:introkernel}, and $\mathbb M$ is the multiplication operator by $\sigma^{1/2}\left(\lambda s_0^{-2/7}- s_1 s_0^{-1}\right)$ . 
Now set $g := \mathbb K \mathbb M f$. Using the eigenvalue equation $\mathbb K_\sigma f = f$, we see that $g$ is an entire function such that $f = \mathbb M g$. Hence,
\begin{align*}
\norm{g}_{L^2(\mathbb{R})} = \norm{\mathbb K \mathbb M f}_{L^2(\mathbb{R})} \leq \norm{\mathbb M f}_{L^2(\mathbb{R})} \leq \norm{f}_{L^2(\mathbb{R})} = \norm{\mathbb M g}_{L^2(\mathbb{R})} \leq \norm{g}_{L^2(\mathbb{R})}.
\end{align*}

%\textcolor{blue}{The equality $(1)$ holds if and only if $(1-\sigma)g=0$ almost everywhere. In the case $\iota \in (0, 1)$ it implies $g = 0$. For $\iota =1$ it implies the existence of $z_0$ large enough such that $g(z) = 0$ for all $z \leq z_0$. However, since the range of $K^{PI}$ only contains entire functions, it follows that $g = 0$. Therefore, $1$ is not an eigenvalue of $\sqrt{\sigma}K^{PI}\sqrt{\sigma}$ and, consequently, $0<\det[1-\sqrt{\sigma}K^{PI}\sqrt{\sigma}] \leq 1$.} 
Consequently, we have that actually $$\norm{\mathbb M g}_{L^2(\mathbb{R})} = \norm{g}_{L^2(\mathbb{R})},$$
and this implies that $g \equiv 0$. This is easily seen when $\iota = \lim_{r \to +\infty} < 1$. In the case $\iota = 1$, because of the decaying condition of $\sigma$ at $-\infty$, we still have that there exists $\lambda_0$ large enough such that $g(\lambda) = 0$ for all $\lambda \leq \lambda_0$, and this implies $g = 0$ because of analyticity. Since $f = \mathbb M g$, we proved that $f \equiv 0$.
\end{remark}

Note, in particular, that $K_{\sigma}$ is of integrable type \cite{IIKS}, and as such it is related to a Riemann-Hilbert problem in the way we now describe (for a pedagogical presentation of the results used below, see \cite{BDS}). First of all, recall that, thanks to Remark \ref{rem:nonzerodets}, $Q_\sigma(t_0,t_1,s_0,s_1) \in (0,1]$. Hence, one can define the resolvent operator
\begin{equation}\label{eq:defres}
	\mathbb L_\sigma := (I - \mathbb K_\sigma)^{-1}\mathbb K_\sigma,
\end{equation}
which is also of integrable type. Now, let $Y(z)$ be the $2 \times 2$ matrix-valued function 
\begin{equation}\label{eq:defY}
	Y(z) = I - \int_{\mathbb{R}} \dfrac{F(\lambda)h^T(\lambda)}{\lambda-z} \dd \lambda,
\end{equation}
where 
\begin{equation}\label{eq:F}
F(\lambda) = (1 - \mathbb{K}_{\sigma})^{-1} f(\lambda),
\end{equation}
($(1 - \mathbb{K}_{\sigma})^{-1} $ acts entry-wise on $f$). Then, $Y$ solves the Riemann-Hilbert problem \ref{RHP2} below and, additionally, the kernel of the resolvent $L_\sigma$
is given by
\begin{equation}
	L_\sigma(\lambda,\mu, t_0,t_1,s_0,s_1) = \frac{1}{\lambda - \mu}f^T(\lambda) Y_+^T(\lambda)Y_+^{-T}(\mu)h(\mu),
\end{equation}
where $Y_+$ is the boundary value of $Y$.

\begin{problem}\label{RHP2}\hfill
\begin{enumerate}
\item $Y = Y(z;t_0,t_1,s_0,s_1)$ is analytic on $\mathbb{C}\setminus \mathbb{R}$.
\item $Y$ has boundary values $Y_{\pm}$ on $\mathbb R$ such that $Y_\pm - I \in L^2(\mathbb R)$, and which are continuous except possibly at the points $(\frac{r_j}{s_0^{2/7}}-\frac{s_1}{s_0}), \; j = 1,\cdots,k$, for $r_j$ the same as in Assumption \ref{assumption1}, and related by the jump relation
$$
Y_+(z)=
Y_-(z)[I-2\pi \i f(z)h^T(z)], \quad \forall z \in \mathbb R.
$$
\item As $z\to \infty$,
$$
Y(z)=I+\dfrac{1}{z}Y^{(1)}+\dfrac{1}{z^2}Y^{(2)}+O(z^{-3}).
$$
\end{enumerate}
\end{problem}

A straightforward calculation allows us to rewrite the jump in the previous Riemann-Hilbert problem as follows
\begin{equation}\label{eq:JY}
Y_+(z)=
Y_-(z)\begin{pmatrix}
1+\sigma\left(z s_0^{-2/7}-s_1/s_0\right)\phi_1(z)\phi_2(z) & -\sigma\left(z s_0^{-2/7}-s_1/s_0\right)\phi_1(z)^2 \vspace{0.3cm}\\ \sigma\left(z s_0^{-2/7}-s_1/s_0\right)\phi_2(z)^2 & 1-\sigma\left(z s_0^{-2/7}-s_1/s_0\right)\phi_1(z)\phi_2(z)
\end{pmatrix}.
\end{equation}
Moreover, it is easy to see that $\det(Y(z)) \equiv 1$, and hence the matrix $Y^{(1)}$ in the asymptotic expansion of the Riemann-Hilbert problem \ref{RHP2} can be written as
\begin{equation}\label{eq:Y1}
	Y^{(1)}(t_0,t_1,s_0,s_1) = \begin{pmatrix}
							\alpha(t_0,t_1,s_0,s_1) & \gamma(t_0,t_1,s_0,s_1) \\
							\beta(t_0,t_1,s_0,s_1) & -\alpha(t_0,t_1,s_0,s_1) 
						\end{pmatrix}.
\end{equation}
A very explicit formula relates $Y^{(1)}$ to the first log-derivative of $Q_\sigma$ with respect to $\tau_1$.

\begin{proposition}
	For any $\tau_1,\tau_3,\tau_5 \in \mathbb R$ and $\tau_7 \in \mathbb R_{>0}$ 
	\begin{equation}\label{eq:logderivativeQ}
		\frac{\partial}{\partial \tau_1} \log Q_\sigma(\tau_1,\tau_3,\tau_5,\tau_7) = -\left( \frac{2}{7 \tau_7}\right)^{1/7}\gamma(\tau_1,\tau_3,\tau_5,\tau_7),
	\end{equation}
	where $\gamma(\tau_1,\tau_3,\tau_5,\tau_7)$ is defined in \eqref{eq:Y1}.
\end{proposition}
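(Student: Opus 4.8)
The plan is to reduce the statement, through the change of variables \eqref{eq:changeofvar}, to a derivative with respect to the isomonodromic time $t_0$, and then to compute that derivative of the Fredholm determinant $Q_\sigma=\det(I-\mathbb K_\sigma)$ by exploiting the integrable structure of $K_\sigma$. First I would dispose of the change of variables. Keeping $\tau_3,\tau_5,\tau_7$ fixed is the same as keeping $s_0,s_1,t_1$ fixed: indeed $\tau_7=\frac27 s_0$ and $\tau_5=s_1$ fix $s_0$ and $s_1$, and then the formula for $\tau_3$ fixes $t_1$. Hence only $t_0$ varies, with $\partial\tau_1/\partial t_0=-2s_0^{1/7}$; since $s_0=\frac72\tau_7$, the chain rule yields $\frac{\partial}{\partial\tau_1}=-\frac12\bigl(\frac{2}{7\tau_7}\bigr)^{1/7}\frac{\partial}{\partial t_0}$. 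It therefore suffices to prove the clean identity $\partial_{t_0}\log Q_\sigma=2\gamma$.

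For the core computation I would start from Jacobi's formula, legitimate thanks to the trace-class property recorded in Remark \ref{rem:nonzerodets}: $\partial_{t_0}\log Q_\sigma=-\tr\bigl((I-\mathbb K_\sigma)^{-1}\partial_{t_0}\mathbb K_\sigma\bigr)$. The whole $t_0$-dependence of $K_\sigma$ sits in $\phi_1,\phi_2$, since the cutoff $\sigma$ depends only on $s_0,s_1$; so I bring in the Lax matrix $U_0(z):=\partial_{t_0}\Phi\,\Phi^{-1}$. Because the jump in Problem \ref{RHP1} is independent of $t_0$, $U_0$ has no jump across $\mathbb R$ and is entire. Reading off \eqref{eq:asympPhi} — differentiating $\e^{-\theta\sigma_3}$ produces the factor $2z^{1/2}\sigma_3$ since $\partial_{t_0}\theta=-2z^{1/2}$, and conjugation by $z^{-\sigma_3/4}N$ turns this half-integer power into an honest polynomial — shows that $U_0$ is a degree-one matrix polynomial with leading coefficient $U_0^{(1)}=\bigl(\begin{smallmatrix}0&0\\-2&0\end{smallmatrix}\bigr)$. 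Writing $f=\frac{1}{\sqrt{\pi\i}}\sqrt\sigma\,\Phi_+ e_1$ and $h=\frac{1}{2\sqrt{\pi\i}}\sqrt\sigma\,\Phi_+^{-T}e_2$ for the standard basis vectors $e_1,e_2$ (as one checks directly from \eqref{eq:defphi} and \eqref{eq:rhp:1} using $\det\Phi\equiv1$), I obtain the contragredient relations $\partial_{t_0}f=U_0 f$ and $\partial_{t_0}h=-U_0^{T}h$.

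The payoff is that the kernel singularity collapses. Differentiating $K_\sigma(\lambda,\mu)=\frac{f^T(\lambda)h(\mu)}{\lambda-\mu}$ and inserting the contragredient relations gives $\partial_{t_0}K_\sigma(\lambda,\mu)=f^T(\lambda)\,\frac{U_0^{T}(\lambda)-U_0^{T}(\mu)}{\lambda-\mu}\,h(\mu)$; as $U_0$ is a polynomial of degree one, the divided difference is the constant matrix $(U_0^{(1)})^{T}$, whence $\partial_{t_0}K_\sigma(\lambda,\mu)=-2\,f_1(\lambda)h_2(\mu)$, a separable kernel. For such a kernel the resolvent trace is immediate, $\tr\bigl((I-\mathbb K_\sigma)^{-1}\partial_{t_0}\mathbb K_\sigma\bigr)=-2\int_{\mathbb R}\bigl[(I-\mathbb K_\sigma)^{-1}f_1\bigr](\lambda)\,h_2(\lambda)\,\d\lambda=-2\int_{\mathbb R}F_1(\lambda)h_2(\lambda)\,\d\lambda$, with $F=(I-\mathbb K_\sigma)^{-1}f$ as in \eqref{eq:F}. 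Finally, expanding \eqref{eq:defY} at infinity gives $Y^{(1)}=\int_{\mathbb R}F(\lambda)h^T(\lambda)\,\d\lambda$, whose $(1,2)$-entry is exactly $\gamma=\int_{\mathbb R}F_1 h_2\,\d\lambda$ by \eqref{eq:Y1}. Therefore $\partial_{t_0}\log Q_\sigma=2\gamma$, and the chain-rule factor from the first paragraph produces $\frac{\partial}{\partial\tau_1}\log Q_\sigma=-\bigl(\frac{2}{7\tau_7}\bigr)^{1/7}\gamma$, as claimed.

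I expect the main obstacle to be the rigorous identification of $U_0$: one must verify that the fractional powers and apparent branch cut in $\partial_{t_0}\Phi\,\Phi^{-1}$ genuinely cancel, so that $U_0$ is a bona fide polynomial of degree one, and pin down its leading coefficient unambiguously — this is precisely the Lax-pair content underlying Proposition \ref{prop:eqPI2}, re-derived in Appendix \ref{AppendixPI}. By contrast, the analytic bookkeeping (differentiation under the Fredholm determinant, and the interchange of trace with integration) is routine given the trace-class bounds of Remark \ref{rem:nonzerodets}, though it should be stated explicitly.
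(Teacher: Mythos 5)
Your proposal is correct and follows essentially the same route as the paper's proof: Jacobi's formula, the $t_0$-Lax equation for $\Phi$ collapsing $\partial_{t_0}K_\sigma$ to the rank-one kernel $-2f_1(\lambda)h_2(\mu)$, identification of the resulting trace with $2Y^{(1)}_{12}=2\gamma$ via the expansion of \eqref{eq:defY}, and the chain-rule factor $\frac{\partial}{\partial\tau_1}=-\frac12\left(\frac{2}{7\tau_7}\right)^{1/7}\frac{\partial}{\partial t_0}$. The only cosmetic difference is that you package the cancellation through the divided difference of the degree-one Lax polynomial $U_0$ (whose explicit form is exactly the matrix $W$ of Appendix \ref{AppendixPI}), whereas the paper computes the same thing entrywise.
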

\begin{proof}
	The Jacobi's formula implies that
\begin{equation}
\partial_{t_0} \log Q_\sigma = \partial_{t_0} \log \det(1 - \mathbb{K}_{\sigma}) = - \tr \left((1 - \mathbb{K}_{\sigma})^{-1}\partial_{t_0} \mathbb{K}_{\sigma}\right).
\end{equation}

Recall that the kernel $K_\sigma$ of $\mathbb K_\sigma$ is written as
$$K_{\sigma}(\lambda,\mu) = \frac{f^T(\lambda)h(\mu)}{\lambda-\mu},
$$
where $f$ and $h$ are associated to the solution $\Phi$ of the Riemann-Hilbert problem \ref{RHP1}, see equation \eqref{eq:rhp:1}. 
%$$f(u)= \frac{1}{\sqrt{\pi i}}\sqrt{\sigma\left(\frac{u}{s_0^{2/7}}-\frac{s_1}{s_0}\right)}\begin{pmatrix}
%\phi_1(u) \\ \phi_2(u)
%\end{pmatrix}, \qquad h(u)= \dfrac{1}{2\sqrt{\pi i}}\sqrt{\sigma\left(\frac{u}{s_0^{2/7}}-\frac{s_1}{s_0}\right)}\begin{pmatrix}
%-\phi_2(u) \\ \phi_1(u)
%\end{pmatrix}.$$

In particular, using the relation 
$$\partial_{t_0}\Phi =-2 \begin{pmatrix}
0&1\\ z-2y&0
\end{pmatrix} \Phi,
$$
(see Appendix \ref{AppendixPI})
we compute %the derivative as follows
%\begin{align*}
%\partial_{t_0} K^{PI}(u, v)=& - \frac{1}{\pi i} \phi_1(u) \phi_1(v),
%\end{align*}
%and, consequently, 
\begin{align*}
\partial_{t_0} K_{\sigma}(\lambda,  \mu) =& -\frac{1}{\pi i} \phi_1(\lambda)\phi_1(\mu)\sqrt{\sigma\left(\frac{\lambda}{s_0^{2/7}}-\frac{s_1}{s_0}\right)}\sqrt{\sigma\left(\frac{\mu}{s_0^{2/7}}-\frac{s_1}{s_0}\right)}\\
=&-[f(\lambda)]_{11} [f(\mu)]_{11}.
\end{align*}
(in the last equality, we used equation \eqref{eq:F}).
Therefore,
\begin{equation}
- \tr \left((1 - \mathbb{K}_{\sigma})^{-1}\partial_{t_0} \mathbb{K}_{\sigma}\right) = \langle (1 - \mathbb{K}_{\sigma})^{-1} [f]_{11}, [f]_{11}\rangle = \langle [F]_{11}, [f]_{11}\rangle
\end{equation}
where the brackets $<\cdot, \cdot>$ denote the standard inner product on $L^2(\mathbb R)$.
%The solution to the RHP for $Y(z)$ is known to have the form
%$$ Y(z) = I+\frac{1}{2\pi i} \int_{\mathbb{R}}\frac{Y_+(s)(I-J_Y^{-1}(s))\dd s}{s-z}.
%$$

%Combined with the expression for the jump $J_Y(s) = I-2\pi i f(s)h^T(s)$, it gives
We now use equation \eqref{eq:defY} that, combined with the 
%\begin{align*}
%Y(z) =%& I- \int_{\mathbb{R}}Y_+(s)f(s)h^T(s)\frac{\dd s}{s-z}\\
%=& I- \int_{\mathbb{R}}F(s)h^T(s)\frac{\dd s}{s-z}.
%\end{align*}
asymptotic expansion for $Y(z)$ as $z \to \infty$, allows us to conclude that
\begin{equation}
Y^{(1)}= \int_{\mathbb{R}}F(s)h^T(s) \dd s.
\label{eq:expy1}
\end{equation}

A direct consequence of equation \eqref{eq:expy1}, using \eqref{eq:rhp:1}, is that
$$Y^{(1)}_{12} = \frac{1}{2}\langle [F]_{11}, [f]_{11}\rangle,
$$ 
and, therefore,
$$\partial_{t_0} \log Q_\sigma(\tau_1,\tau_3,\tau_5,\tau_7)  = 2 Y^{(1)}_{12}(\tau_1,\tau_3,\tau_5,\tau_7) = 2 \gamma(\tau_1,\tau_3,\tau_5,\tau_7).
$$
The result now follows by considering $\frac{\partial}{\partial \tau_1} = -\frac{1}{2}\left(\frac{2}{7\tau_7}\right)^{1/7}\!\!\!\frac{\partial}{\partial t_0}$.
\end{proof}

We now combine the two Riemann-Hilbert problems \ref{RHP1} and \ref{RHP2} to simplify the jumps of the latter, which at the end will allow us to find the related integrable equations and study the asymptotic behavior of $Q_\sigma(t_0,t_1,s_0,s_1)$. 
To this end, define the new spectral variable 
\begin{equation}\label{eq:defzeta}
	\zeta := z s_0^{-2/7}-s_1/s_0.%, so that $z=s_0^{2/7}(\zeta+s_1/s_0)
\end{equation}
We will work with the new variables $\tau_j,\; j = 1,3,5,7$, related to $(t_0,t_1,s_0,s_1)$ by the equations \eqref{eq:changeofvar}.
We define the sectionally-analytic matrix-valued function
\begin{equation}\label{eq:defX}
	X(\zeta) \equiv X(\zeta;\tau_1,\tau_3,\tau_5,\tau_7) := A s_0^{\frac{\sigma_3}{14}}Y(z(\zeta))\Phi(z(\zeta)),
\end{equation} 
where
\begin{equation}\label{def:A}
	A := \begin{pmatrix}
	1 & 0 \\ \gamma s_0^{-1/7}& 1
\end{pmatrix}.
\end{equation}

We now show that $X(\zeta)$ solves the following Riemann-Hilbert problem:
\begin{problem}\label{RHP3}\hfill
\begin{enumerate}
\item $X$ is analytic on $\mathbb{C}\setminus \mathbb{R}$.
\item $X$ has boundary values $X_{\pm}$ on $\mathbb R$ such that $X_{\pm} \in L^2_{\mathrm{loc}}(\mathbb R)$, which are continuous except (possibly) at the points $r_1,\ldots,r_k$ and satisfy the jump relation
$$
X_+(\zeta)=
X_-(\zeta)\begin{pmatrix}
1 & 1-\sigma(\zeta) \\ 0 & 1
\end{pmatrix}.
$$
\item As $\zeta\to \infty$,
\begin{align}\label{eq:asympX}
X(\zeta) = & A\zeta^{-\frac{\sigma_3}{4}}N\left[I+\sum_{j\geq 1}\dfrac{X^{(j)}}{\zeta^{j/2}}\right]\e^{\varphi(\zeta)\sigma_3}\left\lbrace \begin{array}{cc}
I \hspace{0.3cm}&  |\arg \zeta|<\pi-\epsilon\\
\begin{pmatrix}
1 & 0 \\ \pm 1 & 1
\end{pmatrix}  \hspace{0.3cm}& \pi-\epsilon<\pm \arg \zeta<\pi  
\end{array}\right.,
\end{align}
where $\varphi(\zeta):= -\left(\tau_7 \zeta^{\frac{7}{2}}+\tau_5\zeta^{\frac{5}{2}}+\tau_3\zeta^{\frac{3}{2}}+\tau_1\zeta^{\frac{1}{2}}\right)$. In the formula above, $0 < \epsilon < \pi/2$, and the principal branches of $\zeta^{-\sigma_3/4}$ and $\zeta^{1/2}$ are taken, analytic in $\mathbb C\backslash (-\infty, 0]$ and positive for $\zeta > 0.$
\end{enumerate}
\end{problem}
The conditions (1) and (2) follow from the analogous conditions for $Y$ and $\Phi$; as for the explicit expression of the jump condition, it comes from the following chain of inequalities:
\begin{align*}
	X^{-1}_-(\zeta) X_+(\zeta) = \Phi_-^{-1}(z(\zeta))Y_-^{-1}(z(\zeta))Y_+(z(\zeta))\Phi_+(z(\zeta)) = \\ 
	= \Phi_-(z(\zeta))\Big( I - \sigma(\zeta)\Phi_+(z(\zeta)) \begin{pmatrix} 1 \\ 0\end{pmatrix} \begin{pmatrix} 1 & 0 \end{pmatrix} \Phi_+^{-1}(z(\zeta)) \Big)\Phi_+(z(\zeta)) = \begin{pmatrix} 1  & 1 - \sigma(\zeta) \\ 0 & 1 \end{pmatrix},
\end{align*}
where, in the last equality, we used that 
$
\Phi_+(z(\zeta)) \begin{pmatrix} 1 \\ 0\end{pmatrix} = \Phi_-(z(\zeta)) \begin{pmatrix} 1 \\ 0\end{pmatrix} .
$
The following proposition will be useful in the derivation of the Lax matrices associated to the solution of the Riemann-Hilbert problem \ref{RHP3}, described in the next section.

\begin{proposition}\label{prop:propertiesX}
	The matrices $X^{(j)}$ in the asymptotic expansion \eqref{eq:asympX} satisfy the symmetry conditions
	$$
		X^{(2k + 1)} = \begin{pmatrix} X_{11}^{(2k+1)} &  X_{11}^{(2k+1)} \\ 
		X_{12}^{(2k+1)} & -X_{11}^{(2k+1)}\end{pmatrix}, \quad
		X^{(2k)} = \begin{pmatrix} X_{11}^{(2k)} &  X_{12}^{(2k)} \\ 
		-X_{12}^{(2k)} & X_{11}^{(2k)}\end{pmatrix}.
	$$
Moreover, 
\begin{align*}
X_{11}^{(1)} &= \frac{\gamma}{2s_{0}^{1/7}} + c_1 + \frac{ h}{s_{0}^{1/7}} & X_{11}^{(2)}=& \frac{\gamma c_1}{2 s_{0}^{1/7}} + \frac{ \gamma h}{2 s_{0}^{2/7}} + \frac{c_1^{2}}{2} + \frac{ c_1 h}{s_{0}^{1/7}} + \frac{h^{2}}{2 s_{0}^{2/7}} \\
X_{12}^{(1)} &= \frac{i \gamma}{2s_{0}^{1/7}} &
X_{12}^{(2)}=&\i \left(\frac{ \gamma c_1}{2 s_{0}^{1/7}} + \frac{ \gamma h}{2s_{0}^{2/7}} + \frac{ \alpha}{s_{0}^{2/7}} - \frac{ s_{1}}{4s_{0}} + \frac{ y}{2s_{0}^{2/7}}\right)
\end{align*}
where
$$c_1 := \frac{s_{1}^{2} t_{1}}{4 s_{0}^{11/7}} - \frac{s_{1} t_{0}}{s_{0}^{6/7}} + \frac{5 s_{1}^{4}}{64 s_{0}^{3}}$$
and $\alpha$ and $\gamma$ are defined in equation \eqref{eq:Y1}, while $h$ and $y$ are defined in \eqref{eq:asympPhi}. 
\end{proposition}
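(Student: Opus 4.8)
Here is how I would approach the proof.

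The plan is to compare the two descriptions of $X$: its definition \eqref{eq:defX}, namely $X=As_0^{\sigma_3/14}Y(z(\zeta))\Phi(z(\zeta))$, and its prescribed asymptotics \eqref{eq:asympX}. Writing the latter in the sector $|\arg\zeta|<\pi-\epsilon$ as $X=A\zeta^{-\sigma_3/4}N\,H(\zeta)\,e^{\varphi(\zeta)\sigma_3}$, with $H:=I+\sum_{j\ge1}X^{(j)}\zeta^{-j/2}$, and solving for $H$, the prefactor $A$ is common to both descriptions and cancels, leaving
\[
H(\zeta)=N^{-1}\zeta^{\sigma_3/4}s_0^{\sigma_3/14}\,Y(z(\zeta))\,\Phi(z(\zeta))\,e^{-\varphi(\zeta)\sigma_3}.
\]
Substituting $\Phi(z)=z^{-\sigma_3/4}N\,H_\Phi(z)\,e^{-\theta(z)\sigma_3}$ from \eqref{eq:asympPhi} and using $z(\zeta)=s_0^{2/7}\zeta+s_1s_0^{-5/7}$ to factor $z^{-\sigma_3/4}=s_0^{-\sigma_3/14}\zeta^{-\sigma_3/4}(1+\tfrac{s_1}{s_0\zeta})^{-\sigma_3/4}$, this reduces to
\[
H(\zeta)=N^{-1}\Big[\zeta^{\sigma_3/4}\widetilde Y(z)\zeta^{-\sigma_3/4}\,(1+\tfrac{s_1}{s_0\zeta})^{-\sigma_3/4}\Big]N\;H_\Phi(z)\;e^{-(\theta(z(\zeta))+\varphi(\zeta))\sigma_3},
\]
where $\widetilde Y:=s_0^{\sigma_3/14}Ys_0^{-\sigma_3/14}$. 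All the coefficients $X^{(j)}$ are then read off by expanding the right-hand side in powers of $\zeta^{-1/2}$.

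For the symmetry statement I would first record a complex-conjugation symmetry of $X$, obtained by uniqueness for the Riemann--Hilbert problem \ref{RHP3}. Since its jump matrix $J(\zeta)=\left(\begin{smallmatrix}1&1-\sigma(\zeta)\\0&1\end{smallmatrix}\right)$ is real and satisfies $\sigma_3 J^{-1}\sigma_3=J$, and since $\overline N=\i N\sigma_3$, the matrix $\hat X(\zeta):=-\i\,\overline{X(\bar\zeta)}\,\sigma_3$ has the same jump and the same normalisation \eqref{eq:asympX} as $X$; hence $\hat X=X$, which forces $\overline{X^{(j)}}=\sigma_3X^{(j)}\sigma_3$, that is, real diagonal and imaginary off-diagonal entries. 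The even/odd grading is then a $\mathbb Z_2$-parity argument on the product above: $H_\Phi$ is graded by \eqref{eq:symmetriesPhi} (odd coefficients traceless, even coefficients scalar-diagonal), the exponential $e^{-(\theta+\varphi)\sigma_3}$ is graded because $\theta(z(\zeta))+\varphi(\zeta)$ contains only odd powers of $\zeta^{1/2}$, and the conjugated factor $N^{-1}[\,\cdots]N$ is graded because conjugation by $N$ sends diagonal matrices into $\mathrm{span}\{I,\sigma_2\}$ and off-diagonal matrices into $\mathrm{span}\{\sigma_1,\sigma_3\}$ — exactly matching the even, respectively odd, powers of $\zeta^{1/2}$ produced by the dressing $\zeta^{\pm\sigma_3/4}$. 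A product of graded factors being graded, a short induction in $j$ yields the stated block form of the $X^{(j)}$.

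To obtain the explicit $X^{(1)},X^{(2)}$ I would push the expansion to order $\zeta^{-1}$. The one genuinely delicate ingredient — and the step I expect to be the main obstacle — is the exponential factor: one must verify that the change of variables \eqref{eq:changeofvar} is precisely the one making every non-negative power of $\zeta^{1/2}$ in $\theta(z(\zeta))$ cancel against $\varphi(\zeta)=-(\tau_7\zeta^{7/2}+\tau_5\zeta^{5/2}+\tau_3\zeta^{3/2}+\tau_1\zeta^{1/2})$, so that $\theta(z(\zeta))+\varphi(\zeta)=c_1\zeta^{-1/2}+O(\zeta^{-3/2})$, identifying $c_1$ as the $\zeta^{-1/2}$-coefficient of $\theta(z(\zeta))$. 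Concretely, expanding $(\zeta+\tfrac{s_1}{s_0})^{k/2}$ for $k=1,3,7$ by the binomial series and collecting the $\zeta^{1/2}$- and $\zeta^{-1/2}$-coefficients reproduces both \eqref{eq:changeofvar} and the stated value of $c_1$. Granting this, $e^{-(\theta+\varphi)\sigma_3}=I-c_1\zeta^{-1/2}\sigma_3+\tfrac12c_1^2\zeta^{-1}I+\cdots$, and it remains to multiply the four expansions — the dressed $\widetilde Y$ (with leading corrections $s_0^{-1/7}\gamma$ in the $(1,2)$-slot at order $\zeta^{-1/2}$ and $s_0^{-2/7}\alpha$ on the diagonal at order $\zeta^{-1}$), the binomial factor $I-\tfrac{s_1}{4s_0}\zeta^{-1}\sigma_3+\cdots$, $H_\Phi$ with $\Phi^{(1)}=-h\sigma_3$ and $\Phi^{(2)}=\tfrac12\left(\begin{smallmatrix}h^2&\i y\\-\i y&h^2\end{smallmatrix}\right)$, and the exponential — while carefully tracking the conjugation by $N$ through $N^{-1}\sigma_3N$ and the image of the $(1,2)$-slot. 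Collecting the coefficients of $\zeta^{-1/2}$ and $\zeta^{-1}$ then gives $X^{(1)}$ and $X^{(2)}$ in terms of $\alpha,\gamma,h,y$ and $c_1$. This final step is long but mechanical; beyond the exponential cancellation, the only real care is the bookkeeping of half-integer orders and of the $N$-conjugation, which is what converts the data of $Y$ and $\Phi$ into the claimed entries.
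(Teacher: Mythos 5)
Your proposal is correct and follows essentially the same route as the paper, whose own proof is just a two-sentence sketch of exactly this computation: substitute the definition \eqref{eq:defX} into the prescribed asymptotics \eqref{eq:asympX}, insert the expansions of $\Phi$ and $Y$ at infinity, and read off the coefficients, with the block structure coming from the symmetry \eqref{eq:symmetriesPhi} of the $\Phi^{(j)}$ (your extra Schwarz-reflection step gives reality of the entries, which is not part of the stated claim, but it is harmless). In particular, your identification of the cancellation $\theta(z(\zeta))+\varphi(\zeta)=c_1\zeta^{-1/2}+O(\zeta^{-3/2})$ as the simultaneous source of the change of variables \eqref{eq:changeofvar} and of the constant $c_1$ is exactly the key point.
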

\begin{proof}
	The symmetry conditions on the matrices $X^{(j)}$ come from the analog symmetry condition for $\Phi$, see equation \eqref{eq:symmetriesPhi}. The explicit expressions of the first two terms is derived from the definition \eqref{eq:defX} of $X$, together with the two asymptotic expansions of the Riemann-Hilbert problems \ref{RHP1} and \ref{RHP2} at $z = \infty$.  
\end{proof}

\section{KdV, potential KdV equations and Lax matrices}
The evolution equations \eqref{eq:kdv1}--\eqref{eq:kdv3} and \eqref{eq:pkdv1}--\eqref{eq:pkdv3},  satisfied by the variables $v$ and $u$ (equations \eqref{def:v} and \eqref{def:u}), are deduced by a set of Lax equations satisfied by the solution $X$ to the Riemann-Hilbert problem \ref{RHP3}. The starting point is the following Lemma which, formally, is easily verified using the properties of $X$: namely its asymptotic expansion at infinity and the fact that its jumps do not depend on the parameters $\tau_j$, $j = 1,3,5,7$, see the Riemann-Hilbert problem \ref{RHP3}. The rigorous analytical proof, modulo the obvious changes of notation, can be carried out exactly as in \cite{CCR}, Lemma 3.3.

\begin{lemma}
	For any $z \in \mathbb C\backslash \mathbb R$, the solution $X$ to the Riemann-Hilbert problem \ref{RHP3} is differentiable with respect to the variables $\tau_j, \; j = 1,3,5,7$. Moreover, its derivatives satisfy the Riemann-Hilbert problem detailed below, where $\partial$ denotes the derivative with respect to one of the variables $\tau_j$.
\end{lemma}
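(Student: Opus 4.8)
The plan is to treat the two assertions in turn: the smooth dependence of $X$ on the deformation parameters $\tau_j$, $j=1,3,5,7$, and then the identification of the Riemann-Hilbert problem solved by each derivative $\partial_{\tau_j} X$. For differentiability I would rely on the standard theory of parameter-dependent Riemann-Hilbert problems: the jump matrix in Problem \ref{RHP3} is independent of the $\tau_j$, the problem is uniquely solvable, and the parameters enter only through the asymptotic normalization at infinity via the scalar exponent $\varphi$. The rigorous justification is a small-norm / vanishing-lemma argument identical, up to notation, to \cite{CCR}, Lemma 3.3, which I would cite rather than reproduce.

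For the jump relation I would differentiate $X_+(\zeta) = X_-(\zeta)\,J(\zeta)$ with $J(\zeta) = \begin{pmatrix} 1 & 1-\sigma(\zeta) \\ 0 & 1\end{pmatrix}$. Since $J$ does not depend on any $\tau_j$, one gets $\partial_{\tau_j} X_+ = (\partial_{\tau_j} X_-)\,J$, so $\partial_{\tau_j} X$ has exactly the same jump as $X$. Equivalently, the logarithmic derivative $M_j := (\partial_{\tau_j} X)\,X^{-1}$ is jump-free on $\mathbb{R}$, because $M_{j,+} = (\partial_{\tau_j} X_- \, J)(X_- J)^{-1} = (\partial_{\tau_j} X_-)\,X_-^{-1} = M_{j,-}$; hence $M_j$ extends analytically to $\mathbb{C}$ away from the finitely many points $r_1,\dots,r_k$, which one checks to be removable singularities since both $X$ and $\partial_{\tau_j} X$ stay locally bounded there.

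To determine $M_j$ I would differentiate the asymptotic expansion \eqref{eq:asympX}, whose only $\tau_j$-dependence is the scalar $\varphi(\zeta) = -(\tau_7\zeta^{7/2}+\tau_5\zeta^{5/2}+\tau_3\zeta^{3/2}+\tau_1\zeta^{1/2})$, with $\partial_{\tau_j}\varphi = -\zeta^{j/2}$. Writing $X = P(\zeta)\,\e^{\varphi\sigma_3}C$ with $P(\zeta) = A\zeta^{-\sigma_3/4}N[I + O(\zeta^{-1/2})]$ and $C$ the sectionally-constant factor, one finds $M_j = (\partial_{\tau_j}\varphi)\,P\sigma_3 P^{-1} + (\partial_{\tau_j}P)\,P^{-1}$, where the second summand stays bounded at infinity. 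Using $N\sigma_3 N^{-1} = \begin{pmatrix} 0 & -1 \\ -1 & 0\end{pmatrix}$ together with $\zeta^{-\sigma_3/4}\begin{pmatrix} 0 & -1 \\ -1 & 0\end{pmatrix}\zeta^{\sigma_3/4} = -\begin{pmatrix} 0 & \zeta^{-1/2} \\ \zeta^{1/2} & 0\end{pmatrix}$, the leading term $(\partial_{\tau_j}\varphi)\,P\sigma_3 P^{-1}$ has entries proportional to $\zeta^{(j\pm 1)/2}$, which are integer powers of $\zeta$ precisely because $j$ is odd. Thus $M_j$ is single-valued and of at most polynomial growth, so by Liouville's theorem it is a polynomial in $\zeta$ of degree $(j+1)/2$. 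This yields $\partial_{\tau_j} X = M_j X$ with $M_j$ the Lax matrix, and fixes the asymptotic data of the Riemann-Hilbert problem solved by $\partial_{\tau_j} X$.

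The only genuinely technical point is the analytic differentiability of the first step; everything else is formal differentiation of the defining data, made rigorous once that dependence is known. I would therefore concentrate the write-up on the explicit polynomial structure of $M_j$ — which is what the subsequent extraction of the KdV and potential-KdV flows requires — and defer the differentiability to \cite{CCR}, Lemma 3.3, where the argument carries over verbatim thanks to the $\tau_j$-independence of the jump matrix.
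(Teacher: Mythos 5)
Your proposal is correct and follows essentially the same route as the paper: the differentiability is deferred to \cite{CCR}, Lemma 3.3 (the paper does exactly this), and the jump and asymptotic conditions for $\partial X$ are obtained by formal differentiation using the $\tau_j$-independence of the jump matrix. Your additional derivation of the polynomial structure of $M_j=(\partial_{\tau_j}X)X^{-1}$ is accurate but belongs to the subsequent proposition on the Lax matrices rather than to this lemma.
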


\begin{problem}\label{RHP4}\hfill
\begin{enumerate}
\item $\partial X$ is analytic on $\mathbb{C}\setminus \mathbb{R}$.
\item $\partial X$ has boundary values $\partial X_{\pm}$ on $\mathbb R$ such that $\partial X_{\pm} \in L^2_{\mathrm{loc}}(\mathbb R)$, which are continuous except (possibly) at the points $r_1,\ldots,r_k$ and satisfy the jump relation
$$
\partial X_+(\zeta)=
\partial X_-(\zeta)\begin{pmatrix}
1 & 1-\sigma(\zeta) \\ 0 & 1
\end{pmatrix}.
$$
\item As $\zeta\to \infty$,
\begin{align}\label{eq:dasympX}
\partial X(\zeta) = & A\zeta^{-\frac{\sigma_3}{4}}N\left[\left(I+\sum_{j\geq 1}\dfrac{X^{(j)}}{\zeta^{j/2}}\right)\partial \varphi(\zeta)\sigma_3 + \left(\sum_{j\geq 1}\dfrac{\partial X^{(j)}}{\zeta^{j/2}}\right)\right]\e^{\varphi(\zeta)\sigma_3}\\ \nonumber
&\hspace{5cm}
\times \left\lbrace \begin{array}{cc}
I \hspace{0.3cm}&  |\arg \zeta|<\pi-\epsilon\\
\begin{pmatrix}
1 & 0 \\ \pm 1 & 1
\end{pmatrix}  \hspace{0.3cm}& \pi-\epsilon<\pm \arg \zeta<\pi  
\end{array}\right..
\end{align}
In the formula above, $0 < \epsilon < \pi/2$, and the principal branches of $\zeta^{-\sigma_3/4}$ and $\zeta^{1/2}$ are taken, analytic in $\mathbb C\backslash(-\infty, 0]$ and positive for $\zeta > 0.$
\end{enumerate}
\end{problem}

\begin{proposition}
	The matrix-valued function $X(\zeta)$ defined in \eqref{eq:defX} satisfies the Lax equations
$$\frac{\partial}{\partial \tau_{2k + 1}}X = B_{2 k + 1} X = \begin{pmatrix}
-A_k & C_k \\ D_k & A_k
\end{pmatrix}X,
$$
where the entries $A_k,C_k,D_k$ of the Lax matrices are differential polynomials of $v$, defined in \eqref{def:v}. Explicitly, 
\begin{align}
C_k =& \zeta^k + \sum_{j=1}^k \zeta^{k-j} \mathcal{R}_{2j-1}[v] \label{eq:Ck}\\
A_k =& \frac{1}{2} \frac{\partial}{\partial \tau_1} C_k\label{eq:Ak}\\
D_k =& \zeta^{k+1}-\mathcal{R}_1[v]\zeta^k+ \sum_{j=2}^k \left\{ \mathcal{R}_{2j-1}[v]- \left(\frac{1}{2} \frac{\partial^2}{\partial \tau_1^2}+2 \mathcal{R}_1[v]\right) \mathcal{R}_{2j-3}\right\}\zeta^{k+1-j} \label{eq:Dk}\\
&- \left(\frac{1}{2} \frac{\partial^2}{\partial \tau_1^2}+2 \mathcal{R}_1[v]\right)\mathcal{R}_{2k-1}[v] \label{eq:Dk}.\nonumber
\end{align}
with $\mathcal{R}_{2j-1}$ defined as in \eqref{LenardRecursion}. 
Moreover, $v$ satisfies the KdV equations \eqref{eq:KdVhierarchy} while $u$, defined in \eqref{def:u} satisfies the potential KdV equations \eqref{eq:pKdVhierarchy}.
\end{proposition}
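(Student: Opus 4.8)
The plan is to realise the Lax matrices as logarithmic derivatives of $X$ in the times $\tau_{2k+1}$, using that the jumps in Riemann--Hilbert Problem \ref{RHP3} are independent of the $\tau_j$. Set $B_{2k+1} := \left(\partial_{\tau_{2k+1}} X\right) X^{-1}$. Since $X_+ = X_- J$ with $J$ free of $\tau_{2k+1}$, we get $\partial_{\tau_{2k+1}} X_+ = \left(\partial_{\tau_{2k+1}} X_-\right) J$, so $B_{2k+1}$ has no jump across $\mathbb R$; the possible singularities at $r_1,\dots,r_k$ are removable by the local $L^2$ bounds in Problems \ref{RHP3} and \ref{RHP4} (the analytic details being as in \cite{CCR}, Lemma 3.3). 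Hence $B_{2k+1}$ is entire. Substituting the expansions \eqref{eq:asympX} and \eqref{eq:dasympX}, the sector factors and $\e^{\varphi\sigma_3}$ cancel, and since $\partial_{\tau_{2k+1}}\varphi = -\zeta^{(2k+1)/2}$ the conjugated leading part grows like $\zeta^{k+1}$; by Liouville $B_{2k+1}$ is a polynomial in $\zeta$ of degree $k+1$.

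I would then fix the matrix structure. The symmetries of $X^{(j)}$ in Proposition \ref{prop:propertiesX}, propagated through the conjugation by $A$, $N$ and $\zeta^{-\sigma_3/4}$, force $B_{2k+1} = \begin{pmatrix} -A_k & C_k \\ D_k & A_k\end{pmatrix}$ with $C_k = \zeta^k + \cdots$ of degree $k$, $A_k$ of degree $k-1$, and $D_k = \zeta^{k+1} + \cdots$ of degree $k+1$. The delicate point is $k=0$: computing $B_1$ from the first two orders of the expansion, using the entries $X^{(1)}, X^{(2)}$ of Proposition \ref{prop:propertiesX}, the relation \eqref{eq:logderivativeQ} between $\gamma$ and $\partial_{\tau_1}\log Q_\sigma$, and the identity $\partial_{t_0} h = 2y$ of Proposition \ref{prop:eqPI2}, I expect $B_1 = \begin{pmatrix} 0 & 1 \\ \zeta - 2v & 0\end{pmatrix}$, with $v$ exactly as in \eqref{def:v} and the factor $2$ dictated by consistency with the stated $D_1$ and with the normalisation $v_{\tau_3} = \tfrac14 v''' + 3vv'$. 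This identification, which links the purely Riemann--Hilbert object to the transcendent-plus-log-derivative combination defining $v$, is the main obstacle.

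Treating $B_1$ as the spatial Zakharov--Shabat operator, I would extract the remaining entries and the hierarchy from the zero-curvature identity $\partial_{\tau_{2k+1}} B_1 - \partial_{\tau_1} B_{2k+1} + [B_1, B_{2k+1}] = 0$, which holds automatically as mixed $\tau$-derivatives of $X$ commute. Reading it as a polynomial identity in $\zeta$: the $(1,2)$ entry gives $A_k = \tfrac12 \partial_{\tau_1} C_k$; the diagonal gives $D_k = C_k(\zeta - 2v) - \tfrac12 \partial_{\tau_1}^2 C_k$; and the $(2,1)$ entry is a polynomial whose value $\partial_{\tau_{2k+1}} v$ is $\zeta$-independent. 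Demanding that every positive power of $\zeta$ cancels yields precisely the Lenard recursion \eqref{LenardRecursion}, so that the coefficients of $C_k$ are the $\mathcal R_{2j-1}[v]$ (integration constants fixed by decay as $\tau_1 \to +\infty$); back-substitution produces the explicit $C_k, A_k, D_k$ of the statement, and the surviving $\zeta^0$ coefficient is the KdV equation $v_{\tau_{2k+1}} = D_1 \mathcal R_{2k+1}[v]$ for $k=1,2,3$.

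For the potential KdV equations I would first show $\partial_{\tau_1} u = v$: differentiating \eqref{def:u}, using $\frac{\partial}{\partial \tau_1} = -\tfrac12 \left(\tfrac{2}{7\tau_7}\right)^{1/7}\partial_{t_0}$ and $\partial_{t_0} h = 2y$, the transcendent and polynomial terms reorganise into \eqref{def:v}. Then $\partial_{\tau_1}\!\left(u_{\tau_{2k+1}} - \mathcal R_{2k+1}[u_{\tau_1}]\right) = \partial_{\tau_{2k+1}} v - D_1 \mathcal R_{2k+1}[v] = 0$, and the $\tau_1$-independent integration constant vanishes by decay as $\tau_1 \to +\infty$, giving \eqref{eq:pKdVhierarchy}. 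Beyond the identification of $v$, the remaining work is the recursion bookkeeping and the matching of integration constants; the structural input (entirety, polynomiality, zero curvature) is routine once jump-independence in Problem \ref{RHP3} is in hand.
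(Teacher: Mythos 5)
Your overall skeleton (jump-independence, Liouville, polynomiality of degree $k+1$, zero-curvature in the form \eqref{eq:compcond}, extraction of $A_k = \tfrac12\partial_{\tau_1}C_k$ and $D_k = (\zeta-2v)C_k - \tfrac12\partial_{\tau_1}^2C_k$ from the $(1,2)$ and diagonal entries) coincides with the paper's, and your identification $B_1 = \begin{pmatrix} 0 & 1 \\ \zeta - 2v & 0\end{pmatrix}$ is correct since $b_0 = -2u_{\tau_1}$ and $u_{\tau_1}=v$. The genuine gap is in how you fix the integration constants. The Lenard recursion determines each coefficient $c_j$ of $C_k$ only up to a function of $\tau_3,\tau_5,\tau_7$, and you propose to kill these constants — and the constant arising when integrating KdV to get potential KdV — "by decay as $\tau_1\to+\infty$." But $v$ and $u$ do \emph{not} decay in that limit: only the Fredholm-determinant contributions $\partial_{\tau_1}^2\log Q_\sigma$ and $\partial_{\tau_1}\log Q_\sigma$ decay, while $v$ tends to $\left(\tfrac{2}{7\tau_7}\right)^{2/7}y - \tfrac{\tau_5}{7\tau_7}$ with $y$ the $PI^{(2)}$ transcendent (which grows like $|t_0|^{1/3}$ as $t_0\to-\infty$, i.e.\ as $\tau_1\to+\infty$), and $u$ tends to the corresponding non-decaying combination of $h$ and explicit polynomials in the $\tau_j$. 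Evaluating $u_{\tau_{2k+1}} - \mathcal R_{2k+1}[u_{\tau_1}]$ on this non-trivial background does not obviously give zero, so the constants cannot be discarded this way; moreover, controlling the limit rigorously would require the asymptotic analysis of Sections 4--6, which is far heavier than the algebraic statement being proved.

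The paper avoids integration altogether. First, the subleading structure of the expansion of $\left(\partial_{\tau_{2k+1}}X\right)X^{-1}$ yields the recursion $B_{2k+1} = \zeta\bigl(B_{2k-1} + E_{21}\,u_{\tau_{2k-1}}\bigr) + R^{(2k+1)}$ with $R^{(2k+1)}$ of degree $0$, which pins the $(1,2)$ entries (e.g.\ $(B_3)_{12}=\zeta+u_{\tau_1}$) with no free constants; the compatibility conditions \eqref{eq:compcond} then determine all remaining coefficients $b_j$ \emph{algebraically} (equations \eqref{b012}, \eqref{b346}, \eqref{b7810}). Second, the potential KdV equations are not obtained by integrating KdV: they come from the additional identities \eqref{eq:ApppKdV}, derived in Appendix \ref{App:pKdV} by setting to zero the coefficients of $\zeta^{-1}$ and $\zeta^{-2}$ in the expansion of $\left(\partial_{\tau_{2k+1}}X\right)X^{-1}$ at infinity. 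You would need to replace your decay argument by an analogous use of the subleading terms of the asymptotic expansion (or by some other normalization of the Lenard constants) for the proof to close.
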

\begin{proof}
Consider the matrix valued functions
$$
	B_{2k + 1} := \left(\frac{\partial X}{\partial \tau_{2k + 1}} \right) X^{-1}, \quad k = 0,1,2,3.
$$
Using the Riemann-Hilbert characterization of $X$ and its derivatives, we deduce that these functions have no jumps on $\mathbb R/\{r_1, \ldots, r_k\}$. Hence, they are analytic functions with, possibly, isolated singularities at the points of discontinuity of $\sigma$. On the other hand, the boundary values of $X$ and its derivatives are locally $L^2$, and hence $B_{2k + 1}$ are, in fact, entire functions. More precisely, they are polynomials (in the variable $\zeta$) of degree $k + 1$, as easily deduced from the asymptotic expansion at infinity
\begin{align*}
\left(\frac{\partial X}{\partial \tau_{2k + 1}}\right)X^{-1} = & \begin{pmatrix}
0 & 0 \\ -2i\frac{\partial X_{12}^{(1)}}{\partial \tau_{2k + 1}}&0
\end{pmatrix}+ A\zeta^{-\frac{\sigma_3}{4}}N\left[\zeta^{-1/2}\frac{\partial X^{(1)}}{\partial \tau_{2k + 1}}+\zeta^{-1}\frac{\partial X^{(2)}}{\partial \tau_{2k + 1}} +\zeta^{-3/2}\frac{\partial X^{(3)}}{\partial \tau_{2k + 1}}+O\left(\zeta^{-2}\right)\right]\times \\
&\left[I+\dfrac{X^{(1)}}{\zeta^{1/2}}+\dfrac{X^{(2)}}{\zeta}+\dfrac{X^{(3)}}{\zeta^{3/2}}+O\left(\zeta^{-2}\right)\right]^{-1}N^{-1}\zeta^{\frac{\sigma_3}{4}}A^{-1}+\\
& A\zeta^{-\frac{\sigma_3}{4}}N\left[I+\dfrac{X^{(1)}}{\zeta^{1/2}}+\dfrac{X^{(2)}}{\zeta}+\dfrac{X^{(3)}}{\zeta^{3/2}}+O\left(\zeta^{-2}\right)\right](-\zeta^{k + 1/2}\sigma_3)\times \\
&\left[I+\dfrac{X^{(1)}}{\zeta^{1/2}}+\dfrac{X^{(2)}}{\zeta}+\dfrac{X^{(3)}}{\zeta^{3/2}}+O\left(\zeta^{-2}\right)\right]^{-1}N^{-1}\zeta^{\frac{\sigma_3}{4}}A^{-1}.
\end{align*}
We now observe that $u$, defined in \eqref{def:u}, can be equivalently rewritten as $u = X_{11}^{(1)} + \i X_{12}^{(1)}$, using equation \eqref{eq:logderivativeQ} and Proposition \ref{prop:propertiesX}. Hence, using the asymptotic expansion above, it can be verified that the functions $B_{2k + 1}$ satisfy the recursion relation 
 $$B_{2k + 1} = \zeta\left(B_{2k - 1} + \begin{pmatrix} 0 & 0 \\ u_{\tau_{2k - 1}} & 0 \end{pmatrix}\right) + R^{(2k + 1)},$$
 where  $R^{(2k + 1)}$ are matrices of degree $0$ in $\zeta$. Consequently, there exist functions $\{ b_j, j = 0, \ldots, 10 \}$, depending on the parameters $\{\tau_k, k = 1,3,5,7 \}$, such that
 \begin{align*}
B_1 &= \begin{pmatrix}
0 & 1 \\
\zeta+b_0&0
\end{pmatrix}\\
B_3 &= \begin{pmatrix}
b_1 & \zeta+u_{\tau_1} \\
\zeta^2-u_{\tau_1} \zeta+b_2&-b_1
\end{pmatrix}\\
B_5 &= \begin{pmatrix}
b_1\zeta + b_3 & \zeta^2+u_{\tau_1}\zeta +b_4\\
\zeta^3-u_{\tau_1} \zeta^2+b_5 \zeta +b_6&-b_1\zeta - b_3
\end{pmatrix} \\
B_7 &= \begin{pmatrix}
b_1\zeta^2+b_3\zeta+b_7 & \zeta^3+u_{\tau_1}\zeta^2 +b_4\zeta +b_8\\
\zeta^4-u_{\tau_1} \zeta^3+b_5 \zeta^2 +b_9\zeta +b_{10}&-b_1\zeta^2-b_3\zeta - b_7
\end{pmatrix},
\end{align*}
with $b_5 = b_2 + u_{\tau_3}$ and $b_9 = b_6 + u_{\tau_5}$. We now use 
 the Lax equations
\begin{equation}
\frac{\partial}{\partial \tau_1}B_k-\frac{\partial}{\partial \tau_k}B_1 = [B_1, B_k], \quad k = 3,5,7,
\label{eq:compcond}
\end{equation}
to obtain more information on these functions. For each $k$, we get three equations for the variables $\{ b_j, \; j = 0,\ldots,10 \}$ and one additional equation satisfied by $u$. Explicitly,
\begin{itemize}
	\item for $k =3$
		\begin{align}\label{b012}
			b_0 &= - 2\frac{\partial}{\partial \tau_{1}} u, &
			b_{1} &= - \frac{1}{2}\frac{\partial^{2}}{\partial \tau_{1}^{2}} u, &
			b_{2} &=- 2 \left(\frac{\partial}{\partial \tau_{1}} u\right)^{2} - \frac{1}{2}\frac{\partial^{3}}{\partial \tau_{1}^{3}} u
		\end{align}
	and 
		\begin{equation}\label{eq:proofkdv1}
		6 \frac{\partial}{\partial \tau_{1}} u \frac{\partial^{2}}{\partial \tau_{1}^{2}} u + \frac{1}{2}\frac{\partial^{4}}{\partial \tau_{1}^{4}} u - 2 \frac{\partial^{2}}{\partial \tau_{3}\partial \tau_{1}} u =0;
		\end{equation}
	\item for $k = 5$
		\begin{align}\label{b346}
			b_{3} &=- \frac{1}{2}\frac{\partial^2}{\partial \tau_{1}\partial\tau_{3}} u, &
			b_4 &= \frac{\partial}{\partial \tau_{3}}u, &
			b_{6} &=- 2 \frac{\partial}{\partial \tau_{1}} u \frac{\partial}{\partial \tau_{3}} u - \frac{1}{2}\frac{\partial^{3}}{\partial \tau_{3}\partial \tau_{1}^{2}} u
		\end{align}
	and
		\begin{equation}\label{eq:proofkdv2}
			4 \frac{\partial}{\partial \tau_{1}} u \frac{\partial^{2}}{\partial \tau_{3}\partial \tau_{1}} u + 2 \frac{\partial^{2}}{\partial \tau_{1}^{2}} u \frac{\partial}{\partial \tau_{3}} u - 2 \frac{\partial^{2}}{\partial \tau_{5}\partial \tau_{1}} u + \frac{1}{2}\frac{\partial^{4}}{\partial \tau_{3}\partial \tau_{1}^{3}} u =0;
		\end{equation}
	\item for $k = 7$
		\begin{align}\label{b7810}
			b_7 &= - \frac{1}{2}\frac{\partial^2}{\partial \tau_{1}\partial \tau_{5}} u, &
			b_8 &= \frac{\partial}{\partial \tau_{5}}u, &
			b_{10} &= -2\frac{\partial}{\partial \tau_{5}}u\frac{\partial}{\partial \tau_{1}}u-\frac{1}{2}\frac{\partial^2}{\partial \tau_{1}^2}\frac{\partial}{\partial \tau_{5}}u
		\end{align}
	and
		\begin{equation}\label{eq:proofkdv3}
			4 \frac{\partial}{\partial \tau_{1}} u \frac{\partial^{2}}{\partial \tau_{5}\partial \tau_{1}} u + 2 \frac{\partial^{2}}{\partial \tau_{1}^{2}} u \frac{\partial}{\partial \tau_{5}} u - 2 \frac{\partial^{2}}{\partial \tau_{7}\partial \tau_{1}} u + \frac{1}{2}\frac{\partial^{4}}{\partial \tau_{5}\partial \tau_{1}^{3}} u=0.
	\end{equation}	
\end{itemize}	
Equations \eqref{b012},\eqref{b346},\eqref{b7810}, together with the explicit expressions for $b_5 = b_2 + u_{\tau_3}$ and $b_9 = b_6 + u_{\tau_5}$, are equivalent to the explicit expressions \eqref{eq:Ck}, \eqref{eq:Ak}, \eqref{eq:Dk}, once the potential KdV equations $$u_{\tau_{2k + 1}} = \mathcal{R}_{2k + 1}[u_{\tau_1}], \; k = 1,2,3$$ are proven, which we will do now. The strategy we use is to expand 
$\left(\frac{\partial}{\partial \tau_{2k + 1}}X\right)X^{-1}$ at $\zeta \to \infty$, and set to $0$ the coefficients in $\zeta^{-k}, k = 1,2$ of these expansions. In this way (details are given in Appendix \ref{App:pKdV}) we obtain the equations
\begin{equation}\label{eq:ApppKdV}
	b_2 = \frac{1}{4} b_0^2-2u_{\tau_3}, \quad b_6 =-b_1^2+\frac{b_0^3}{8}-b_0b_4 -2u_{\tau_5} \quad b_{10} = \frac{b_0b_6}{2}-2b_1b_3-b_4b_5 -2u_{\tau_7}
\end{equation}
which, combined with the expressions of $b_2,b_6$ and $b_{10}$ given in \eqref{b012}, \eqref{b346} and \eqref{b7810} give the potential KdV equations \eqref{eq:pKdVhierarchy}. As for the KdV equations \eqref{eq:KdVhierarchy}, the first one is already written in \eqref{eq:proofkdv1}. The other two are proven combining \eqref{eq:proofkdv2} and \eqref{eq:proofkdv3}, with $u_{\tau_{2k + 1}} = \mathcal{R}_{2k + 1}[u_{\tau_1}]$ for $k = 2,3$.
\end{proof}

The following remark will be useful in the subsequent sections, when we will study the asymptotic behavior of the functions $u$ and $v$.

\begin{remark}
As already observed, the function $u$, defined in \eqref{def:u}, can be equivalently rewritten as 
\begin{equation}\label{eq:uvsX}
	u = X_{11}^{(1)} + \i X_{12}^{(1)}= -\frac{\gamma}{s_0^{1/7}} - \frac{h}{s_0^{1/7}}-\left(\frac{s_{1}^{2} t_{1}}{4 s_{0}^{11/7}} - \frac{s_{1} t_{0}}{s_{0}^{6/7}} + \frac{5 s_{1}^{4}}{64 s_{0}^{3}}\right),
\end{equation}
thanks to equation \eqref{eq:logderivativeQ} and Proposition \ref{prop:propertiesX}
Analogously, we also have that $v$, defined in \eqref{def:v} can be written as
\begin{equation}\label{eq:vvsX}
v=-2\i X_{11}^{(1)}X_{12}^{(1)}+2\left(X_{12}^{(1)}\right)^2-2\i X_{12}^{(2)} = \frac{2 \alpha}{s_{0}^{2/7}} - \frac{ \gamma^{2}}{s_{0}^{2/7}} + \frac{y}{s_{0}^{2/7}} - \frac{ s_{1}}{2s_{0}}.
\end{equation}
\end{remark}

\section{The first asymptotic regime}

In this section, we work under the assumption that both Assumptions \ref{assumption1} and \ref{assumption2} hold. We analyze the case in which $\tau_7$ is small and $\tau_5  \geq M (\tau_7)^{5/7}$, for $M$ large enough. In this section and in the ones below, we will sometimes use the variable $s_0$ instead of $\tau_7$. In view of equation \eqref{eq:changeofvar}, this amounts to a simple rescaling, which makes equations clearer. Some conditions must be imposed also on $\tau_1,\tau_3$. Namely, we will work under the following
\begin{assumption}\label{assumptionsfirstregime}
The variables $\tau_1,\tau_3$ satisfy either of the following conditions:
\begin{eqnarray*}
	&\displaystyle \tau_3 \geq \frac{5}4 \frac{\tau_5^2}{s_0} \; \text{and} \; \tau_1 \geq \frac{5 \tau_5^3}{8 s_0^2}, \\
	&\displaystyle \tau_3 \leq \left(\frac{37}{56} - \epsilon \right)\frac{\tau_5^2}{s_0} + \frac{\tau_1 s_0}{\tau_5} \; \text{and} \; \tau_1 \in \mathbb R, \; \text{for any} \; \epsilon>0.
\end{eqnarray*}
\end{assumption} 

The meaning of these assumptions will be clearer in the next subsection. Essentially, they are used to control the behavior of the exponential part in equation \eqref{eq:asympPhi}.

\subsection{Estimates for the jump matrix of $Y$}

We will denote with $J_Y$ the jump matrix of the Riemann-Hilbert problem \ref{RHP2}, explicitly written in equation \eqref{eq:JY}. In the regime of this section, this jump matrix is exponentially close to the identity.

\begin{lemma}\label{lemma:SNcase1}
Let 
\begin{equation}\label{eq:asymptregI}
	\tau_5 \geq M s_0^{5/7}
\end{equation} 
for $M$ some positive real constant, and suppose that Assumptions \eqref{assumptionsfirstregime} are satisfied. Then, there exist two constants $T, k \geq 0$ such that
\begin{equation}\label{eq:estnormFR}
	\norm{J_Y(z)-I}_{L^1(\mathbb{R})} = O(\e^{-k \tau_5/s_0^{5/7}}), \quad \norm{J_Y(z)-I}_{L^2(\mathbb{R})} = O(\e^{-k \tau_5/s_0^{5/7}}), \quad \norm{J_Y(z)-I}_{L^\infty(\mathbb{R})} = O(\e^{-k \tau_5/s_0^{5/7}})
\end{equation}
uniformly for $s_0 \in (0, T]$. In the equation above, $\norm{J_Y(z)-I}_{L^a(\mathbb{R})}, \; a = 1,2,\infty$, denotes the maximum of the entry-wise corresponding $L^a(\mathbb R)$-norm.
\end{lemma}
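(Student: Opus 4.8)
The plan is to estimate the three nontrivial entries of $J_Y-I$ directly and to reduce everything to the decay of $\phi_1,\phi_2$ on the support of $\sigma$. From \eqref{eq:JY} one has $J_Y-I=-\sigma(\zeta)\,\Phi_+\,E\,\Phi_+^{-1}$, where $\zeta=zs_0^{-2/7}-s_1/s_0$ and $E$ is the nilpotent matrix with single nonzero entry $E_{12}=1$; explicitly the entries are $\sigma(\zeta)$ times $\phi_1\phi_2$, $\phi_1^2$ and $\phi_2^2$. Thus it suffices to bound these three products in $L^1,L^2,L^\infty(\mathbb R)$. The transition of $\sigma$ from $\approx 0$ to $\approx\iota$ takes place at $\zeta=0$, i.e.\ at $z=z_*:=s_1s_0^{-5/7}=\tau_5/s_0^{5/7}$, which is $\ge M$ by \eqref{eq:asymptregI}; so I split $\mathbb R$ into the support region $z\ge z_*$ and its complement $z<z_*$.

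On the support region the decay comes from the exponential factor in \eqref{eq:asympPhi}. For $z>0$ the function $\theta(z)=\tfrac27z^{7/2}+\tfrac23t_1z^{3/2}-2t_0z^{1/2}$ is real and, to leading order, $\phi_1\sim c\,z^{-1/4}e^{-\theta(z)}$, $\phi_2\sim -c\,z^{1/4}e^{-\theta(z)}$, so the products are, up to algebraic factors, $\lesssim z^{1/2}e^{-2\theta(z)}$. The subtlety, which is the crux of the proof, is that as $s_0\to0$ the parameters $t_0=t_0(\tau,s_0)$, $t_1=t_1(\tau,s_0)$ grow, and the naive expansion \eqref{eq:asympPhi} is valid only far to the right of the soft edge $z_{\rm e}$, the largest real root of $z^3+t_1z-t_0=0$ (the zero of $\theta'$). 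What is genuinely true, and what I will use through the large-parameter steepest-descent analysis of $\Phi$ in \cite{CV07,CIK}, is that the decay is measured from the edge: on $z\ge z_*$ one has, up to algebraic factors, $|\phi_i\phi_j|\lesssim z^{1/2}\,e^{-2(\theta(z)-\theta(z_{\rm e}))}$. The main task is then to show, from Assumptions \ref{assumptionsfirstregime}, that $z_*\ge z_{\rm e}$ and
$$\theta(z_*)-\theta(z_{\rm e})=\int_{z_{\rm e}}^{z_*}s^{-1/2}\bigl(s^3+t_1s-t_0\bigr)\,\mathrm{d}s\ \ge\ k\,\frac{\tau_5}{s_0^{5/7}},$$
uniformly for small $s_0$. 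Since for $z\ge z_*$ the weight $\theta(z)-\theta(z_{\rm e})$ is nonnegative, increasing, and grows like $z^{7/2}$, the integral of $e^{-2(\theta-\theta_{\rm e})}$ over $[z_*,\infty)$ is controlled by its value at $z_*$, so the $L^1$ and $L^2$ norms obey the same bound as the $L^\infty$ norm.

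On the complementary region $z<z_*$ (equivalently $\zeta<0$) I use Assumption \ref{assumption2}: $\sigma(\zeta)\le k_1e^{-k_2|\zeta|^3}$, which decays super-exponentially as $z$ decreases. There $\phi_1,\phi_2$ grow at most algebraically — for $z<0$ the exponent $\theta$ is purely imaginary so $|e^{-\theta}|=1$, and the triangular connection factor in \eqref{eq:asympPhi} is bounded — so $\sigma(\zeta)\,|\phi_i\phi_j|\lesssim z^{1/2}\,k_1e^{-k_2|\zeta|^3}$. Passing to the variable $\zeta$, the contribution of this region to all three norms is $O(e^{-k'(\tau_5/s_0)^3})$, which is far smaller than the claimed rate; the thin strip just below $z_*$, where $\sigma$ may still be of order one, is already absorbed by the estimate of the previous paragraph because $\theta-\theta_{\rm e}$ is large there.

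Combining the two regions gives $\|J_Y-I\|_{L^a}=O(e^{-k\tau_5/s_0^{5/7}})$ for $a=1,2,\infty$, uniformly for $s_0\in(0,T]$. I expect the main obstacle to be exactly the two points above: (a) recognising that the relevant exponent is $\theta(z)-\theta(z_{\rm e})$ and not $\theta(z)$ itself — indeed $\theta(z_*)$ can be negative when $t_1$ is large — so that one must feed in the uniform asymptotics of $\Phi$ rather than \eqref{eq:asympPhi} read naively; and (b) the case analysis verifying the displayed lower bound. In regime (i) of Assumption \ref{assumptionsfirstregime} one has $t_1\ge0$ and $t_0\le t_1z_*/2$, whence $z_*^3+t_1z_*-t_0>0$ forces $z_*>z_{\rm e}$ and the integral bound follows quickly; regime (ii), where $t_1$ may be negative and the cubic $z^3+t_1z-t_0$ can have several real roots, is the delicate case, and the precise constant $\tfrac{37}{56}-\epsilon$ is exactly what is needed to keep $z_*$ to the right of the rightmost edge with the accumulated weight still of order $\tau_5/s_0^{5/7}$.
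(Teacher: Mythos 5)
Your overall strategy coincides with the paper's: split $\mathbb R$ at the transition point $z_*=\tau_5/s_0^{5/7}$ of $\sigma$, use the cubic-exponential decay of $\sigma$ on the left and the exponential decay of $\Phi$ on the right, with Assumptions \ref{assumptionsfirstregime} controlling the exponent. (The paper uses three regions, $z\ge z_*/2$, $|z|<z_*/2$, $z\le -z_*/2$; on the first it keeps only $\sigma\in[0,1]$ and asserts $\phi_2(z)=O(\e^{-2z})$ directly from \eqref{eq:asympPhi} together with the Assumptions, and on the middle one it uses boundedness of $\Phi$ plus $|\zeta|\ge\tau_5/(2s_0)$.) You are right to flag that reading \eqref{eq:asympPhi} naively is delicate when $t_0,t_1$ grow with $s_0$; the paper's own treatment of this point is very terse.

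As a proof, however, your proposal has a genuine gap, and it sits exactly where you write ``the main task.'' You reduce the right-hand region to showing $z_*\ge z_{\rm e}$ (the largest real root of $z^3+t_1z-t_0$) together with $\theta(z_*)-\theta(z_{\rm e})\ge k\,\tau_5/s_0^{5/7}$, verify this only in regime (i), and explicitly defer regime (ii) as ``the delicate case'' --- but regime (ii) is precisely where the constant $\tfrac{37}{56}-\epsilon$ enters, so this is the heart of the lemma. Worse, your reformulation is not provable as stated: in the $(t_0,t_1)$ variables, condition (ii) reads $t_0\le\left(\tfrac{1}{56}-\tfrac{\epsilon}{2}\right)z_*^3+\tfrac{t_1z_*}{6}$ with $t_1$ unconstrained, and for instance $t_1=-3z_*^2$, $t_0=-z_*^3$ satisfies it while $z_*^3+t_1z_*-t_0=-z_*^3<0$, so $z_*$ lies to the \emph{left} of the rightmost root of the cubic and $z_*\ge z_{\rm e}$ fails. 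The exponent therefore cannot be organised around $z_{\rm e}$ as you propose; moreover the uniform bound $|\phi_i\phi_j|\lesssim z^{1/2}\e^{-2(\theta(z)-\theta(z_{\rm e}))}$ is attributed to \cite{CV07,CIK} but is not available there in that form (the true WKB phase of the Lax system involves $y$ and does not reduce to $\theta(z)-\theta(z_{\rm e})$), so it would itself require proof. Finally, the strip $0<z<z_*$ falls between your two cases: your left-region argument only covers $z<0$, and the ``thin strip just below $z_*$'' is dismissed rather than estimated, whereas the paper devotes a separate middle region to it.
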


%\begin{lemma}
%	For any $\textcolor{blue}{0 < T < 1}$, there exists two positive constants $M,k$ such that 
%	\begin{equation}\label{eq:estnormFR}
%		\norm{J_Y(z)-I}_{L^1(\mathbb{R})} = O(\e^{-k s_1/s_0^{5/7}}), \quad \norm{J_Y(z)-I}_{L^2(\mathbb{R})} = O(\e^{-k s_1/s_0^{5/7}}), \quad \norm{J_Y(z)-I}_{L^\infty(\mathbb{R})} = O(\e^{-k s_1/s_0^{5/7}})
%	\end{equation}
%	uniformly for $s_0 \in (0,T]$, $s_1 \geq Ms_0^{5/7}$ and $t_0,t_1$ bounded.
%\end{lemma}

\begin{proof} Recall, from equation \eqref{eq:JY}, that
$$
J_Y(z)-I=\sigma(\zeta(z))\begin{pmatrix}
\phi_1(z)\phi_2(z) & -\phi_1(z)^2 \vspace{0.3cm}\\ \phi_2(z)^2 & -\phi_1(z)\phi_2(z)
\end{pmatrix},
$$
where $\zeta(z)$ is the same as in Equation \eqref{eq:defzeta}. We will analyze the behavior of $|(J_Y(z))_{21}|$ (the behavior of the other entries is very similar). Let us analyze three different cases:
\begin{enumerate}
	\item Suppose that $z \geq \tau_5/2s_0^{5/7}$. In this regime, using equation \eqref{eq:asympPhi} and the assumptions \ref{assumptionsfirstregime} on $\tau_1$ and $\tau_3$, we obtain
	$$\phi_2(z) = O({\rm e}^{-2 z}) = O({\rm e}^{-z - \tau_5/2s_0^{5/7}})$$ 
	(the exponents are not sharp). Hence, using $\sigma(\zeta(z)) \in [0,1]$, we deduce that
	\begin{equation}\label{est1FR}
		|(J_Y(z))_{21}| = O({\rm e}^{-2z - \tau_5/s_0^{5/7}}).
	\end{equation}
	%which is enough to bound the $1,2$ and $\infty$-norm.
	\item Suppose $|z| < \tau_5/ 2s_0^{5/7}$. In this regime, $\phi_2$ (as well as $\phi_1$) is bounded. On the other hand, 
	$$\zeta(z) = \frac{z}{s_0^{2/7}}- \frac{\tau_5}{s_0} \leq\frac{\tau_5}{2s_0} -\frac{\tau_5}{s_0} = -\frac{\tau_5}{2s_0} \leq -\frac{M}{2}<0.$$
	Take $M>2$, so that $\zeta\leq-1$ and $-|\zeta|^3 \leq-|\zeta|$. Therefore, by Assumption \ref{assumption2},
	$$
		|\sigma(\zeta(z))| \leq k_1 \e^{-k_2|\zeta|^3}\leq k_1 \e^{-k_2|\zeta|} \leq k_1 \e^{-\frac{k_2}{2}|\zeta|}\e^{-k_2\frac{\tau_5}{4s_0}}.
	$$
	Now pick a constant $T \in (0,1)$. For $s_0< T$, one has $s_0 \leq s_0^{5/7}$. Then, for $s_0 \in (0, T]$ we have that
	\begin{equation}\label{est2FR}
		|(J_Y(z))_{21} |= |\sigma(\zeta(z))\phi_2(z)^2|= O(\e^{-k_2\tau_5/4s_0^{5/7}}).
	\end{equation}
	%As before, this is enough to bound the three norms.

	%is exponentially small when $z$ is large (see equation \eqref{eq:asympPhi}), for $t_1, t_0$ bounded a rough estimation gives us
%\begin{align*}
%\left|\sigma(\zeta(z))\phi_2(z)^2\right| \leq & \left|\phi_2(z)^2\right| = \frac{1}{2}\left|z^{1/2} \e^{-\frac{4}{7}z^{7/2}- \frac{4}{3}t_1z^{3/2} - 4t_0z^{1/2}}\right| = O(\e^{-2z}) = O(\e^{-z-s_1/2s_0^{5/7}}).
%\end{align*}
	\item At last, we consider the regime in which $z \leq -\tau_5/2s_0^{5/7}$. First notice that $\zeta(z) \leq - 3\tau_5/2s_0<-1$. Therefore
	$\sigma(\zeta(z)) = O({\rm e}^{-k_2| \zeta(z)|}) = O({\rm e}^{2z}{\rm e}^{-\frac{3k_2 \tau_5}{4 s_0}})$. On the other hand, using again equation \eqref{eq:asympPhi}, $\Phi_2(z) = O(z^{1/4})$ so that, 
	\begin{equation}\label{est3FR}
		|(J_Y(z))_{21} |= |\sigma(\zeta(z))\phi_2(z)^2|= O(\e^{z} \e^{-\frac{3k_2 \tau_5}{4 s_0}}) = O(\e^{z} \e^{-\frac{3}4 k_2\frac{\tau_5}{ s_0^{5/7}}}).
	\end{equation}
	where, in the last equality, we used again that, for $s_0 \in (0, T]$, $s_0 \leq s_0^{5/7}$. 
\end{enumerate}

Putting together equations \eqref{est1FR}, \eqref{est2FR} and \eqref{est3FR} we obtain that there exists a positive constant $k$ such that the three estimates in equation \eqref{eq:estnormFR} are satisfied.

\end{proof}

The previous Lemma, using standard Riemann-Hilbert techniques (small norm theorem) implies that, in the regime that we are considering,
$$Y^{(1)} = O (\e^{-k \tau_5/s_0^{5/7}}),
$$
as $\tau_5/s_0^{5/7} \to +\infty$, where $k$ is a positive constant.

\begin{lemma}
Let $\tau_5 \geq M \tau_7^{5/7}$ for $M$ a real constant large enough and suppose that Assumptions \ref{assumptionsfirstregime} are satisfied. Then, the functions $u$ and $v$, defined respectively in \eqref{def:u} and \eqref{def:v}, present the following asymptotic behavior
\begin{align*}
%\gamma =&\, O (\e^{-k \tau_5/\tau_7^{5/7}}),\\
u =& - \left(\frac{2}{7\tau_7}\right)^{1/7}h - \left(\frac{\tau_{1} \tau_{5}}{7\tau_{7}} - \frac{3 \tau_{3} \tau_{5}^{2}}{98\tau_{7}^{2}} + \frac{15 \tau_{5}^{4}}{2744 \tau_{7}^{3}}\right) +O (\e^{-k \tau_5/\tau_7^{5/7}}),\\
v =&\, \left(\frac{2}{7\tau_7}\right)^{2/7}y-\frac{\tau_5}{7\tau_7}+O (\e^{-k \tau_5/\tau_7^{5/7}}),
\end{align*}
as $\tau_7 \to 0$, where $h$ and $y$, defined in \eqref{eq:asympPhi}, are respectively distinguished solutions of the potential KdV equation \eqref{eq:pKdV} and of the $PI^{(2)}$ equation \eqref{eq:PI2}.
\end{lemma}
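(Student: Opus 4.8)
The plan is to read off both claimed expansions directly from the \emph{exact} expressions \eqref{eq:uvvsX} and \eqref{eq:vvsX} for $u$ and $v$, once we know that the entries of $Y^{(1)}$ are exponentially small in the present regime. Recall from \eqref{eq:Y1} that $\alpha = Y^{(1)}_{11}$ and $\gamma = Y^{(1)}_{12}$, and that \eqref{eq:uvvsX}, \eqref{eq:vvsX} are the identities
\begin{align*}
u &= -\frac{\gamma}{s_0^{1/7}} - \frac{h}{s_0^{1/7}} - \left(\frac{s_1^2 t_1}{4 s_0^{11/7}} - \frac{s_1 t_0}{s_0^{6/7}} + \frac{5 s_1^4}{64 s_0^3}\right),\\
v &= \frac{2\alpha}{s_0^{2/7}} - \frac{\gamma^2}{s_0^{2/7}} + \frac{y}{s_0^{2/7}} - \frac{s_1}{2s_0},
\end{align*}
which hold for all parameter values. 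Thus the entire asymptotic content is carried by the size of $\alpha$ and $\gamma$, and the bulk of the analytic work has already been done in Lemma \ref{lemma:SNcase1}.

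First I would invoke the consequence of Lemma \ref{lemma:SNcase1} recorded immediately above its statement: the estimates \eqref{eq:estnormFR} on $J_Y-I$, combined with the small-norm theorem applied to the Riemann--Hilbert problem \ref{RHP2}, give $Y^{(1)} = O(\e^{-k\tau_5/s_0^{5/7}})$, and hence $\alpha = O(\e^{-k\tau_5/s_0^{5/7}})$ and $\gamma = O(\e^{-k\tau_5/s_0^{5/7}})$ in the regime $\tau_5 \geq M s_0^{5/7}$. Consequently the three terms $\gamma/s_0^{1/7}$, $2\alpha/s_0^{2/7}$ and $\gamma^2/s_0^{2/7}$ above are all exponentially small (the quadratic term being of course even smaller). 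It then remains to rewrite the surviving terms in the $\tau$-variables via \eqref{eq:changeofvar}, using $s_0 = \tfrac{7}{2}\tau_7$ and $s_1 = \tau_5$: since $s_0^{-1/7} = (2/7\tau_7)^{1/7}$ and $s_0^{-2/7} = (2/7\tau_7)^{2/7}$, the terms $-h/s_0^{1/7}$ and $y/s_0^{2/7}$ produce the leading terms $-(2/7\tau_7)^{1/7}h$ and $(2/7\tau_7)^{2/7}y$, while $-s_1/(2s_0) = -\tau_5/(7\tau_7)$ gives the explicit term in $v$.

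For the explicit algebraic bracket in $u$, I would avoid recomputing it by brute force: comparing the two expressions \eqref{def:u} and \eqref{eq:uvvsX} for the \emph{same} function $u$, and using $\partial_{\tau_1}\log Q_\sigma = -(2/7\tau_7)^{1/7}\gamma$ from \eqref{eq:logderivativeQ}, forces the identity
$$
\frac{s_1^2 t_1}{4 s_0^{11/7}} - \frac{s_1 t_0}{s_0^{6/7}} + \frac{5 s_1^4}{64 s_0^3} = \frac{\tau_1\tau_5}{7\tau_7} - \frac{3\tau_3\tau_5^2}{98\tau_7^2} + \frac{15\tau_5^4}{2744\tau_7^3},
$$
which is precisely the bracket appearing in the claimed expansion (the same equality can alternatively be checked directly by substituting \eqref{eq:changeofvar}, i.e. it is the quantity $c_1$ of Proposition \ref{prop:propertiesX} rewritten in the $\tau$-variables). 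Finally, that $h$ and $y$ are the distinguished solutions of the potential KdV equation \eqref{eq:pKdV} and of the $PI^{(2)}$ equation \eqref{eq:PI2} is exactly the content of Proposition \ref{prop:eqPI2}.

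The single point that requires genuine care — the only thing beyond substitution — is the absorption of the subexponential prefactors $s_0^{-1/7}$ and $s_0^{-2/7}$ into the exponential error, that is, passing from $s_0^{-c}\e^{-k\tau_5/s_0^{5/7}}$ to $O(\e^{-k'\tau_5/s_0^{5/7}})$. This is legitimate at the cost of replacing $k$ by a strictly smaller constant $k'$, provided $\tau_5/s_0^{5/7}$ is large; in the stated limit $\tau_7 \to 0$ the ratio $\tau_5/\tau_7^{5/7}$ grows and the algebraic factors are harmless, so one should simply record that the decay constant $k$ in \eqref{vasymp1} may be smaller than the one furnished by Lemma \ref{lemma:SNcase1}. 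Everything else is bookkeeping through the change of variables \eqref{eq:changeofvar}.
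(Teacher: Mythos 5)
Your proposal is correct and follows essentially the same route as the paper: the authors likewise combine the exact identities \eqref{eq:uvsX} and \eqref{eq:vvsX} with the small-norm consequence $Y^{(1)}=O(\e^{-k\tau_5/s_0^{5/7}})$ of Lemma \ref{lemma:SNcase1} and the change of variables \eqref{eq:changeofvar}. You merely spell out details the paper leaves implicit (the identification of the algebraic bracket with $c_1$ and the absorption of the powers of $s_0$ into the exponential at the cost of shrinking $k$), both of which check out.
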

\begin{proof} We use equations \eqref{eq:uvsX} and \eqref{eq:vvsX}, together with the fact that $\alpha = Y_{11}^{(1)} = O (\e^{-k \tau_5/s_0^{5/7}})$ and the change of variables $(t_0, t_1, s_0, s_1) \mapsto (\tau_1, \tau_3, \tau_5, \tau_7)$, to obtain the claimed result.
\end{proof}

\section{The second asymptotic regime}
In this section, we assume as before that both Assumptions \ref{assumption1} and \ref{assumption2} hold. When convenient, as before, we use the variable $s_0 = \frac{7}2\tau_7$  instead of the latter. We analyze the case in which there exists a constant $M$ such that $|\tau_j | \leq M s_0^{j/7}$, for $j = 1,3,5$.

Let us set 
\begin{equation}\label{eq:defZ}
	Z(\xi) = s_0^{-\sigma_3/14} A^{-1} X(\xi s_0^{-2/7}).
\end{equation} 
It is straightforward to verify that this matrix-valued function solves the following Riemann-Hilbert problem:
\begin{problem}\label{RHPZ}\hfill
\begin{enumerate}
\item $Z(\xi)$ is analytic on $\mathbb{C}\setminus \mathbb R$, and has continuous boundary values $Z_\pm$ on the real line, except for the points $s_0^{2/7}r_1, \cdots, s_0^{2/7}r_k$, satisfying the jump relation
\begin{equation}\label{eq:5.2}
Z_+(\xi)=
Z_-(\xi)\begin{pmatrix}
1 & 1-\sigma(s_0^{-2/7}\xi) \\ 0 & 1
\end{pmatrix},  \xi\in \mathbb R.
\end{equation}
\item As $\xi\to \infty$,
\begin{equation}\label{eq:asympZ}
\begin{split}Z(\xi) &=  \xi^{-\frac{\sigma_3}{4}}N\left[I+\sum_{j\geq 1}\dfrac{Z^{(j)}}{\xi^{j/2}}\right]\e^{-\left(\frac{2}{7}\xi^{\frac{7}{2}}+\kappa_0\xi^{\frac{5}{2}}+\kappa_1\xi^{\frac{3}{2}}+\kappa_2\xi^{\frac{1}{2}}\right)\sigma_3}\\
&\hspace{5cm}
\times \left\lbrace \begin{array}{cc}
I \hspace{0.3cm}&  |\arg \xi|<\pi-\epsilon\\
\begin{pmatrix}
1 & 0 \\ \pm 1 & 1
\end{pmatrix}  \hspace{0.3cm}& \pi-\epsilon<\pm \arg \xi<\pi  
\end{array}\right.,\end{split}
\end{equation}
where $\kappa_0 = {\tau_5}/{s_0^{5/7}}$, $\kappa_1 = {\tau_3}/{s_0^{3/7}}$ and $\kappa_2 = {\tau_1}/{s_0^{1/7}}$.
\end{enumerate}
\end{problem}

Notice that $\kappa_0, \kappa_1, \kappa_2$ are bounded because of our assumptions on the variables $\{\tau_j, \; j = 1,3,5\}$. Thanks to Assumptions \ref{assumption2}, as $s_0 \to 0$ the function $\sigma(s_0^{-2/7}\xi)$ converges pointwise to the indicator function $\iota \chi_{[0,\infty)}$. This suggests to use, as global parametrix away from the origin, the solution $\tilde Z$ of the Riemann-Hilbert problem above, with the jump condition replaced by
\begin{equation}\label{eq:5.2bis}
\tilde{Z}_+(\xi)= 
\tilde{Z}_-(\xi)
\begin{pmatrix}
1 & 1 -\iota \chi_{(0, \infty)}(\xi)\\ 0 & 1
\end{pmatrix}, \qquad \xi \in \mathbb{R}.
\end{equation}
For $\iota = 1$, the Riemann-Hilbert problem \ref{RHPZ} is just a small modification of a Riemann-Hilbert problem considered in \cite{CIK}. Its precise formula will be discussed in the next subsection. For now, the important feature of its solution $\tilde Z$ relies on the fact that, as long as our parameters are bounded, so is $\tilde Z$ for $\xi$ bounded. 

We set the global parametrix to be $\tilde{Z}$ outside the unit circle and, for $|\xi|< 1$, we define a local parametrix by
$$P(\xi) := \tilde{Z}(\xi) \begin{pmatrix}
1&a(\xi) \\ 0& 1
\end{pmatrix},
$$
where, 
$$a(\xi)= \dfrac{1}{2 \pi i}\left(\int_{-\infty}^{0}\dfrac{-\sigma(\xi' s_0^{-2/7})}{\xi'-\xi} \dd \xi'+\int_{0}^{\infty}\dfrac{\iota-\sigma(\xi' s_0^{-2/7})}{\xi'-\xi} \dd \xi' \right), \quad \xi \notin \mathbb{R}.
$$

We now define, for $\xi \notin \mathbb R, \; |\xi| \neq 1$, a new matrix valued function $R(\xi)$ by the equations
\begin{equation}\label{eq:defR}
	R(\xi) := \begin{cases}
			\tilde{A}Z(\xi) \tilde{Z}(\xi)^{-1}, &|\xi|>1, \\
			\\
			\tilde{A}Z(\xi) P(\xi)^{-1}, &|\xi|<1,
		\end{cases}
\end{equation}
	where
$$
	\tilde{A} = \begin{pmatrix}
				1 &0 \\ -\tilde{a} & 1
			\end{pmatrix},
$$
with $\tilde{a} := (-a_{11}+\i a_{12}+\i a_{21}+a_{22})/2$ and $a_{jk} := [Z^{(1)}-\tilde{Z}^{(1)}]_{jk}$.

\begin{lemma}
The matrix-valued function $R(\xi)$, as defined in equation \eqref{eq:defR}, satisfies the Riemann-Hilbert problem \ref{RHPR} detailed below.
\end{lemma}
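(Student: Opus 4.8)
The plan is to verify directly, one by one, the defining properties of the Riemann-Hilbert problem \ref{RHPR}: sectional analyticity of $R$, the explicit form and smallness of its jumps on $\mathbb R \cup \{|\xi| = 1\}$, and its normalization $R \to I$ at infinity. Sectional analyticity off $\mathbb R \cup \{|\xi|=1\}$ is immediate from \eqref{eq:defR}, since on each connected component $R$ is a product of the sectionally analytic matrices $Z$, $\tilde Z$ (resp.\ $P$) and the constant invertible matrix $\tilde A$.

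The first genuine computation is to show that $R$ has no jump across $\mathbb R \cap \{|\xi|<1\}$. For this I would compute the jump of the local parametrix $P = \tilde Z\begin{pmatrix}1 & a\\0 & 1\end{pmatrix}$. By the Sokhotski--Plemelj formula the additive jump of the Cauchy transform $a$ is $a_+ - a_- = -\sigma(s_0^{-2/7}\xi)$ for $\xi<0$ and $a_+ - a_- = \iota - \sigma(s_0^{-2/7}\xi)$ for $\xi>0$. Since $\tilde Z$ carries the jump \eqref{eq:5.2bis} and the three upper-triangular unipotent factors multiply by simply adding their off-diagonal entries, one finds $P_-^{-1}P_+ = \begin{pmatrix}1 & (1-\iota\chi_{(0,\infty)}(\xi)) + (a_+ - a_-)\\0 & 1\end{pmatrix} = \begin{pmatrix}1 & 1-\sigma(s_0^{-2/7}\xi)\\0 & 1\end{pmatrix}$, which is exactly the jump \eqref{eq:5.2} of $Z$. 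Hence $Z$ and $P$ share the same jump inside the disk, so $R = \tilde A Z P^{-1}$ extends analytically across $\mathbb R\cap\{|\xi|<1\}$; the same cancellation makes the isolated points $s_0^{2/7}r_j$ (which lie inside the disk for $s_0$ small) removable, as $Z$ and $P$ have identical local structure there.

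Next I would compute the two remaining jumps and show they are small. On $\mathbb R\cap\{|\xi|>1\}$ one has $R_+R_-^{-1} = \tilde A Z_-\big(J_Z J_{\tilde Z}^{-1}\big)Z_-^{-1}\tilde A^{-1}$, a conjugation of $I + \begin{pmatrix}0 & \iota\chi_{(0,\infty)}(\xi)-\sigma(s_0^{-2/7}\xi)\\0 & 0\end{pmatrix}$; by Assumption \ref{assumption2} the off-diagonal entry is $O\!\left(\e^{-k_2 s_0^{-6/7}|\xi|^3}\right)$, so after controlling the conjugating factors (exactly as the products $\sigma\phi_i\phi_j$ were controlled in Lemma \ref{lemma:SNcase1}) this jump is exponentially small, uniformly as $s_0\to 0$. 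On the unit circle the jump equals $\tilde A Z\begin{pmatrix}1 & -a\\0 & 1\end{pmatrix}Z^{-1}\tilde A^{-1}$, and since $Z,\tilde Z$ are bounded there while $a(\xi)$ is the Cauchy transform of a density concentrated on a shrinking neighbourhood of the origin of effective width $O(s_0^{2/7})$, one gets $a=O(s_0^{2/7})$ and hence an algebraically small jump.

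Finally, I would check $R\to I$ at infinity. Inserting the expansions \eqref{eq:asympZ} for $Z$ and $\tilde Z$ into $R=\tilde A Z\tilde Z^{-1}$, the common factors $\xi^{-\sigma_3/4}N$ and $\e^{-(\frac27\xi^{7/2}+\kappa_0\xi^{5/2}+\kappa_1\xi^{3/2}+\kappa_2\xi^{1/2})\sigma_3}$ cancel, and the leading correction is $\xi^{-1/2}\,\xi^{-\sigma_3/4}N(Z^{(1)}-\tilde Z^{(1)})N^{-1}\xi^{\sigma_3/4}$; because conjugation by $\xi^{\sigma_3/4}$ multiplies the lower-left entry by $\xi^{1/2}$, this produces an $O(1)$ strictly lower-triangular contribution. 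The constant $\tilde A=\begin{pmatrix}1 & 0\\-\tilde a & 1\end{pmatrix}$, with $\tilde a=(-a_{11}+\i a_{12}+\i a_{21}+a_{22})/2$ and $a_{jk}=[Z^{(1)}-\tilde Z^{(1)}]_{jk}$, is chosen precisely to annihilate that $O(1)$ term, leaving $R\to I$ (indeed $R=I+O(\xi^{-1/2})$). The main obstacle is the estimate on $\mathbb R\cap\{|\xi|>1\}$: one must verify that the super-exponential decay $\e^{-k_2 s_0^{-6/7}|\xi|^3}$ of the density dominates the growth of the exponential factors hidden in the conjugation by $Z_-$, uniformly in $s_0$, which is exactly where Assumption \ref{assumption2} and the companion estimates of Lemma \ref{lemma:SNcase1} become essential.
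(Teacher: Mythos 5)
Your overall strategy -- direct verification of sectional analyticity, the Plemelj formula $a_+-a_-=\iota\chi_{(0,\infty)}-\sigma(\cdot\, s_0^{-2/7})$ to cancel the jump of $Z$ against that of $P$ on $(-1,1)$, and the choice of the constant factor $\tilde A$ to normalize $R$ at infinity -- is the same as the paper's. But the verification of the asymptotic condition is incomplete: Problem \ref{RHPR} requires $R(\xi)=I+O(\xi^{-1})$, whereas your argument only yields $R=I+O(\xi^{-1/2})$, as you yourself state. After conjugation by $\xi^{-\sigma_3/4}N$, the term $\xi^{-1/2}(Z^{(1)}-\tilde Z^{(1)})$ contributes, besides the $O(1)$ lower-left entry annihilated by $\tilde A$, diagonal entries of order $\xi^{-1/2}$; and the lower-left entry of the order-$\xi^{-1}$ term in the bracket is likewise promoted to order $\xi^{-1/2}$. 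The paper kills the entire order-$\xi^{-1/2}$ matrix using the symmetries $a_{11}=-a_{22}$, $a_{12}=a_{21}$, $b_{11}=b_{22}$, $b_{12}=-b_{21}$ inherited from the structure of the expansions (cf.\ Proposition \ref{prop:propertiesX} and \eqref{eq:symmetriesPhi}); without this step the stated decay is not established.

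Two further points. You compute the jumps as $R_+R_-^{-1}$, which produces conjugation by the \emph{unknown} $Z_-$ (resp.\ $Z$ on the circle), whereas the jump matrix of Problem \ref{RHPR} is $J_R=R_-^{-1}R_+=\tilde Z_- J_Z J_{\tilde Z}^{-1}\tilde Z_-^{-1}$ on $\mathbb R\setminus[-1,1]$ and $\tilde Z P^{-1}$ on the circle, i.e.\ conjugation by the explicit parametrix. This matters: the ``main obstacle'' you flag at the end -- controlling the exponential factors hidden in $Z_-$ -- is an artifact of this convention and would be circular, since $Z$ is the object being estimated; in the correct form the conjugating factor is explicit and the estimates (which in the paper are deferred to the next lemma) are routine. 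Finally, removability at the points $s_0^{2/7}r_j$ is asserted rather than proved; since $a$ and $Z$ both have genuine logarithmic singularities there, one needs the paper's argument of peeling off the factor $\frac{\ell_j}{2\pi\i}\log(s_0^{-2/7}\xi-r_j)$ from $Z$ and checking that the residual jump is continuous, or some equivalent local analysis.
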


\begin{problem}\label{RHPR}\hfill
\begin{enumerate}
	\item $R(\xi)$ is analytic for every $\xi \in \mathbb{C}\setminus \Gamma$, where $\Gamma = \{|\xi|=1\} \cup \mathbb{R} \setminus [-1,1]$ as depicted in the Figure \ref{contour:errorcase2}, and has continuous boundary values 	$R_\pm$ satisfying the jump relation
$$
R_+(\xi)=
R_-(\xi)\times \left\lbrace \begin{array}{cc} \tilde{Z}(\xi) P^{-1}(\xi) \hspace{0.3cm}& |\xi|=1 \\ \tilde{Z}_-(\xi) \begin{pmatrix}
1 & \iota\chi_{(0, \infty)}(\xi)-\sigma(\xi s_0^{-2/7}) \\0 & 1 \end{pmatrix}\tilde{Z}_-^{-1}(\xi) \hspace{0.3cm}& \xi \in \mathbb{R}/[-1, 1] \end{array}\right..
$$
	\item As $\xi\to \infty$, with $| \arg(\xi) | < \pi$,
	\begin{align*}
		R(\xi) = I+O(\xi^{-1}).
	\end{align*}
\end{enumerate}
\end{problem}
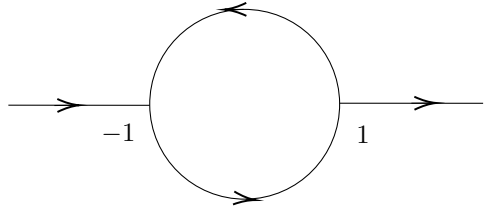
\begin{figure}[H] 
\centering
\begin{tikzpicture}[x=0.75pt,y=0.75pt,yscale=-1,xscale=1]
%uncomment if require: \path (0,300); %set diagram left start at 0, and has height of 300

%Shape: Ellipse [id:dp004396177296317472] 
\draw   (255.66,127.76) .. controls (282.1,127.47) and (303.78,148.67) .. (304.07,175.12) .. controls (304.36,201.57) and (283.16,223.24) .. (256.71,223.53) .. controls (230.26,223.82) and (208.59,202.62) .. (208.3,176.17) .. controls (208.01,149.73) and (229.21,128.05) .. (255.66,127.76) -- cycle ;
%Straight Lines [id:da6132329450307941] 
\draw    (304.07,175.12) -- (376.44,175.15) ;
%Straight Lines [id:da011847808413150807] 
\draw    (137,176.11) -- (208.3,176.17) ;
%Straight Lines [id:da48582218758059925] 
\draw    (255.66,127.76) -- (248.08,127.84) ;
\draw [shift={(246.08,127.86)}, rotate = 359.37] [color={rgb, 255:red, 0; green, 0; blue, 0 }  ][line width=0.75]    (10.93,-3.29) .. controls (6.95,-1.4) and (3.31,-0.3) .. (0,0) .. controls (3.31,0.3) and (6.95,1.4) .. (10.93,3.29)   ;
%Straight Lines [id:da9846107534704489] 
\draw    (160.2,176.28) -- (170.65,176.16) ;
\draw [shift={(172.65,176.14)}, rotate = 179.37] [color={rgb, 255:red, 0; green, 0; blue, 0 }  ][line width=0.75]    (10.93,-3.29) .. controls (6.95,-1.4) and (3.31,-0.3) .. (0,0) .. controls (3.31,0.3) and (6.95,1.4) .. (10.93,3.29)   ;
%Straight Lines [id:da7023572712222188] 
\draw    (340.26,175.14) -- (350.71,175.02) ;
\draw [shift={(352.71,175)}, rotate = 179.37] [color={rgb, 255:red, 0; green, 0; blue, 0 }  ][line width=0.75]    (10.93,-3.29) .. controls (6.95,-1.4) and (3.31,-0.3) .. (0,0) .. controls (3.31,0.3) and (6.95,1.4) .. (10.93,3.29)   ;
%Straight Lines [id:da5184763684384268] 
\draw    (256.71,223.53) -- (259.52,223.81) ;
\draw [shift={(261.51,224)}, rotate = 185.58] [color={rgb, 255:red, 0; green, 0; blue, 0 }  ][line width=0.75]    (10.93,-3.29) .. controls (6.95,-1.4) and (3.31,-0.3) .. (0,0) .. controls (3.31,0.3) and (6.95,1.4) .. (10.93,3.29)   ;

% Text Node
\draw (311.39,185.09) node [anchor=north west][inner sep=0.75pt]    {$1$};
% Text Node
\draw (182.64,184.25) node [anchor=north west][inner sep=0.75pt]    {$-1$};

\end{tikzpicture}

\caption{Contour for the error Riemann-Hilbert problem.}\label{contour:errorcase2} 
\end{figure}

\textbf{Proof:} One can take $s_0$ to be small enough such that $s_0^{2/7}r_j \in (-1,1)$ for all $j \in \{1, \cdots, k\}$. Therefore, the analyticity for every $\xi \not\in \{|\xi|=1\} \cup \mathbb{R} \backslash [-1,1]$ follows by construction. For $\xi \in (-1,1)\backslash\{s_0^{2/7}r_1, \cdots, s_0^{2/7}r_k\}$,we have
\begin{align*}
R_+(\xi) &= \tilde{A} Z_+(\xi) P_+(\xi)^{-1} \\
&= \tilde{A} Z_-(\xi) \begin{pmatrix}
1 & 1-\sigma(\xi s_0^{2/7})\\ 0 & 1
\end{pmatrix}\begin{pmatrix}
1 & -a_+(\xi)\\ 0 & 1
\end{pmatrix}\begin{pmatrix}
1 & \iota \chi_{[0,\infty)}-1\\ 0 & 1
\end{pmatrix}\begin{pmatrix}
1 & a_-(\xi)\\ 0 & 1
\end{pmatrix} P_-(\xi)^{-1} \\
&= \tilde{A} Z_-(\xi) \begin{pmatrix}
1 & (a_-(\xi)-a_+(\xi))-\sigma(\xi s_0^{2/7})+\iota \chi_{[0,\infty)}\\ 0 & 1
\end{pmatrix} P_-(\xi)^{-1}. 
\end{align*}
By definition, $a(\xi)$ is the Cauchy transform of $(\iota\chi_{(0, \infty)}(\xi)-\sigma(\xi s_0^{-2/7}))$. Therefore, by the Plemelj formula $$a_+(\xi)-a_-(\xi) = \iota\chi_{(0, \infty)}(\xi)-\sigma(\xi s_0^{-2/7}).$$ Consequently, 
\begin{align*}
R_+(\xi) &= \tilde{A} Z_-(\xi) P_-(\xi)^{-1} = R_-(\xi), 
\end{align*}
and $R(\xi)$ is analytic for $\xi \in (-1,1)\backslash\{s_0^{2/7}r_1, \cdots, s_0^{2/7}r_k\}$. We need to be more careful around the points $s_0^{2/7}r_j$. Set $\ell_j = \lim_{\epsilon \downarrow 0}(\sigma(r_j+\epsilon)-\sigma(r_j-\epsilon))$. The limit exists by Assumption \ref{assumption1}. In a neighborhood of $s_0^{2/7}r_j$ we represent $Z(\xi)$ as follows
$$Z(\xi) = E_j(\xi) \begin{pmatrix}
1 & \frac{\ell_j}{2\pi i} \log(s_0^{-2/7}\xi-r_j) \\ 0 &1
\end{pmatrix},
$$
where we take the principal branch of the logarithm. Notice that
\begin{align*}
E_{j,+}(\xi) &= Z_+(\xi) \begin{pmatrix}
1 & -\frac{\ell_j}{2\pi i} \log(s_0^{-2/7}\xi-r_j)|_+ \\ 0 &1
\end{pmatrix}\\
&=E_{j,-}(\xi)\begin{pmatrix}
1 & \frac{\ell_j}{2\pi i} \log(s_0^{-2/7}\xi-r_j)|_- \\ 0 &1
\end{pmatrix} \begin{pmatrix}
1 & 1-\sigma(s_0^{-2/7}\xi) \\ 0 & 1
\end{pmatrix} \begin{pmatrix}
1 & -\frac{\ell_j}{2\pi i} \log(s_0^{-2/7}\xi-r_j)|_+ \\ 0 &1
\end{pmatrix}\\
&=E_{j,-}(\xi)\begin{pmatrix}
1 & 1-(\sigma(s_0^{-2/7}\xi)+\ell_j \chi_{(-1, s_0^{2/7}r_j)}(\xi)) \\ 0 &1
\end{pmatrix}.
\end{align*}
Therefore, the jump of $E_j$ is continuous on $s_0^{2/7}r_j$. With this representation, we can now evaluate the jump of $R(\xi)$ in the neighbourhood of such points as follows
\begin{align*}
R_+(\xi) &= \tilde{A}Z_{+}(\xi)P_{+}(\xi)^{-1} \\
&= \tilde{A}E_{j,+}(\xi) \begin{pmatrix}
1 & \frac{\ell_j}{2\pi i} \log(s_0^{-2/7}\xi-r_j)|_+ \\ 0 &1
\end{pmatrix}P_{+}(\xi)^{-1} \\
&= \tilde{A}E_{j,-}(\xi) \begin{pmatrix}
1 & \frac{\ell_j}{2\pi i} \log(s_0^{-2/7}\xi-r_j)|_- \\ 0 &1
\end{pmatrix}\begin{pmatrix}
1 & \star \\ 0 &1
\end{pmatrix}P_{-}(\xi)^{-1}\\
\star &= -\frac{\ell_j}{2\pi i} \log(s_0^{-2/7}\xi-r_j)|_-+\frac{\ell_j}{2\pi i} \log(s_0^{-2/7}\xi-r_j)|_+ -\ell_j-\sigma(\xi s_0^{2/7})+a_-(\xi)-a_+(\xi)+\iota \chi_{(0, \infty)}(\xi).
\end{align*}
Since $a_+(\xi)-a_-(\xi) = \iota\chi_{(0, \infty)}(\xi)-\sigma(\xi s_0^{-2/7})$ and $\frac{\ell_j}{2\pi i} \log(s_0^{-2/7}\xi-r_j)|_+ -\frac{\ell_j}{2\pi i} \log(s_0^{-2/7}\xi-r_j)|_- = \ell_j$ we conclude that $\star = 0$, and
\begin{align*}
R_+(\xi) &= \tilde{A}E_{j,-}(\xi) \begin{pmatrix}
1 & \frac{\ell_j}{2\pi i} \log(s_0^{-2/7}\xi-r_j)|_- \\ 0 &1
\end{pmatrix}P_{-}(\xi)^{-1} = R_-(\xi).
\end{align*}
Consequently, $R(\xi)$ is analytic for $\xi \in (-1,1)$. For $\xi \in \mathbb{R}/[-1,1]$,
\begin{align*}
R_+(\xi) =& \tilde{A}Z_+(\xi) \tilde{Z}^{-1} _+(\xi) = \tilde{A}Z_-(\xi) \tilde{Z}_-(\xi)^{-1}\tilde{Z}_-(\xi)J_Z(\xi) J_{\tilde{Z}}^{-1}(\xi)\tilde{Z}^{-1}_-(\xi)\\
=& R_-(\xi)\tilde{Z}_-(\xi)J_Z(\xi) J_{\tilde{Z}}^{-1}(\xi)\tilde{Z}^{-1}_-(\xi),
\end{align*}
which gives the correct expression for the jump. For $|\xi|=1$,
\begin{align*}
R_+(\xi) =& \tilde{A}Z(\xi)P^{-1}(\xi)\\
=& \tilde{A}Z(\xi)\tilde{Z}^{-1}(\xi)\tilde{Z}(\xi)P^{-1}(\xi) = R_-(\xi)\tilde{Z}(\xi)P^{-1}(\xi),
\end{align*}
and $J_R = \tilde{Z}(\xi)P(\xi)^{-1}$. 
For the asymptotic condition, notice that as $\xi \to \infty$, $|\arg \xi|<\pi$,
\begin{align*}
Z(\xi)\tilde{Z}(\xi)^{-1} &=  \xi^{-\frac{\sigma_3}{4}}N\left[I+\dfrac{Z^{(1)}}{\xi^{1/2}}+\dfrac{Z^{(2)}}{\xi}+O(\xi^{-3/2})\right]\left[I-\dfrac{\tilde{Z}^{(1)}}{\xi^{1/2}}+\dfrac{\tilde{Z}^{(-2)}}{\xi}+O(\xi^{-3/2})\right]N^{-1}\xi^{\frac{\sigma_3}{4}}\\
&=  \xi^{-\frac{\sigma_3}{4}}N\left[I+\dfrac{Z^{(1)}-\tilde{Z}^{(1)}}{\xi^{1/2}}+\dfrac{Z^{(2)}+\tilde{Z}^{(-2)}-Z^{(1)}\tilde{Z}^{(1)}}{\xi}+O(\xi^{-3/2})\right]N^{-1}\xi^{\frac{\sigma_3}{4}},
\end{align*}
where $\tilde{Z}^{(-2)} = -\tilde{Z}^{(1)}\tilde{Z}^{(-1)}-\tilde{Z}^{(2)}$\footnote{The expansion follows from the fact that $\sum_{j\geq 0} \tilde{Z}^{(j)}/\xi^{j/2}\left(\sum_{j\geq 0} \tilde{Z}^{(j)}/\xi^{j/2}\right)^{-1} = I$. Then, denoting $\left(\sum_{j\geq 0} \tilde{Z}^{(j)}/\xi^{j/2}\right)^{-1}=:\sum_{j\geq 0} \tilde{Z}^{(-j)}/\xi^{j/2}$, expanding and collecting the terms of same order, one has $\tilde{Z}^{(-1)}=-\tilde{Z}^{(1)}$ and $\tilde{Z}^{(-2)}=-\tilde{Z}^{(-1)}\tilde{Z}^{(1)}-\tilde{Z}^{(2)}$.}. This gives
\begin{align*}
Z(\xi)\tilde{Z}(\xi)^{-1} &=  \begin{pmatrix}
1 & 0 \\ \frac{-a_{11}+\i a_{12}+\i a_{21}+a_{22}}{2} & 1
\end{pmatrix}+ \frac{1}{\xi^{1/2}}\begin{pmatrix}
\frac{a_{11}-\i a_{12}+\i a_{21}+a_{22}}{2} &0 \\
\frac{-b_{11}+ib_{12}+\i b_{21}+b_{22}}{2} & \frac{a_{11}+\i a_{12}-\i a_{21}+a_{22}}{2}
\end{pmatrix}+O(\xi^{-1}),
\end{align*}
where $a_{jk} = [Z^{(1)}-\tilde{Z}^{(1)}]_{jk}$ and $b_{jk} = [Z^{(2)}-\tilde{Z}^{(2)}- Z^{(1)}\tilde{Z}^{(1)}-\tilde{Z}^{(1)}\tilde{Z}^{(-1)}]_{jk}$. The symmetry in the asymptotic expansions implies that $a_{11}=-a_{22}$, $a_{12}=a_{21}$, $b_{11}=b_{22}$ and $b_{12}=-b_{21}$. 
%{Do we have to cite where does these symmetries come from? For $Z$ it comes from the symmetries of $X$, and for $\tilde{Z}$, from the symmetries of the asymptotic expansion of the P34 RHP, since this is our approximated solution. More precisely, apart from a change in variables, $\tilde{Z}^{(j)} \sim \Xi^{(j)} = \e^{\pi i \sigma_3/4}\Psi^{(j)}\e^{-\pi i \sigma_3/4}$, and the condition $\Psi(\zeta) = \sigma_1\Psi(-\zeta)\sigma_1 = \sigma_1\overline{\Psi(\bar{\zeta})}\sigma_1$ revels some symmetries of $\Psi^{(j)}$ themselves.} 
The matrix of order $\xi^{1/2}$ is then equal to zero and we are left with
\begin{align*}
Z(\xi)\tilde{Z}(\xi)^{-1} &= \begin{pmatrix}
1 &0 \\ \tilde{a} & 1
\end{pmatrix}+ O(\xi^{-1}),
\end{align*}
where $\tilde{a} = (-a_{11}+\i a_{12}+\i a_{21}+a_{22})/2$. Therefore,
\begin{align*}
R(\xi) =&\begin{pmatrix}
1 &0 \\ -\tilde{a} & 1
\end{pmatrix} Z(\xi)\tilde{Z}(\xi)^{-1} = I+ O(\xi^{-1}),
\end{align*}
as claimed.

 \qed

\begin{lemma}
Let $|\tau_j| \leq M s_0^{j/7}, \; j = 1,3,5$ for $M>0$ some real constant large enough. Then, there exists $\epsilon >0$ such that
\begin{equation}\label{eq:estnormJR}
	\norm{J_R-I}_{L^1(\Gamma)} = O(s_0^{2/7}), \quad \norm{J_R-I}_{L^2(\Gamma)} = O(s_0^{2/7}), \quad \norm{J_R-I}_{L^\infty(\Gamma)} = O(s_0^{2/7})
\end{equation}
uniformly for $s_0 \in (0, \epsilon)$. In the equation above, $\norm{J_R-I}_{L^a(\Gamma)}, \; a = 1,2,\infty$, denotes the maximum of the entry-wise corresponding $L^a(\Gamma)$-norm.
\end{lemma}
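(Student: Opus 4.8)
The plan is to split the contour $\Gamma = \{|\xi|=1\}\cup(\mathbb{R}\setminus[-1,1])$ and estimate $J_R-I$ separately on the unit circle and on the two half-lines, the genuine $O(s_0^{2/7})$ coming entirely from the circle while the half-line contribution turns out to be super-exponentially small. On the circle, substituting $P(\xi)=\tilde{Z}(\xi)\begin{pmatrix}1 & a(\xi)\\ 0 & 1\end{pmatrix}$ into $J_R=\tilde{Z}P^{-1}$ gives
\[
J_R(\xi)-I=\tilde{Z}(\xi)\begin{pmatrix}0 & -a(\xi)\\ 0 & 0\end{pmatrix}\tilde{Z}(\xi)^{-1},\qquad |\xi|=1,
\]
so the circle estimate reduces to bounding the scalar $a(\xi)$, using that $\tilde{Z},\tilde{Z}^{-1}$ are uniformly bounded on the compact circle whenever $\kappa_0,\kappa_1,\kappa_2$ are bounded --- which holds under $|\tau_j|\le M s_0^{j/7}$ and follows from the explicit form of $\tilde{Z}$ set up in the next subsection. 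On the half-lines I would instead conjugate the nilpotent jump by the asymptotics of $\tilde{Z}_-$: writing $\tilde{Z}_-=M(\xi)\e^{-\vartheta(\xi)\sigma_3}$ with $\vartheta(\xi)=\tfrac{2}{7}\xi^{7/2}+\kappa_0\xi^{5/2}+\kappa_1\xi^{3/2}+\kappa_2\xi^{1/2}$ and $M$ only algebraically large, the conjugation produces the effective scalar $g(\xi s_0^{-2/7})\,\e^{-2\vartheta(\xi)}$, where $g(y):=\iota\chi_{(0,\infty)}(y)-\sigma(y)$.

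For the circle I would first record that the numerator in the definition of $a(\xi)$ equals $g(\xi' s_0^{-2/7})$ on both half-lines, so the substitution $y=\xi' s_0^{-2/7}$ collapses $a$ into a single Cauchy integral in which the Jacobian $s_0^{2/7}$ cancels:
\[
a(\xi)=\frac{1}{2\pi\i}\int_{\mathbb{R}}\frac{g(y)}{y-\xi s_0^{-2/7}}\,\dd y .
\]
Thus $a(\xi)$ is the Cauchy transform of $g$ evaluated at the \emph{large} point $z=\xi s_0^{-2/7}$, with $|z|=s_0^{-2/7}$. By Assumption \ref{assumption2}, $|g(y)|\le k_1\e^{-k_2|y|^3}$, so $g\in L^1(\mathbb{R})$ with an $s_0$-independent norm. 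Splitting at $|y|=\tfrac12 s_0^{-2/7}$: on $|y|\le\tfrac12 s_0^{-2/7}$ one has $|y-z|\ge\tfrac12 s_0^{-2/7}$, whence this part is bounded by $\tfrac{1}{\pi}s_0^{2/7}\int_{\mathbb{R}}|g|=O(s_0^{2/7})$; on $|y|>\tfrac12 s_0^{-2/7}$ the estimate $|g(y)|\le k_1\e^{-(k_2/8)s_0^{-6/7}}$ makes the contribution super-exponentially small, interpreting $a$ via its boundary values (a principal value) at the two corner points $\xi=\pm1$, where $z$ is real. Hence $a(\xi)=O(s_0^{2/7})$ uniformly on the circle, so the circle part of $J_R-I$ is $O(s_0^{2/7})$ in $L^\infty$, and therefore also in $L^1$ and $L^2$ since the circle has finite length.

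On $\mathbb{R}\setminus[-1,1]$ I would use $|g(\xi s_0^{-2/7})|\le k_1\e^{-k_2 s_0^{-6/7}|\xi|^3}$, already super-exponentially small in $s_0$ for $|\xi|\ge1$, and check that conjugation does not spoil it. For $\xi>1$ the phase $\vartheta(\xi)$ is real with $\inf_{\xi\ge1}\vartheta>-\infty$, so $\e^{-2\vartheta(\xi)}$ is bounded above there; for $\xi<-1$ the principal branch gives $\vartheta(\xi)\in\i\mathbb{R}$, so $|\e^{\pm2\vartheta(\xi)}|=1$ and the bounded lower-triangular factor in the asymptotics of $\tilde{Z}_-$ contributes only bounded quantities. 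In both cases the prefactor from $M,M^{-1}$ grows at most like $|\xi|^{1/2}$, dominated by the factor in $g$. Using $k_2 s_0^{-6/7}|\xi|^3\ge k_2 s_0^{-6/7}+k_2(|\xi|^3-1)$ for $|\xi|\ge1$ and $s_0<1$, the three norms over the half-lines are $O(\e^{-k_2 s_0^{-6/7}})$. Combining with the circle bound yields $\norm{J_R-I}_{L^a(\Gamma)}=O(s_0^{2/7})$ for $a=1,2,\infty$.

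The main obstacle is the uniform control of $a(\xi)$ on the circle near the corners $\xi=\pm1$: there $z=\xi s_0^{-2/7}$ lands on the real axis, where the kernel $1/(y-z)$ is singular, and one must argue that this singularity sits in the region $|y|\approx s_0^{-2/7}$ where $g$ is super-exponentially small, so that the $O(s_0^{2/7})$ leading behaviour produced by $|y|=O(1)$ (far from $z$) survives. A secondary, more bookkeeping point is that both the uniform boundedness of $\tilde{Z}$ on the circle and the precise real/imaginary structure of the phase $\vartheta$ on the half-lines rely on the explicit model solution $\tilde{Z}$ borrowed from \cite{CIK}, which is constructed in the following subsection.
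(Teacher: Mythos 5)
Your overall architecture is the same as the paper's: split $\Gamma$ into the unit circle and the two half-lines, reduce the circle estimate to a bound on the scalar $a(\xi)$ using the boundedness of $\tilde Z$, rescale the Cauchy integral by $y=\xi's_0^{-2/7}$, and beat the super-exponential decay of $\iota\chi_{(0,\infty)}-\sigma$ against the conjugation by $\tilde Z_-$ on $\mathbb R\setminus[-1,1]$. The half-line part and the part of the circle with $|\Im\xi|$ bounded below are handled correctly and essentially as in the paper.

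The gap is exactly at the point you flag as "the main obstacle", and your proposed resolution does not close it. Near $\xi=\pm1$ the singularity of the Cauchy kernel sits at $y\approx \pm s_0^{-2/7}$, inside your region $|y|>\tfrac12 s_0^{-2/7}$, and there the only information you use is the sup bound $|g(y)|\leq k_1\e^{-(k_2/8)s_0^{-6/7}}$. That is not enough: the estimate it yields is
$\sup|g|\cdot\int_{|y-\Re z|\leq L}\frac{\dd y}{|y-z|}=O\bigl(\e^{-c s_0^{-6/7}}\log(1/|\Im \xi|)\bigr)$,
which is not uniform as $\xi\to\pm1$ along the circle, and invoking a principal-value interpretation does not help, since bounding a principal value requires a modulus of continuity of $g$ at the singular point, not just smallness of $|g|$. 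The paper closes this by writing $g(y)=\bigl(\iota-\sigma(s_0^{-2/7})\bigr)+\bigl(\sigma(s_0^{-2/7})-\sigma(y)\bigr)$ on the interval $K=[s_0^{-2/7}/2,3s_0^{-2/7}/2]$: the constant piece produces a bounded logarithm times a super-exponentially small factor, while the difference piece is controlled by the mean value theorem together with the \emph{second} condition of Assumption \ref{assumption2}, $|\sigma'(z)|\leq k_3|z|^{-2}$ for $|z|>K$, giving $|K|\cdot\sup_K|\sigma'|=O(s_0^{2/7})$. Note in particular that this near-corner contribution is $O(s_0^{2/7})$, i.e.\ of the same order as the main term — not super-exponentially small as you assert — and that the derivative hypothesis on $\sigma$, which you never invoke, is precisely what the lemma's proof needs at this step. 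Supplying this add-and-subtract argument with the bound on $\sigma'$ would complete your proof.
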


\textbf{Proof:} Notice that for $\xi \in \mathbb{R}/[-1,1]$,
\begin{align*}
\norm{J_R(\xi)-I} = & \norm{(\iota\chi_{(0, \infty)}(\xi)-\sigma(\xi s_0^{-2/7})) \tilde{Z}_-(\xi)E_{12}\tilde{Z}_-^{-1}(\xi)} \\
= &\, O(\e^{-c_2s_0^{-2/7}|\xi|^3-\frac{4}{7}\xi^{\frac{7}{2}}+2\kappa_0\xi^{\frac{5}{2}}-2\kappa_1\xi^{\frac{3}{2}}+2\kappa_2\xi^{\frac{1}{2}}}|\xi|^{1/2}) = o(s_0^{2/7}).
\end{align*}

On the unit circle,
\begin{align*}
J_R(\xi) - I = & \tilde{Z}(\xi) \begin{pmatrix}
0 & -a(\xi) \\ 0 & 0
\end{pmatrix}\tilde{Z}^{-1}(\xi),
\end{align*}
and $\tilde Z(\xi)$ is bounded for $\xi$ bounded. Thus, all we need to do is to study the behavior of $a(\xi)$ as $s_0 \to 0$.

To this aim, we start with a change in variables $\zeta' = \xi's_0^{-2/7}$. Therefore,
$$a(\xi)= \dfrac{s_0^{2/7}}{2 \pi i}\int_{-\infty}^{\infty}\dfrac{\iota\chi_{(0, \infty)} (\zeta')-\sigma(\zeta')}{\zeta's_0^{2/7}-\xi} \dd \zeta'.
$$

If $|\Im \xi| \geq \delta$ for some $\delta \in \mathbb{R}$ positive, then $|\zeta's_0^{2/7}-\xi|^{-1} \leq |\Im \xi|^{-1} \leq \delta^{-1}$. Therefore, uniformly for $|\xi|=1$ and $|\Im \xi| \geq \delta$,
\begin{align*}
|a(\xi)| \leq & \dfrac{s_0^{2/7}}{2 \pi}\int_{-\infty}^{\infty} \left|\dfrac{\iota\chi_{(0, \infty)} (\zeta')-\sigma(\zeta')}{\zeta's_0^{2/7}-\xi}\right| \dd \zeta'\\
\leq & \dfrac{s_0^{2/7}}{2 \pi \delta}\int_{-\infty}^{\infty} \left|\iota\chi_{(0, \infty)}(\zeta')-\sigma(\zeta')\right| \dd \zeta' \leq \dfrac{2s_0^{2/7}k_1}{\delta k_2}.
\end{align*}

Now we need to analyze the neighborhoods of $1$ and $-1$. By symmetry, it suffices to analyze one of the cases. Take, for instance $\xi=1$ and consider the compact interval $K=[s_0^{-2/7}/2, 3s_0^{-2/7}/2]$. The integral can be expressed as
\begin{align*}
a(\xi)= \dfrac{s_0^{2/7}}{2 \pi i} \left[\int_{\mathbb{R}/K}\dfrac{\iota\chi_{(0, \infty)} (\zeta')-\sigma(\zeta')}{\zeta's_0^{2/7}-\xi} \dd \zeta' + \int_{K}\dfrac{1-\sigma(\zeta')}{\zeta's_0^{2/7}-\xi} \dd \zeta' \right].
\end{align*}

For some $k_0 \geq 0$, it follows that $|\zeta's_0^{2/7}-\xi| \geq k_0$ for all $\zeta' \in \mathbb{R}/K$. Thus
\begin{align*}
\left|\int_{\mathbb{R}/K}\dfrac{\iota\chi_{(0, \infty)} (\zeta')-\sigma(\zeta')}{\zeta's_0^{2/7}-\xi} \dd \zeta'\right| \leq & \dfrac{1}{k_0} \int_{\mathbb{R}/K}|\iota\chi_{(0, \infty)} (\zeta')-\sigma(\zeta')| \dd \zeta'\\
\leq & \dfrac{1}{k_0} \int_{\mathbb{R}}|\iota\chi_{(0, \infty)} (\zeta')-\sigma(\zeta')| \dd \zeta' \leq \dfrac{2 k_1}{k_0 k_2}.
\end{align*}

For the other integral, with $|\xi|=1$, notice that $\zeta' \in K$ implies $\zeta' s_0^{2/7} \in [\frac{1}{2}, \frac{3}{2}]$. Moreover,
$$ |\zeta's_0^{2/7}-\xi| \geq |\zeta's_0^{2/7}-1| \implies \frac{1}{|\zeta's_0^{2/7}-\xi|} \leq \frac{1}{|\zeta's_0^{2/7}-1|}.
$$

Now consider
\begin{align*}
\left|\int_{K}\dfrac{\iota-\sigma(\zeta')}{\zeta's_0^{2/7}-\xi} \dd \zeta'\right| \leq& \underbrace{\left|\int_{K}\dfrac{\iota-\sigma(s_0^{-2/7})}{\zeta's_0^{2/7}-\xi} \dd \zeta'\right|}_{I_1}+\underbrace{\left|\int_{K}\dfrac{\sigma(s_0^{-2/7})-\sigma(\zeta')}{\zeta's_0^{2/7}-1} \dd \zeta'\right|}_{I_2}.
\end{align*}

For the second integral, we apply the mean value theorem, obtaining
\begin{align*}
I_2 =& \frac{1}{s_0^{2/7}}\int_{s_0^{-2/7}/2}^{3s_0^{-2/7}/2}\left|\dfrac{\sigma(\zeta')-\sigma(s_0^{-2/7})}{\zeta'-s_0^{-2/7}}\right| \dd \zeta'\\
\leq & \frac{k_3}{s_0^{2/7}}\int_{s_0^{-2/7}/2}^{3s_0^{-2/7}/2}\sup_{z \in K}z^{-2} \dd \zeta' = \frac{4 k_3}{9}.
\end{align*}

On the other side,
\begin{align*}
I_1 = \frac{1}{s_0^{2/7}}\left| \int_{1/2}^{3/2}\dfrac{\iota-\sigma(s_0^{-2/7})}{\zeta'-\xi}\dd \zeta' \right| \leq \frac{1}{s_0^{2/7}} \e^{-k_2 s_0^{-2/7}}\left| \int_{1/2}^{3/2}\dfrac{1}{\zeta'-\xi}\dd \zeta' \right| = o(1).
\end{align*}

Altogether we conclude that $|a(\xi)| = O(s_0^{2/7})$ and, therefore, equations \eqref{eq:estnormJR} are satisfied. \qed

\subsection{Analysis of the global parametrix}

The Riemann-Hilbert problem for $\tilde Z$, defined by equations \eqref{eq:5.2bis} and \eqref{eq:asympZ}, is closely related to the Riemann-Hilbert problem associated to the third member of the Painlevé II hierarchy, in a way which will be presented in this subsection. Let us start considering the following

\begin{problem}\label{RHPP2}\hfill
\begin{enumerate}
\item $\Psi(\zeta;x_0,x_1,x_2) \equiv \Psi(\zeta)$ is analytic for every $\zeta \in \mathbb{C}\setminus  \Gamma$, where 
$$\Gamma= \cup_{j=1}^6 \Gamma_j, \quad \text{for} \quad \Gamma_1=\mathbb{R_+}\e^{\frac{\pi\i}{14}}, \; \Gamma_2 = \mathbb{R_+}\e^{\frac{\pi\i}{2}},  \; \Gamma_3 = \mathbb{R_+}\e^{\frac{6\pi\i}{14}} \quad \text{and} \quad \Gamma_{k + 3} = -\Gamma_k, \; k = 1,2,3.$$
(see Figure \ref{contourPII3}), and has continuous boundary values $\Psi_{\pm}$ satisfying the jump relation
$$\Psi_{+}(\zeta)=
\Psi_{-}(\zeta)J_k, \quad z \in \Gamma_k,$$ where
\begin{equation}
J_1 = \begin{pmatrix}
1 & 0 \\ -\i & 1
\end{pmatrix}, \qquad
J_2 = \begin{pmatrix}
1 & s \\ 0 & 1
\end{pmatrix}, \qquad 
J_3 = \begin{pmatrix}
1 & 0 \\ -\i & 1
\end{pmatrix},
\end{equation}
and $J_{k+3}=J_k^T$.
\item As $\zeta\to \infty$,
$$
\Psi(\zeta)=\left(I+\dfrac{1}{\zeta}\Psi^{(1)}+O(\zeta^{-2})\right)\e^{-i \Theta(\zeta) \sigma_3},
$$
where $\Theta(\zeta) =  \frac{2^6}{7}\zeta^7 -\frac{2^4}{5}x_2 \zeta^5+\frac{4}{3}x_1 \zeta^3+x_0\zeta$.
\item As $\zeta \to 0$,
\begin{align*}
\Psi(\zeta)= \left\lbrace \begin{array}{ll}
O\begin{pmatrix}
|\zeta|^{1/2} & |\zeta|^{-1/2} \\
|\zeta|^{1/2} & |\zeta|^{-1/2}
\end{pmatrix} \begin{pmatrix}
1 &0 \\  \i & 1
\end{pmatrix},  \hspace{0.3cm}& \zeta \in S_1, \\ 
O\begin{pmatrix}
|\zeta|^{1/2} & |\zeta|^{-1/2} \\
|\zeta|^{1/2} & |\zeta|^{-1/2}
\end{pmatrix} \begin{pmatrix}
1 &0 \\ - \i & 1
\end{pmatrix},  \hspace{0.3cm}& \zeta \in S_4 \\ 
O\begin{pmatrix}
|\zeta|^{1/2} & |\zeta|^{-1/2} \\
|\zeta|^{1/2} & |\zeta|^{-1/2}
\end{pmatrix} &  \zeta \in S_2\cup S_3\\
O\begin{pmatrix}
|\zeta|^{-1/2} & |\zeta|^{1/2} \\
|\zeta|^{-1/2} & |\zeta|^{1/2}
\end{pmatrix} &  \zeta \in S_5 \cup S_6.
\end{array}\right.
\end{align*}
\end{enumerate}
\end{problem}

\begin{figure}[h]
\centering

\tikzset{every picture/.style={line width=0.75pt}} %set default line width to 0.75pt        

\begin{tikzpicture}[x=0.75pt,y=0.75pt,yscale=-1,xscale=1]
%uncomment if require: \path (0,300); %set diagram left start at 0, and has height of 300

%Straight Lines [id:da27532452488982306] 
\draw    (188.88,168.48) -- (266.49,149.06) ;
\draw [shift={(268.43,148.57)}, rotate = 165.95] [color={rgb, 255:red, 0; green, 0; blue, 0 }  ][line width=0.75]    (10.93,-3.29) .. controls (6.95,-1.4) and (3.31,-0.3) .. (0,0) .. controls (3.31,0.3) and (6.95,1.4) .. (10.93,3.29)   ;
%Straight Lines [id:da7314271652815972] 
\draw    (326,136.56) -- (268.43,148.57) ;

%Straight Lines [id:da5649579011264004] 
\draw    (191.62,166.83) -- (117.08,147.73) ;
\draw [shift={(115.14,147.23)}, rotate = 14.37] [color={rgb, 255:red, 0; green, 0; blue, 0 }  ][line width=0.75]    (10.93,-3.29) .. controls (6.95,-1.4) and (3.31,-0.3) .. (0,0) .. controls (3.31,0.3) and (6.95,1.4) .. (10.93,3.29)   ;
%Straight Lines [id:da39640659105144604] 
\draw    (115.14,147.23) -- (59,134.56) ;

%Shape: Ellipse [id:dp6166035421232692] 
\draw  [fill={rgb, 255:red, 0; green, 0; blue, 0 }  ,fill opacity=1 ] (191.14,164.84) .. controls (192.27,164.84) and (193.18,165.9) .. (193.19,167.22) .. controls (193.19,168.53) and (192.28,169.6) .. (191.16,169.6) .. controls (190.04,169.61) and (189.12,168.54) .. (189.12,167.23) .. controls (189.12,165.92) and (190.02,164.85) .. (191.14,164.84) -- cycle ;
%Straight Lines [id:da6717283403756082] 
\draw    (191.33,167.2) -- (266.44,184.63) ;
\draw [shift={(268.39,185.08)}, rotate = 193.06] [color={rgb, 255:red, 0; green, 0; blue, 0 }  ][line width=0.75]    (10.93,-3.29) .. controls (6.95,-1.4) and (3.31,-0.3) .. (0,0) .. controls (3.31,0.3) and (6.95,1.4) .. (10.93,3.29)   ;
%Straight Lines [id:da609371392515991] 
\draw    (268.39,185.08) -- (325,196.56) ;

%Straight Lines [id:da1520055810646206] 
\draw    (192.43,167.69) -- (118.74,183.52) ;
\draw [shift={(116.78,183.94)}, rotate = 347.88] [color={rgb, 255:red, 0; green, 0; blue, 0 }  ][line width=0.75]    (10.93,-3.29) .. controls (6.95,-1.4) and (3.31,-0.3) .. (0,0) .. controls (3.31,0.3) and (6.95,1.4) .. (10.93,3.29)   ;
%Straight Lines [id:da8499066204846439] 
\draw    (62,193.56) -- (116.78,183.94) ;

%Straight Lines [id:da7248397203908006] 
\draw  [dash pattern={on 0.84pt off 2.51pt}]  (63,165.56) -- (328,168.56) ;
%Straight Lines [id:da27304444943605966] 
\draw    (191.55,167.46) -- (191.31,218.47) ;
\draw [shift={(191.3,220.47)}, rotate = 270.27] [color={rgb, 255:red, 0; green, 0; blue, 0 }  ][line width=0.75]    (10.93,-3.29) .. controls (6.95,-1.4) and (3.31,-0.3) .. (0,0) .. controls (3.31,0.3) and (6.95,1.4) .. (10.93,3.29)   ;
%Straight Lines [id:da1888882310646407] 
\draw    (191.3,220.47) -- (192.2,259.16) ;

%Straight Lines [id:da9994189495192098] 
\draw    (191.8,166.16) -- (192.49,115.16) ;
\draw [shift={(192.51,113.16)}, rotate = 90.77] [color={rgb, 255:red, 0; green, 0; blue, 0 }  ][line width=0.75]    (10.93,-3.29) .. controls (6.95,-1.4) and (3.31,-0.3) .. (0,0) .. controls (3.31,0.3) and (6.95,1.4) .. (10.93,3.29)   ;
%Straight Lines [id:da153941734072085] 
\draw    (192.51,113.16) -- (191.95,74.46) ;

% Text Node
\draw (55,95.96) node [anchor=north west][inner sep=0.75pt]  [font=\footnotesize]  {$\begin{pmatrix}
1 & 0\\
-i & 1
\end{pmatrix}$};
% Text Node
\draw (289,99.96) node [anchor=north west][inner sep=0.75pt]  [font=\footnotesize]  {$\begin{pmatrix}
1 & 0\\
-i & 1
\end{pmatrix}$};
% Text Node
\draw (282,201.96) node [anchor=north west][inner sep=0.75pt]  [font=\footnotesize]  {$\begin{pmatrix}
1 & -i\\
0 & 1
\end{pmatrix}$};
% Text Node
\draw (56,200.96) node [anchor=north west][inner sep=0.75pt]  [font=\footnotesize]  {$\begin{pmatrix}
1 & -i\\
0 & 1
\end{pmatrix}$};
% Text Node
\draw (230,121.96) node [anchor=north west][inner sep=0.75pt]    {$S_{2}$};
% Text Node
\draw (151,119.96) node [anchor=north west][inner sep=0.75pt]    {$S_{3}$};
% Text Node
\draw (307,147.96) node [anchor=north west][inner sep=0.75pt]    {$S_{1}$};
% Text Node
\draw (73,146.96) node [anchor=north west][inner sep=0.75pt]    {$S_{4}$};
% Text Node
\draw (72,169.46) node [anchor=north west][inner sep=0.75pt]    {$S_{5}$};
% Text Node
\draw (149,196.46) node [anchor=north west][inner sep=0.75pt]    {$S_{6}$};
% Text Node
\draw (193.95,77.86) node [anchor=north west][inner sep=0.75pt]  [font=\footnotesize]  {$\begin{pmatrix}
1 & s\\
0 & 1
\end{pmatrix}$};
% Text Node
\draw (196.95,219.86) node [anchor=north west][inner sep=0.75pt]  [font=\footnotesize]  {$\begin{pmatrix}
1 & 0\\
s & 1
\end{pmatrix}$};
% Text Node
\draw (222.5,196.46) node [anchor=north west][inner sep=0.75pt]    {$S_{7}$};
% Text Node
\draw (308,171.96) node [anchor=north west][inner sep=0.75pt]    {$S_{8}$};

\end{tikzpicture}

\caption{Contour and jumps for the $P_{II}^{(3)}$ Painlevé hierarchy RHP.}
\label{contourPII3}
\end{figure}
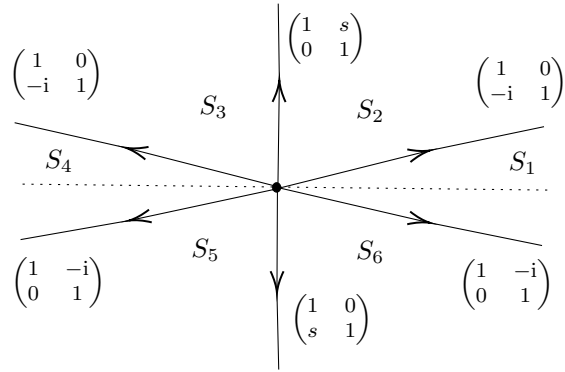
 
 This Riemann-Hilbert problem, depending parametrically on $s$, is a specific case of the general Riemann-Hilbert problem associated to the third member of the Painlev\'e II hierarchy, which depends on seven parameters (the Stokes parameters) $s_i,\, i = 1,\ldots,7$ (see for instance \cite{CIK} and references therein). In the case of interest for our paper, we set $s_1 = s_7 = -\i$, and $s_4 = s$.  Using standard methods about Riemann-Hilbert formulation of isomonodromic problems, one can prove (see details in \cite{CIK}) that 
\begin{equation}\label{eq:pandq}
	\Psi^{(1)} = \Psi^{(1)}(x_0,x_1,x_2) = \begin{pmatrix}
										\i p(x_0,x_1,x_2) & -\frac{\i}2 q(x_0,x_1,x_2) \\
										\frac{\i}2 q(x_0,x_1,x_2) & - \i p(x_0,x_1,x_2)
					\end{pmatrix},
\end{equation}
where $p = p(x_0,x_1,x_2)$ and $q = q(x_0,x_1,x_2)$ are real-valued functions such that 
$$
	\frac{\partial p}{\partial x_0} = \frac{1}2 q^2,
$$
and, in addition, $q$ solves the following equations:
\begin{align}
	\frac{\partial}{\partial x_{1}} q =& 2 q^2 \frac{\partial}{\partial x_{0}} q- \frac{1}{3} \frac{\partial^{3}}{\partial x_{0}^{3}} q, \label{mKdV1} \\
	\frac{\partial}{\partial x_{2}} q =& - 6 q^{4} \frac{\partial}{\partial x_{0}} q + 2 q^{2} \frac{\partial^{3}}{\partial x_{0}^{3}} q + 8 q \frac{\partial}{\partial x_{0}} q \frac{\partial^{2}}{\partial x_{0}^{2}} q + 2 \left(\frac{\partial}{\partial x_{0}} q\right)^{3} - \frac{1}{5} \frac{\partial^{5}}{\partial x_{0}^{5}} q, \label{mKdV2}\\ \nonumber
	 \frac{\partial^{6}}{\partial x_{0}^{6}} q =& \, x_{0} q + x_{1} \left(2q^{3} -  \frac{\partial^{2}}{\partial x_{0}^{2}} q \right) +x_2\left(-6 q^{5} + 10 q^{2} \frac{\partial^{2}}{\partial x_{0}^{2}} q + 10 q \frac{\partial}{\partial x_{0}} q^{2}  - \frac{\partial^{4}}{\partial x_{0}^{4}} q\right)  \\ \nonumber
&+ 20 q^{7} - 70 q^{4} \frac{\partial^{2}}{\partial x_{0}^{2}} q - 140 q^{3} \left(\frac{\partial}{\partial x_{0}} q\right)^{2} + 14 q^{2} \frac{\partial^{4}}{\partial x_{0}^{4}} q + 56 q \frac{\partial}{\partial x_{0}} q \frac{\partial^{3}}{\partial x_{0}^{3}} q  \\
&+ 42 q \left(\frac{\partial^{2}}{\partial x_{0}^{2}} q\right)^{2} + 70 \left(\frac{\partial}{\partial x_{0}} q\right)^{2} \frac{\partial^{2}}{\partial x_{0}^{2}} q  -\frac{1}2. \label{PII3}
\end{align}
Equations \eqref{mKdV1} and \eqref{mKdV2} are the two first members of the modified KdV hierarchy, while equation \eqref{PII3} is the third member of the Painlevé II hierarchy. We now define, as in \cite{CIK}, the following folding transformation:
\begin{align*}
\Xi(x_0,x_1,x_2;z) \equiv {\Xi}(z)&= z^{-\sigma_3/4}N \e^{\pi \i\sigma_3/4}\Psi(\i z^{1/2})\e^{-\pi \i\sigma_3/4}.
\end{align*}

The matrix-valued function $\Xi$ will then have jumps on the oriented contours $$\tilde\Gamma := \mathbb R \cup \mathbb{R_+}\e^{\pm\frac{6\pi\i}{7}}$$ and solves the following Riemann-Hilbert problem
\begin{problem}\label{RHPP34}\hfill
\begin{enumerate}
\item $\Xi(z)$ is analytic on $\mathbb{C}\setminus \tilde\Gamma$ and has continuous boundary values $\Xi$ satisfying the jump relation
\begin{equation}
\Xi_+(z)=
\Xi_-(z)\times \left\lbrace \begin{array}{ll}\begin{pmatrix}
1 & 0 \\ 1 & 1
\end{pmatrix}  \hspace{0.3cm}& z \in \mathbb{R_+}\e^{\pm\frac{6\pi\i}{7}} \vspace{0.3cm}\\
\begin{pmatrix}
0 & 1 \\ -1 & 0
\end{pmatrix}  \hspace{0.3cm}& z \in (-\infty,0)\vspace{0.3cm}\\
\begin{pmatrix}
1 & \i s \\ 0 & 1
\end{pmatrix}  \hspace{0.3cm}& z \in (0, \infty)
\end{array}\right..
\end{equation}
\item As $z\to \infty$,
\begin{equation} \label{eq:asympXi}
\Xi(z) =  z^{-\frac{\sigma_3}{4}}N\left[I+\sum_{j\geq 1}\dfrac{\e^{\pi \i\sigma_3/4}\Psi^{(j)}\e^{-\pi \i\sigma_3/4}}{z^{j/2}}\right]\e^{-\tilde{\theta}(z)\sigma_3}
\end{equation}
where $\tilde{\theta}(z) = \frac{2^6}7 z^{7/2}+\frac{2^4}5 x_2z^{5/2}+\frac{4}3 x_1z^{3/2}-x_0z^{1/2}$.
\item As $z \to 0$,
$$\Xi(z) = O(|z|^{-1/2}).$$
\end{enumerate}
\end{problem}

In the case $s = 0$, the existence of the solution of the above Riemann-Hilbert problem had been proven in \cite{CIK}. We will extend in a moment the same result for arbitrary $s \in \i[0,1]$. This is important for us since, as it is easy to check, the Riemann-Hilbert problem \ref{RHPP34} is closely related to the global parametrix $\tilde Z$ satisfying the Riemann-Hilbert problem with conditions \eqref{eq:5.2bis}, \eqref{eq:asympZ}. Namely, we have the following
\begin{proposition}\label{relationZXi}
	Suppose that a solution $\Xi$ of the Riemann-Hilbert problem \ref{RHPP34} exists. Then, 
	$$\tilde Z(\xi)  = c^{\sigma_3/4}\Xi(c\xi;x_0,x_1,x_2),$$
	where $x_0 = -2^{5/7} \kappa_2, \, x_1 = 3\cdot 2^{1/7}\kappa_1, x_2 = 5 \cdot 2^{-3/7} \kappa_0$ and $c = 2^{-10/7}$. Moreover, the Stokes parameters $\iota$ and $s$ are related by $s = \i (\iota - 1)$.
\end{proposition}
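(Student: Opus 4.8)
The strategy is to show that the right-hand side $c^{\sigma_3/4}\Xi(c\xi)$ solves the very Riemann--Hilbert problem characterizing $\tilde Z$ --- analyticity off $\mathbb R$, the jump \eqref{eq:5.2bis}, and the asymptotics \eqref{eq:asympZ} --- and then to conclude by uniqueness of the latter. The parameters $c$ and $x_0,x_1,x_2$ are not guessed but forced by comparing the exponential factors. Writing out $\tilde\theta(c\xi)=\frac{2^6}{7}c^{7/2}\xi^{7/2}+\frac{2^4}{5}x_2c^{5/2}\xi^{5/2}+\frac{4}{3}x_1c^{3/2}\xi^{3/2}-x_0c^{1/2}\xi^{1/2}$ and matching it, power by power, with the exponent $\frac{2}{7}\xi^{7/2}+\kappa_0\xi^{5/2}+\kappa_1\xi^{3/2}+\kappa_2\xi^{1/2}$ of \eqref{eq:asympZ}, the $\xi^{7/2}$ term gives $c^{7/2}=2^{-5}$, i.e. $c=2^{-10/7}$, and the three remaining equations then yield $x_2=5\cdot 2^{-3/7}\kappa_0$, $x_1=3\cdot 2^{1/7}\kappa_1$, $x_0=-2^{5/7}\kappa_2$, exactly as claimed.

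Second, I would check the prefactor and the subleading terms. Since $c>0$ the diagonal factors commute and $c^{\sigma_3/4}(c\xi)^{-\sigma_3/4}=\xi^{-\sigma_3/4}$, so the leading term $\xi^{-\sigma_3/4}N$ of \eqref{eq:asympZ} is reproduced exactly, and the series becomes $I+\sum_{j\geq 1}\xi^{-j/2}c^{-j/2}\e^{\pi\i\sigma_3/4}\Psi^{(j)}\e^{-\pi\i\sigma_3/4}$, precisely of the shape demanded by \eqref{eq:asympZ}. This also pins down $Z^{(1)}$ in terms of $\Psi^{(1)}$, hence, through \eqref{eq:pandq}, in terms of $p$ and $q$, which will be needed later.

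Third comes the jump analysis, which is the delicate step. On $(0,\infty)$ the scaling $\xi\mapsto c\xi$ with $c>0$ preserves the half-line, and since both boundary values are multiplied on the left by the same constant $c^{\sigma_3/4}$, the jump is unchanged; thus $c^{\sigma_3/4}\Xi(c\xi)$ inherits the $\Xi$-jump $\begin{pmatrix}1&\i s\\0&1\end{pmatrix}$, and equating this with the $\tilde Z$-jump $\begin{pmatrix}1&1-\iota\\0&1\end{pmatrix}$ (here $\chi_{(0,\infty)}=1$) forces $\i s=1-\iota$, i.e. $s=\i(\iota-1)$. The \emph{main obstacle} is the negative axis together with the rays $\mathbb R_+\e^{\pm 6\pi\i/7}$: whereas $\Xi$ carries genuine jumps $\begin{pmatrix}1&0\\1&1\end{pmatrix}$ on these rays and $\begin{pmatrix}0&1\\-1&0\end{pmatrix}$ on $(-\infty,0)$, the target $\tilde Z$ is analytic across the rays and has only the jump $\begin{pmatrix}1&1\\0&1\end{pmatrix}$ on $(-\infty,0)$. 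I would reconcile the two by identifying the ray jumps of $\Xi$ with the Stokes sector factors $\begin{pmatrix}1&0\\\pm1&1\end{pmatrix}$ appearing in \eqref{eq:asympZ}, taking $\epsilon=\pi/7$ so that the rays are exactly $\arg\xi=\pm(\pi-\epsilon)$; these jumps are exponentially close to the identity along the rays, consistent with the clean asymptotics of $\Xi$ on both sides. The net jump of $\tilde Z$ across $(-\infty,0)$ is then obtained by combining the $\Xi$-jump there with the two adjacent sector factors, and the key identity $\begin{pmatrix}1&0\\1&1\end{pmatrix}\begin{pmatrix}0&1\\-1&0\end{pmatrix}\begin{pmatrix}1&0\\1&1\end{pmatrix}=\begin{pmatrix}1&1\\0&1\end{pmatrix}$ reproduces precisely the jump \eqref{eq:5.2bis} with $\chi_{(0,\infty)}=0$. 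One checks moreover that the branch behaviour of $\xi^{-\sigma_3/4}N\e^{-(\cdots)\sigma_3}$ across $(-\infty,0)$ already supplies the intrinsic factor $\begin{pmatrix}0&1\\-1&0\end{pmatrix}$, so nothing is left over.

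Finally, with analyticity off $\mathbb R$, the jump \eqref{eq:5.2bis}, the asymptotics \eqref{eq:asympZ}, and the mild $O(|\xi|^{-1/2})$ behaviour at the origin (inherited from the $z\to 0$ condition on $\Xi$) all verified, uniqueness of the $\tilde Z$ problem closes the argument: all jump matrices have unit determinant, so $\det\equiv 1$ and $\tilde Z$ is invertible, and the ratio of two solutions is entire, tends to $I$ at infinity and has an integrable singularity at $0$, hence equals $I$ by Liouville. This yields $\tilde Z(\xi)=c^{\sigma_3/4}\Xi(c\xi)$ with the stated parameters. The only genuine technical difficulty is the Stokes/folding bookkeeping on the negative axis and the rays, which is the same mechanism already used in \cite{CIK}; once the orientations are fixed, the triangular-factor product above makes it routine.
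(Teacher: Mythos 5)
Your proposal is correct and is precisely the verification the paper leaves implicit (the proposition is stated without proof, following the remark that the relation "is easy to check"): the exponent matching forces $c^{7/2}=2^{-5}$ and the stated $x_0,x_1,x_2$, the prefactor computation $c^{\sigma_3/4}(c\xi)^{-\sigma_3/4}=\xi^{-\sigma_3/4}$ is right, the jump on $(0,\infty)$ gives $\i s=1-\iota$, and your identity $\left(\begin{smallmatrix}1&0\\1&1\end{smallmatrix}\right)\left(\begin{smallmatrix}0&1\\-1&0\end{smallmatrix}\right)\left(\begin{smallmatrix}1&0\\1&1\end{smallmatrix}\right)=\left(\begin{smallmatrix}1&1\\0&1\end{smallmatrix}\right)$ correctly reconciles the ray and negative-axis jumps of $\Xi$ with the Stokes sector factors of \eqref{eq:asympZ} and the jump \eqref{eq:5.2bis}. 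The only caveat worth recording is that the identity $\tilde Z(\xi)=c^{\sigma_3/4}\Xi(c\xi)$ holds literally only in the sector $|\arg \xi|<6\pi/7$, the continuation into the outer sectors absorbing the ray jumps of $\Xi$ into the sector factors of \eqref{eq:asympZ} --- which is exactly the bookkeeping you describe, so no gap remains.
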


%\textcolor{red}{The uniqueness of the solution of the Riemann-Hilbert problem \ref{RHPP34} is easily proven in the standard way} \textcolor{violet}{I think the P34 one is not uniquely solvable. The P2 one is uniquely solvable, but the P34 one is not. The point is, the behavior around the origin for the P2 always have a column with positive power and a column with negative powers of $|z|$. When we perform $\Psi \Psi^{-1}$ it becomes $O(1)$ and Liouville theorem implies uniqueness. For P34 we have $O(|z|^{-\alpha})$ in all entries. For $\alpha=1/2$ it means that $\Xi \Xi^{-1} = O(|z|^{-1})$ around the origin. Consequently, if $\Xi(z)$ is a solution, then
%$$\begin{pmatrix}
%1 & 0 \\ a&0
%\end{pmatrix} \Xi(z)$$
%is also a solution. This is not a problem when we ask for the asymptotics, because we only need the first row (which remains unaffected under such product). However, it might be a problem for the symmetry of $\tilde{Z}$. And this might invalidate the argument that I used to conclude that $R(\xi)= I+O(\xi^{-1})$ instead of $R(\xi)= I+O(\xi^{-1/2})$. However, at a first moment, I carried out the asymptotics using $R(\xi) = I +O(\xi^{-1/2})$, therefore it does not impact in the final result, I think.}\\ 

We now prove the existence of $\Xi$ by the following vanishing lemma
\begin{lemma}[Vanishing Lemma] Let $x_0, x_1, x_2 \in \mathbb{R}$ and consider the matrix-valued function $\hat{\Psi}(\zeta)$ that satisfies the Riemann-Hilbert problem \ref{RHPP2}, but with asymptotic condition given by
$$\hat{\Psi}(\zeta) \e^{\i \Theta(\zeta)\sigma_3} = O(1/\zeta), \qquad \text{as } \zeta \to \infty.
$$
Then, $\hat{\Psi}(\zeta) \equiv 0$.
\end{lemma}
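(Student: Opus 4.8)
The plan is to transfer the problem to the folded Riemann--Hilbert problem \ref{RHPP34}, whose contour $\tilde\Gamma=\R\cup\R_+\e^{\pm 6\pi\i/7}$ is symmetric under complex conjugation; this symmetry is what makes an energy estimate available. Given a solution $\hat\Psi$ of the homogeneous Problem \ref{RHPP2}, the folding transformation $\hat\Xi(z):=z^{-\sigma_3/4}N\e^{\pi\i\sigma_3/4}\hat\Psi(\i z^{1/2})\e^{-\pi\i\sigma_3/4}$ yields a matrix with exactly the jumps of Problem \ref{RHPP34} but with the normalization $I$ in \eqref{eq:asympXi} replaced by $O(z^{-1/2})$. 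As this is a purely algebraic identity it does not presuppose solvability, so it suffices to prove $\hat\Xi\equiv 0$. I would first note that $\det\hat\Xi$ has no jump (every Stokes matrix is unimodular), tends to $0$ at infinity, and has at worst an integrable singularity at the origin, hence $\det\hat\Xi\equiv 0$; this is consistent with, but does not by itself yield, the conclusion.

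Next I would set up the positivity argument. Since $x_0,x_1,x_2\in\R$, the phase $\tilde\theta$ is real on $(0,\infty)$, and the Stokes data on the two conjugate rays $\R_+\e^{\pm 6\pi\i/7}$ satisfy the reflection symmetry needed for the jumps of the Hermitian companion $H(z):=\hat\Xi(z)\,\hat\Xi(\bar z)^\dagger$ across these rays to cancel. Built from a genuinely decaying normalization of $\hat\Xi$ (see below), $H$ is then analytic on $\C\setminus\R$ and satisfies $H_-(x)=H_+(x)^\dagger$ on $\R$. Using the decay of $\hat\Xi$ together with the integrability at the origin, Cauchy's theorem gives $\int_\R H_+\,\d x=0$; adding the conjugate relation yields $\int_\R\bigl(H_+ + H_+^\dagger\bigr)\,\d x = 0$, where $H_+ + H_+^\dagger=\hat\Xi_-\,(J+J^\dagger)\,\hat\Xi_-^\dagger$ and $J$ is the jump of $\hat\Xi$ on $\R$.

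The conclusion then follows from the definiteness of $J+J^\dagger$. On $(-\infty,0)$ the jump $\begin{pmatrix}0&1\\-1&0\end{pmatrix}$ gives $J+J^\dagger=0$, while on $(0,\infty)$ the jump $\begin{pmatrix}1&\i s\\0&1\end{pmatrix}$ gives $J+J^\dagger=\begin{pmatrix}2&1-\iota\\1-\iota&2\end{pmatrix}$, which is \emph{positive definite} precisely because $\i s=1-\iota\in[0,1]$ (recall $s=\i(\iota-1)$ and $0\le\iota\le 1$), its determinant being $4-(1-\iota)^2\ge 3$. This is the only place the hypothesis on $s$ enters, and it is exactly what extends the vanishing lemma of \cite{CIK}, proved there for $s=0$, to all admissible $s$. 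Since the integrand is positive semidefinite with vanishing (trace) integral, it vanishes identically; the strict positivity on $(0,\infty)$ forces $\hat\Xi_-\equiv 0$, hence $\hat\Xi_+\equiv 0$, on $(0,\infty)$. Zero boundary values on a sub-arc, together with the explicit jump relations, then propagate by analytic continuation (the identity theorem) to give $\hat\Xi\equiv 0$ throughout, and therefore $\hat\Psi\equiv 0$.

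The step I expect to be the main obstacle is the justification of the contour integral at infinity. The naive companion $\hat\Xi(z)\hat\Xi(\bar z)^\dagger$ carries a factor $\e^{-2\tilde\theta(z)\sigma_3}$ that grows in some directions, so one cannot simply close a large semicircle; the argument must be run on a normalization of $\hat\Xi$ that genuinely decays (absorbing $z^{-\sigma_3/4}N\e^{-\tilde\theta\sigma_3}$), and one must use the Stokes geometry of the sectors — where $\operatorname{Re}\tilde\theta$ has a definite sign — to see both that the large-arc contributions vanish and that the ray jumps are exponentially negligible. Controlling the mild $|z|^{-1/2}$ singularity at the origin when deforming contours is a secondary technical point. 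Apart from these estimates, which can be carried out exactly as in \cite{CIK}, the algebra above is the whole content.
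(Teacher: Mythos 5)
There is a genuine gap, and it sits at the heart of your argument. You claim that the jumps of the Hermitian companion $H(z)=\hat\Xi(z)\,\hat\Xi(\bar z)^\dagger$ cancel across the conjugate rays $\mathbb R_+\e^{\pm 6\pi\i/7}$. Writing out the boundary values on the ray $\gamma=\mathbb R_+\e^{6\pi\i/7}$ (conjugation maps one side of $\gamma$ to the opposite side of $\bar\gamma$), one gets $H_+=\hat\Xi_-\,J\,\hat\Xi_-(\bar z)^\dagger$ and $H_-=\hat\Xi_-\,J^\dagger\,\hat\Xi_-(\bar z)^\dagger$ (or, with the other orientation convention, $H_+=\hat\Xi_-\,JJ^\dagger\,\hat\Xi_-(\bar z)^\dagger$ and $H_-=\hat\Xi_-\hat\Xi_-(\bar z)^\dagger$); cancellation therefore requires the ray jump to be Hermitian (resp.\ unitary). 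But $J=\bigl(\begin{smallmatrix}1&0\\1&1\end{smallmatrix}\bigr)$ is neither, so $H$ genuinely jumps on these rays and $\int_{\mathbb R}H_+\,\d x=0$ does not follow from Cauchy's theorem. The second obstruction, which you flag but do not resolve, is that $H$ carries the factor $\e^{-2\tilde\theta(z)\sigma_3}$ and hence grows at infinity in every direction where $\Re\tilde\theta\neq 0$, so the contour cannot be closed even absent the ray jumps. Your proposed repair --- absorbing $z^{-\sigma_3/4}N\e^{-\tilde\theta\sigma_3}$ --- conjugates the real-axis jumps by $N$ and by $\e^{\tilde\theta\sigma_3}$ and thereby destroys exactly the Hermitian structure $J+J^\dagger\succeq 0$ your positivity step relies on; moreover in the folded variable the hypothesis only yields $z^{-\sigma_3/4}O(z^{-1/2})z^{\sigma_3/4}=O(1)$ for the renormalized unknown, which is not integrable at infinity.

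The paper's proof avoids both problems by staying in the unfolded $\zeta$-plane. There it sets $A(\zeta)=\hat\Psi(\zeta)T_j\,\e^{\i\Theta\sigma_3}$ with \emph{constant} triangular factors $T_j$ in the intermediate sectors, which removes the jumps on the six rays altogether and leaves jumps only on $\mathbb R\cup\i\mathbb R$ --- a conjugation-invariant contour on which $\e^{\pm2\i\Theta}$ is unimodular precisely because $x_0,x_1,x_2$ are real --- while the hypothesis $\hat\Psi\e^{\i\Theta\sigma_3}=O(1/\zeta)$ directly gives $B(\zeta)=A(\zeta)\overline{A(\bar\zeta)}^T=O(\zeta^{-2})$, which is what legitimizes the quadrant-wise Cauchy integrals. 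Note also that the positivity is located elsewhere than you place it: it comes from $J_1+\bar J_1^T=2E_{22}$ on the real axis (rank one, and independent of $s$), which only forces two entries of $A$ to vanish and requires the additional argument of Proposition 2.5 of \cite{CKV08} to conclude; the $s$-dependent jump sits on the imaginary axis and enters only through the identity $J_1J_2[J_3+\bar J_3^T]\bar J_2^T\bar J_1^T=2E_{22}$. Your computation that $J+J^\dagger=\bigl(\begin{smallmatrix}2&1-\iota\\1-\iota&2\end{smallmatrix}\bigr)\succ 0$ on $(0,\infty)$ of the folded problem is correct as linear algebra, but the argument it is meant to drive does not go through as written.
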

\textbf{Proof:} We start with the definition of an auxiliary matrix-function
\begin{equation}
A(\zeta) = \left\lbrace \begin{array}{ll}
\hat{\Psi}(\zeta) \e^{\i \Theta(\zeta) \sigma_3}, & \text{for } \zeta \in S_2 \cup S_3 \cup S_6 \cup S_7, \\
\hat{\Psi}(\zeta) \begin{pmatrix}
1 &0 \\ -\i &1
\end{pmatrix}\e^{\i \Theta(\zeta) \sigma_3}, & \text{for } \zeta \in S_1, \\
\hat{\Psi}(z) \begin{pmatrix}
1 &-\i \\ 0&1
\end{pmatrix}\e^{\i \Theta(\zeta) \sigma_3}, & \text{for } \zeta \in S_5, \\
\hat{\Psi}(\zeta) \begin{pmatrix}
1 &0 \\ \i &1
\end{pmatrix}\e^{\i \Theta(\zeta) \sigma_3}, & \text{for } \zeta \in S_4, \\
\hat{\Psi}(\zeta) \begin{pmatrix}
1 & \i \\ 0&1
\end{pmatrix}\e^{\i \Theta(\zeta) \sigma_3}, & \text{for } \zeta \in S_8,
\end{array}\right.
\end{equation}
where the regions $S_j$ are the ones depicted in Figure \ref{contourPII3}. It is straightforward to verify that $A(\zeta)$ satisfies the following Riemann-Hilbert Problem
\begin{enumerate}
\item $A(\zeta)$ is analytic for every $\zeta \in \mathbb{C}\setminus  (\mathbb{R} \cup \i \mathbb{R})$ and has continuous boundary values $A_{\pm}$ satisfying the jump relations
$$
A_+(\zeta)=
A_-(\zeta)
\times 
\begin{cases}
\begin{pmatrix}
1 & s \e^{-2 \i\Theta(\zeta)} \\ 0 & 1
\end{pmatrix}, & \zeta\in (0, \i \infty), \\
\begin{pmatrix}
1 & 0 \\ s \e^{2\i\Theta(\zeta)}& 1
\end{pmatrix}, & \zeta\in (-\i \infty, 0), \\
\begin{pmatrix}
0& -\i \e^{-2\i\Theta(\zeta)} \\ -\i \e^{2\i\Theta(\zeta)}& 1
\end{pmatrix}, & \zeta\in (0, \infty), \\
\begin{pmatrix}
1& -\i \e^{-2\i\Theta(\zeta)} \\ -\i \e^{2\i\Theta(\zeta)}& 0
\end{pmatrix}, & \zeta\in (-\infty,0). \\
\end{cases}
$$
\item As $\zeta\to \infty$,
$$
A(\zeta)= O(\zeta^{-1}).
$$
\item $A(\zeta)$ has the following behavior as $\zeta \to 0$
\begin{align*}
A(\zeta)= \left\lbrace \begin{array}{ll}
O\begin{pmatrix}
|\zeta|^{1/2} & |\zeta|^{-1/2} \\
|\zeta|^{1/2} & |\zeta|^{-1/2}
\end{pmatrix},  \hspace{0.3cm}& \im\zeta >0, \\
O\begin{pmatrix}
|\zeta|^{-1/2} & |\zeta|^{1/2} \\
|\zeta|^{-1/2} & |\zeta|^{1/2}
\end{pmatrix} &  \im\zeta <0.
\end{array}\right.
\end{align*}
\end{enumerate}

This new problem enjoys remarkable symmetries. In particular, $$\overline{\e^{-\i\Theta(\zeta)}} = \e^{-\i\Theta(\zeta)} \;  \text{and} \; \e^{-\i\Theta(\bar{\zeta})} = \e^{-\i\Theta(-\zeta)} = \e^{\i\Theta(\zeta)}$$ for every $\zeta \in \i\mathbb{R}$, and $\overline{\e^{-\i\Theta(\zeta)}} = \e^{\i\Theta(\zeta)}$ for every $\zeta \in \mathbb{R}$. For simplicity, we represent the solution to $A(\zeta)$ piecewise by the matrix-functions $A_1, A_2, A_3, A_4$ which are analytic in the sectors $Q_1, Q_2, Q_3, Q_4$ (see Figure \ref{vl:contour2}) respectively, that is,
\begin{align*}
A(\zeta)=
\begin{cases}
A_1(\zeta), & \zeta\in Q_1, \\
A_2(\zeta), & \zeta\in Q_2,\\
A_3(\zeta), & \zeta\in Q_3,\\
A_4(\zeta), & \zeta\in Q_4,
\end{cases}, \qquad \begin{array}{c}
A_1(\zeta) = A_4(\zeta) J_1 \\
A_2(\zeta) = A_1(\zeta) J_2 \\
A_3(\zeta) = A_2(\zeta) J_3 \\
A_4(\zeta) = A_3(\zeta) J_4
\end{array},
\end{align*}
where
\begin{align*}
J_1 = \begin{pmatrix}
0& -\i \e^{-2\i\Theta(\zeta)} \\ -\i \e^{2\i\Theta(\zeta)}& 1
\end{pmatrix}& \qquad J_2 = \begin{pmatrix}
1 & s \e^{-2\i\Theta(\zeta)} \\ 0 & 1
\end{pmatrix} \\
J_3 = \begin{pmatrix}
1& -\i \e^{-2\i\Theta(\zeta)} \\ -\i \e^{2\i\Theta(\zeta)}& 0
\end{pmatrix}& \qquad J_4 = \begin{pmatrix}
1 & 0 \\ s \e^{2\i\Theta(\zeta)}& 1
\end{pmatrix}.
\end{align*}

\begin{figure}[h]
\centering
\begin{tikzpicture}[x=0.75pt,y=0.75pt,yscale=-1,xscale=1]
%uncomment if require: \path (0,300); %set diagram left start at 0, and has height of 300

%Straight Lines [id:da5625101727510804] 
\draw    (171.5,102) -- (171.47,99.76) ;
\draw [shift={(171.44,97.76)}, rotate = 89.13] [color={rgb, 255:red, 0; green, 0; blue, 0 }  ][line width=0.75]    (10.93,-3.29) .. controls (6.95,-1.4) and (3.31,-0.3) .. (0,0) .. controls (3.31,0.3) and (6.95,1.4) .. (10.93,3.29)   ;
%Straight Lines [id:da6233405093437743] 
\draw    (171,197.5) -- (170.96,198.32) ;
\draw [shift={(170.87,200.32)}, rotate = 272.62] [color={rgb, 255:red, 0; green, 0; blue, 0 }  ][line width=0.75]    (10.93,-3.29) .. controls (6.95,-1.4) and (3.31,-0.3) .. (0,0) .. controls (3.31,0.3) and (6.95,1.4) .. (10.93,3.29)   ;
%Straight Lines [id:da8825013555907102] 
\draw  [dash pattern={on 4.5pt off 4.5pt}]  (171.62,43.82) -- (170.69,255.5) ;
%Straight Lines [id:da2934281172719162] 
\draw    (236.5,149.5) -- (237.52,149.54) ;
\draw [shift={(239.51,149.63)}, rotate = 182.41] [color={rgb, 255:red, 0; green, 0; blue, 0 }  ][line width=0.75]    (10.93,-3.29) .. controls (6.95,-1.4) and (3.31,-0.3) .. (0,0) .. controls (3.31,0.3) and (6.95,1.4) .. (10.93,3.29)   ;
%Straight Lines [id:da49064870761781476] 
\draw  [dash pattern={on 4.5pt off 4.5pt}]  (288.69,149.93) -- (53.62,149.39) ;
%Straight Lines [id:da018441714638596896] 
\draw  [dash pattern={on 4.5pt off 4.5pt}]  (100,149.5) -- (96.41,149.58) ;
\draw [shift={(94.41,149.63)}, rotate = 358.69] [color={rgb, 255:red, 0; green, 0; blue, 0 }  ][line width=0.75]    (10.93,-3.29) .. controls (6.95,-1.4) and (3.31,-0.3) .. (0,0) .. controls (3.31,0.3) and (6.95,1.4) .. (10.93,3.29)   ;
%Shape: Ellipse [id:dp03506474793470882] 
\draw  [fill={rgb, 255:red, 0; green, 0; blue, 0 }  ,fill opacity=1 ] (171.14,147.28) .. controls (172.27,147.28) and (173.18,148.34) .. (173.19,149.65) .. controls (173.19,150.97) and (172.28,152.04) .. (171.16,152.04) .. controls (170.04,152.04) and (169.12,150.98) .. (169.12,149.67) .. controls (169.12,148.35) and (170.02,147.28) .. (171.14,147.28) -- cycle ;
%Shape: Polygon Curved [id:ds14563513062840894] 
\draw   (178.2,55.2) .. controls (203,56.4) and (225,66) .. (242.2,78.4) .. controls (260.2,94) and (277,114.4) .. (279.8,141.2) .. controls (266.6,141.6) and (208.2,141.6) .. (179,141.2) .. controls (178.6,108) and (178.2,78.4) .. (178.2,55.2) -- cycle ;
%Shape: Polygon Curved [id:ds773258784875472] 
\draw   (165.04,241.92) .. controls (140.24,240.6) and (111.4,230) .. (98.35,219.6) .. controls (85.8,210.8) and (64.53,182.23) .. (63.86,155.43) .. controls (77.06,155.09) and (135.46,155.38) .. (164.66,155.92) .. controls (164.9,189.12) and (165.15,218.72) .. (165.04,241.92) -- cycle ;
%Shape: Polygon Curved [id:ds42712611559944613] 
\draw   (279,156) .. controls (277.63,180.79) and (260.2,201.6) .. (248.6,212.8) .. controls (235.8,225.2) and (203.8,240.99) .. (177,241.6) .. controls (176.69,228.4) and (177.59,185.45) .. (178.19,156.25) .. controls (211.39,156.09) and (255.8,155.84) .. (279,156) -- cycle ;
%Shape: Polygon Curved [id:ds48920724387244885] 
\draw   (63.82,142.2) .. controls (65.07,117.41) and (82.4,96.52) .. (93.95,85.26) .. controls (106.69,72.8) and (138.61,56.86) .. (165.41,56.13) .. controls (165.78,69.33) and (165.09,112.28) .. (164.62,141.48) .. controls (131.42,141.8) and (87.02,142.26) .. (63.82,142.2) -- cycle ;
%Straight Lines [id:da8092868323635163] 
\draw    (97.5,82.67) -- (95.56,84.08) ;
\draw [shift={(93.95,85.26)}, rotate = 323.85] [color={rgb, 255:red, 0; green, 0; blue, 0 }  ][line width=0.75]    (10.93,-3.29) .. controls (6.95,-1.4) and (3.31,-0.3) .. (0,0) .. controls (3.31,0.3) and (6.95,1.4) .. (10.93,3.29)   ;
%Straight Lines [id:da053205307614813124] 
\draw    (95,217) -- (96.77,218.37) ;
\draw [shift={(98.35,219.6)}, rotate = 217.82] [color={rgb, 255:red, 0; green, 0; blue, 0 }  ][line width=0.75]    (10.93,-3.29) .. controls (6.95,-1.4) and (3.31,-0.3) .. (0,0) .. controls (3.31,0.3) and (6.95,1.4) .. (10.93,3.29)   ;
%Straight Lines [id:da108065443063599] 
\draw    (242.5,218) -- (247.08,214.1) ;
\draw [shift={(248.6,212.8)}, rotate = 139.55] [color={rgb, 255:red, 0; green, 0; blue, 0 }  ][line width=0.75]    (10.93,-3.29) .. controls (6.95,-1.4) and (3.31,-0.3) .. (0,0) .. controls (3.31,0.3) and (6.95,1.4) .. (10.93,3.29)   ;
%Straight Lines [id:da5795017958070141] 
\draw    (247,82.5) -- (243.72,79.7) ;
\draw [shift={(242.2,78.4)}, rotate = 40.5] [color={rgb, 255:red, 0; green, 0; blue, 0 }  ][line width=0.75]    (10.93,-3.29) .. controls (6.95,-1.4) and (3.31,-0.3) .. (0,0) .. controls (3.31,0.3) and (6.95,1.4) .. (10.93,3.29)   ;

% Text Node
\draw (61.4,52.4) node [anchor=north west][inner sep=0.75pt]    {$Q_{2}$};
% Text Node
\draw (260.63,53.07) node [anchor=north west][inner sep=0.75pt]    {$Q_{1}$};
% Text Node
\draw (61.13,220.63) node [anchor=north west][inner sep=0.75pt]    {$Q_{3}$};
% Text Node
\draw (259.7,221.47) node [anchor=north west][inner sep=0.75pt]    {$Q_{4}$};
% Text Node
\draw (232,93.4) node [anchor=north west][inner sep=0.75pt]    {$\gamma _{1}$};
% Text Node
\draw (90.5,95.4) node [anchor=north west][inner sep=0.75pt]    {$\gamma _{2}$};
% Text Node
\draw (87.5,183.4) node [anchor=north west][inner sep=0.75pt]    {$\gamma _{3}$};
% Text Node
\draw (235,183.9) node [anchor=north west][inner sep=0.75pt]    {$\gamma _{4}$};

\end{tikzpicture}

\caption{Sectors of the complex plane.}
\label{vl:contour2}
\end{figure}
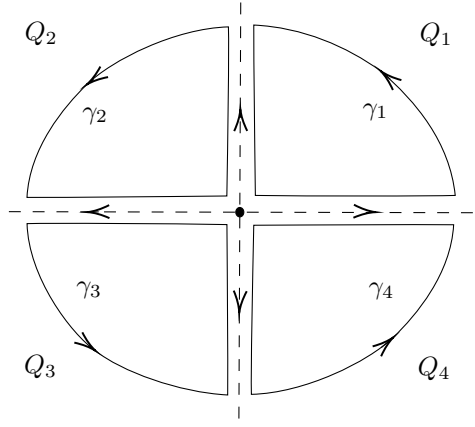

We proceed now as in \cite{CKV08}. Set $B(\zeta):= A(\zeta) \overline{A(\bar{\zeta})}^T$. The integral over $\gamma_1$ (see Figure \ref{vl:contour2}) of $B(\zeta)$ gives
\begin{align}
\int_{\gamma_1} A_1(\zeta) \overline{A_4(\zeta)}^T \dd \zeta= & \lim_{R \to \infty}\left(\int_0^{R} A_1(\zeta) \overline{A_4(\zeta)}^T \dd \zeta+ \int_{\i R}^0 A_1(\zeta) \overline{A_4(\zeta)}^T \dd \zeta\right)=0.
\label{proof:eq1}
\end{align}

On the other side, the integral over $\gamma_4$ of $B(\zeta)$ gives
\begin{align*}
\int_{\gamma_4} B(\zeta) \dd \zeta =& \lim_{R \to \infty}\left(\int_{R}^0 B(\zeta) \dd \zeta + \int_0^{-\i R} B(\zeta) \dd \zeta\right) = \lim_{R \to \infty}\left(\int_{R}^0 B(\zeta) \dd \zeta + \int_{\i R}^0 B(-\zeta) \dd \zeta\right) = 0,
\end{align*}
and, consequently,
\begin{align}
\lim_{R \to \infty}\left(\int_0^{R} A_4(\zeta) \overline{A_1(\zeta)}^T \dd \zeta - \int_{\i R}^0 A_1(\zeta) \overline{A_4(\zeta)}^T \dd \zeta\right)=0.
\label{proof:eq2}
\end{align}

Summing up Equations \eqref{proof:eq1} and \eqref{proof:eq2} and taking the limit, we obtain
\begin{align*}
\int_0^{\infty} A_1(\zeta) \overline{A_4(\zeta)}^T+ A_4(\zeta) \overline{A_1(\zeta)}^T \dd \zeta =& \int_0^{\infty} A_4(\zeta)\left[J_1+\bar{J}_1^T\right] \overline{A_4(\zeta)}^T \dd \zeta \\
=& 2\int_0^{\infty} A_4(\zeta)E_{22} \overline{A_4(\zeta)}^T \dd \zeta =0.
\end{align*}

Analogously, the integrals of $B(\zeta)$ over $\gamma_2$ and $\gamma_3$ give
\begin{align*}
\int_{\gamma_2} B(\zeta) \dd \zeta =&\lim_{R \to \infty}\left(\int_{-R}^0 A_2(\zeta) \overline{A_3(\zeta)}^T \dd \zeta+ \int_0^{\i R} A_2(\zeta) \overline{A_3(\zeta)}^T \dd \zeta\right)=0\\
\int_{\gamma_3} B(\zeta) \dd \zeta =& \lim_{R \to \infty}\left(\int_0^{-R} B(\zeta) \dd \zeta + \int_{-\i R}^0 B(\zeta) \dd \zeta\right) \\
=& \lim_{R \to \infty}\left(\int_0^{-R} B(\zeta) \dd \zeta + \int_0^{\i R} B(-\zeta) \dd \zeta\right)\\
=& \int_0^{-\infty} A_3(\zeta)\overline{A_2(\zeta)}^T \dd \zeta + \int_0^{\i\infty} A_2(\zeta) \overline{A_3(\zeta)}^T \dd \zeta = 0.
\end{align*}

Combining the previous results leads to
\begin{align*}
\int_{-\infty}^0 A_2(\zeta)[J_3+\bar{J}_3^T] \overline{A_2(\zeta)}^T \dd \zeta &= \int_{-\infty}^0 A_4(\zeta)J_1J_2\left[J_3+\bar{J}_3^T\right]\bar{J}_2^T\bar{J}_1^T \overline{A_4(\zeta)}^T \dd \zeta \\
&= 2\int_{-\infty}^0 A_4(\zeta)E_{22} \overline{A_4(\zeta)}^T \dd \zeta = 0.
\end{align*}

Therefore, the entries $(1,1)$ and $(2,2)$ give us
\begin{align*}
\int_{\mathbb{R}} |A_{4,12}(\zeta)|^2+|A_{4,22}(\zeta)|^2\dd \zeta &=0
\end{align*}

This implies that $A_{4,22}(\zeta) = A_{4,12}(\zeta) = 0$. Combined with the jump condition it implies that $A_{1,11}(\zeta) = A_{1,21}(\zeta) = 0$ and that $A_2=A_1$, $A_3=A_4$. It follows that the matrix-valued function $A(\zeta)$ restricted to the upper or lower half-plane is analytic. The proof that the remaining entries are also equal to zero is then exactly the same as in Proposition 2.5 of \cite{CKV08}. Therefore, $A(\zeta) \equiv 0$. \qed

\subsection{Asymptotics for $u$ and $v$}

We are now ready to prove the asymptotics of $u,v$ in the given regime. We resume the results with the following

\begin{proposition}
Let $|\tau_k| \leq M \tau_7^{k/7}$, $k = 1,3,5$ for a large enough constant $M > 0$. Then, as $\tau_7 \to 0$, %\textcolor{blue}{Carla, could you please add the result for $\gamma$?}
\begin{align*}
%\gamma &= -2^{5/7}\left(\frac{q}{2}+ p \right)-h  -\left(\frac{7}{2}\right)^{1/7}\left(\frac{\tau_{1} \tau_{5}}{7\tau_{7}^{6/7}} - \frac{3 \tau_{3} \tau_{5}^{2}}{98\tau_{7}^{13/7}} + \frac{15 \tau_{5}^{4}}{2744\tau_{7}^{20/7}}\right)+O(\tau_7^{2/7})\\
u &= a\tau_7^{-1/7}\left(\frac{q}{2}+ p \right)+ O(\tau_7^{1/7})\\
v &= a\tau_7^{-1/7}\left(\frac{\partial}{\partial \tau_1}\frac{q}{2}- a\tau_7^{-1/7}\frac{q^2}{2} + O\left(\tau_7^{1/7}\right)\right),
\end{align*}
where $a=\left(2^6/7\right)^{1/7}$, $q = q(x_0,x_1,x_2)$ is the solution to the third member of the Painlevé II hierarchy characterized by the uniquely solvable Riemann-Hilbert problem \ref{RHPP2} with $s = \i(\iota-1)$, and $p = p(x_0,x_1,x_2)$ satisfies the equation   $\frac{\partial}{\partial x_0} p = \frac{q^2}2$. The variables $x_0,x_1,x_2$ are given by
\begin{equation}
x_0 = - a \frac{\tau_1}{\tau_7^{1/7}}, \quad x_1 = \frac{3a^3}{4}  \frac{\tau_3}{\tau_7^{3/7}}, \quad x_2 = \frac{5a^5}{16} \frac{\tau_5}{\tau_7^{5/7}}.
\end{equation}

Moreover, for $\iota=1$, $q$ is the solution to the third member of the Painlevé II whose asymptotic behavior has been studied in \cite{CIK}, Theorem 1.8.
\end{proposition}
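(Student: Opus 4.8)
The plan is to express $u$ and $v$ through the first two large-$\zeta$ coefficients of the rescaled matrix $Z$ from \eqref{eq:defZ}, replace $Z$ by the global parametrix $\tilde Z$ up to the explicit error carried by the matrix $R$ of the Riemann--Hilbert problem \ref{RHPR}, and finally read off $p$ and $q$ from the expansion of $\tilde Z$ via its identification with the folded Painlev\'e II function $\Xi$ in Proposition \ref{relationZXi}. The proposition is thus mostly an \emph{assembly} of facts already established above (the coefficient formulas \eqref{eq:uvsX}, \eqref{eq:vvsX}, the error estimates, the Vanishing Lemma, and Proposition \ref{relationZXi}).

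First I would record the scaling of the coefficients. Writing $\zeta=\xi s_0^{-2/7}$ in the asymptotics \eqref{eq:asympX} and inserting into \eqref{eq:defZ}, the prefactor $A$ cancels against $A^{-1}$, the diagonal factors $s_0^{\pm\sigma_3/14}$ commute through $\xi^{-\sigma_3/4}$, and, using $\tau_7=\tfrac27 s_0$ to align the exponentials with those of the Riemann--Hilbert problem \ref{RHPZ}, one gets $Z^{(j)}=s_0^{j/7}X^{(j)}$. Substituting into \eqref{eq:uvsX} and \eqref{eq:vvsX} reduces everything to $Z^{(1)},Z^{(2)}$:
\[
u=s_0^{-1/7}\big(Z^{(1)}_{11}+\i Z^{(1)}_{12}\big),\qquad v=s_0^{-2/7}\big(-2\i Z^{(1)}_{11}Z^{(1)}_{12}+2(Z^{(1)}_{12})^2-2\i Z^{(2)}_{12}\big).
\]

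Next I would import the steepest-descent output. From the factorization \eqref{eq:defR} one has $Z=\tilde A^{-1}R\,\tilde Z$ for $|\xi|>1$, and the small-norm estimates \eqref{eq:estnormJR} give $R=I+O(s_0^{2/7})$ with leading correction decaying like $\xi^{-1}$, while the triangular prefactor satisfies $\tilde a=O(s_0^{2/7})$. Matching the powers $\xi^{-1/2}$ and $\xi^{-1}$ in $N^{-1}\xi^{\sigma_3/4}\tilde A^{-1}R\,\xi^{-\sigma_3/4}N$ then yields $Z^{(j)}=\tilde Z^{(j)}+O(s_0^{2/7})$ for $j=1,2$. Because $\tilde Z^{(j)}=O(1)$ uniformly in the regime $|\tau_k|\le M s_0^{k/7}$, these errors contribute $O(s_0^{1/7})$ to $u$ and $O(1)$ to $v$, which are exactly the claimed remainders. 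The Vanishing Lemma proven above, valid for the purely imaginary $s=\i(\iota-1)$, guarantees the unique solvability of \ref{RHPP2} (hence of \ref{RHPP34}), so that $\tilde Z$ indeed exists.

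Finally I would identify the parametrix coefficients with the Painlev\'e data. Proposition \ref{relationZXi} gives $\tilde Z(\xi)=c^{\sigma_3/4}\Xi(c\xi)$ with $c=2^{-10/7}$; expanding and using \eqref{eq:asympXi} one obtains $\tilde Z^{(j)}=c^{-j/2}\,e^{\pi\i\sigma_3/4}\Psi^{(j)}e^{-\pi\i\sigma_3/4}$. For $j=1$, formula \eqref{eq:pandq} shows this coefficient carries $p$ and $q$, and since $c^{-1/2}=2^{5/7}$ while $a\tau_7^{-1/7}=2^{5/7}s_0^{-1/7}$, the prefactors line up exactly, producing the stated leading term $a\tau_7^{-1/7}(p+q/2)$ for $u$. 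For $v$, the quadratic term $2(Z^{(1)}_{12})^2$ gives the $q^2$ contribution, whereas $Z^{(2)}_{12}$, related through the isomonodromic $x_0$-Lax structure to $\partial_{x_0}$ of the $j=1$ data, gives the derivative $\partial_{\tau_1}(q/2)$; one confirms the assembly by checking that the leading term of $v$ equals $\partial_{\tau_1}$ of the leading term of $u$ upon using $\partial_{x_0}p=\tfrac12 q^2$ and $x_0=-a\tau_1/\tau_7^{1/7}$, which reproduces \eqref{vasymp2} and exhibits the Miura structure. The main obstacle is twofold: threading the phase factors $e^{\pm\pi\i\sigma_3/4}$ of the folding through the reality conventions so that $u,v$ come out real with the correct constants (the magnitude $2^{5/7}$ is forced, but the placement of the $\i$'s in \eqref{eq:pandq} must be tracked consistently), and ensuring that the uniform $O(s_0^{2/7})$ smallness of $R-I$ genuinely transfers to the individual coefficients $Z^{(1)},Z^{(2)}$, i.e. that the half-integer-power matching against the $\tilde A$ prefactor is controlled.
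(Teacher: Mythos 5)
Your proposal follows essentially the same route as the paper: express $u,v$ through $X^{(1)},X^{(2)}$ via \eqref{eq:uvsX}--\eqref{eq:vvsX}, rescale to $Z^{(j)}=s_0^{j/7}X^{(j)}$, replace $Z^{(j)}$ by $\tilde Z^{(j)}$ up to $O(s_0^{2/7})$ using the small-norm estimate on $R$, and identify $\tilde Z^{(1)}$ with $\Psi^{(1)}$ through Proposition \ref{relationZXi}. The only point to watch (which you yourself flag) is the extra factor $-\i$ in $\tilde Z^{(1)}=-2^{5/7}\i\,\e^{\pi\i\sigma_3/4}\Psi^{(1)}\e^{-\pi\i\sigma_3/4}$ coming from the substitution $\Psi(\i z^{1/2})$ in the folding, without which $u$ would come out imaginary; with it, the constants match exactly as in the paper.
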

\begin{proof}

As for $u$, we start recalling the equation \eqref{eq:uvsX}. On the other hand, thanks to \eqref{eq:defZ}, 
$X^{(1)}= s_0^{-1/7} Z^{(1)}$, so that
\begin{align*}
u &= s_0^{-1/7}(Z^{(1)}_{11}+\i Z^{(1)}_{12}).
\end{align*}

Now we use the fact that, thanks to the small norm theorem, from $J_R = I + O(s_0^{2/7})$ it follows that $R^{(1)} = O(s_0^{2/7})$, and therefore 
\begin{align*}
Z^{(1)}_{11}+\i Z^{(1)}_{12} =& \tilde{Z}_{11}^{(1)}+\i\tilde{Z}_{12}^{(1)}+ O(s_0^{2/7}).
\end{align*}

The relation given in Proposition \ref{relationZXi} together with the equation \eqref{eq:asympXi} gives 
\begin{align*}
\tilde{Z}^{(1)} = & -2^{5/7} \i \e^{\pi \i \sigma_3/4} \Psi^{(1)}\e^{-\pi \i \sigma_3/4} = -2^{5/7} \i \begin{pmatrix}
\i p & q/2 \\ q/2 & -\i p
\end{pmatrix},
\end{align*}
and the asymptotics for $u$ is proven. The formula for $v$ is proven analogously, starting with \eqref{eq:vvsX} instead of \eqref{eq:uvsX}. %At last, we obtain the asymptotics of $\gamma$ by means of the one of $u$ using \eqref{eq:uvsX}.

\end{proof}

\section{The third asymptotic regime}\label{sec:6}

The last asymptotic regime we study is the one where there exist a constant $M_1$ and a (sufficiently large) constant $M_2$  such that $-M_1 s_0^{4/7} \leq \tau_5 \leq -M_2 s_0^{5/7}$. In this Section, for simplicity, we restrict our study to the case $\iota = 1$. It is convenient to re-write the asymptotic expansion \eqref{eq:asympZ} for the Riemann-Hilbert problem \ref{RHPZ} in the following way
\begin{equation}\label{eq:asympZsect7}
\begin{split}Z(\xi) = & \left[I+\frac{\hat{Z}^{(1)}}{\xi}+O(\xi^{-3/2})\right]\xi^{-\frac{\sigma_3}{4}}N\e^{-\left(\frac{2}{7}\xi^{\frac{7}{2}}+\kappa_0\xi^{\frac{5}{2}}+\kappa_1\xi^{\frac{3}{2}}+\kappa_2\xi^{\frac{1}{2}}\right)\sigma_3}\\
&\hspace{5cm}
\times \left\lbrace \begin{array}{cc}
I \hspace{0.3cm}&  |\arg \xi|<\pi-\epsilon\\
\begin{pmatrix}
1 & 0 \\ \pm 1 & 1
\end{pmatrix}  \hspace{0.3cm}& \pi-\epsilon<\pm \arg \xi<\pi  
\end{array}\right.,\end{split}
\end{equation}
where $\kappa_0 = {\tau_5}/{s_0^{5/7}}$, $\kappa_1 = {\tau_3}/{s_0^{3/7}}$ and $\kappa_2 = {\tau_1}/{s_0^{1/7}}$,
as our results are more conveniently expressed in function of the matrix $\hat Z^{(1)}$. 

We define the variable 
\begin{equation}\label{eq:defeta}
	\eta := -\left(\frac{2}{7}\xi^{3}s_0^{1/7}+\frac{\tau_5}{s_0^{4/7}}\xi^{2}+\frac{\tau_3}{s_0^{2/7}}\xi+\tau_1\right),
\end{equation} 
so that the exponential phase in the asymptotic expansion above can be rewritten as $(s_0^{-2/7}\xi)^{1/2}\eta$. The hypothesis in Theorem \ref{thm:asympuvgamma} allows us to control the behavior of $\eta$ when $\xi$ is bounded, under some assumption on the behavior of the variables $\tau_j$.
\begin{assumption}\label{assumptionsthirdregime}
	There exist three constants $M_1,M_2$ and $M_3$, with $M_2$ and $M_3$ sufficiently large, such that $-M_1 s_0^{4/7} \leq \tau_5 \leq -M_2 s_0^{5/7}$, $| \tau_3 | \leq M_1 s_0^{2/7}$, $-M_3 \leq \tau_1 \leq x$, where $$x=(-2s_0^{1/7}/7+\tau_5/s_0^{4/7}-|\tau_3|/s_0^{2/7})$$ and $(-M_3, x) \neq \emptyset$.
\end{assumption}
\begin{lemma}\label{lemma:eta}
Suppose that Assumptions \ref{assumptionsthirdregime} are verified.
Then, for $|\xi| \leq 1$, $\eta$, defined in \eqref{eq:defeta}, has the following properties:
\begin{itemize}
\item[a)] $\eta$ is bounded and has positive real part;
\item[b)] as $s_0 \to 0$, $\eta = -\tau_1 +O(x)$.
\item[c)] $\eta \to 0$ only when $\tau_1 \to x$ and $x \to 0$.
\end{itemize}
\end{lemma}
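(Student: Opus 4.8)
The plan is to treat $\eta$, for fixed $s_0$, as a cubic polynomial in $\xi$ and to read off all three properties from the sizes of its four coefficients, which Assumption \ref{assumptionsthirdregime} controls. Writing $\eta = a_0 + a_1\xi + a_2\xi^2 + a_3\xi^3$ with $a_0 = -\tau_1$, $a_1 = -\tau_3/s_0^{2/7}$, $a_2 = -\tau_5/s_0^{4/7}$ and $a_3 = -\tfrac27 s_0^{1/7}$, the assumptions give $M_2 s_0^{1/7}\le a_2\le M_1$ (in particular $a_2>0$), $|a_1|\le M_1$, $|a_3| = \tfrac27 s_0^{1/7}$ and $-M_3\le \tau_1\le x\le 0$. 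The decisive observation is that the very definition of $x$ rearranges into the identity $-x = |a_3| + a_2 + |a_1|$, i.e.\ $-x$ is exactly the sum of the moduli of the three non-constant coefficients. Boundedness for $|\xi|\le 1$ (part a) is then immediate: $|\eta|\le |a_0| + |a_1| + a_2 + |a_3|\le M_3 + 2M_1 + \tfrac27 s_0^{1/7}$.

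For the positivity of $\Re\eta$ I would use the minimum principle: since $\eta$ is entire, $\Re\eta$ is harmonic, so its minimum over $\{|\xi|\le1\}$ is attained on $\{|\xi|=1\}$ and it suffices to bound it there. Setting $\xi = e^{i\theta}$ (so that $\Re(a_k e^{ik\theta}) = a_k\cos k\theta$ as all $a_k$ are real), using $a_0 = -\tau_1\ge -x$, and exploiting the identity above, I would regroup
\begin{equation*}
\Re\eta = (x-\tau_1) + |a_3|\,(1-\cos 3\theta) + 2a_2\cos^2\theta + |a_1|\,\bigl(1+\operatorname{sgn}(a_1)\cos\theta\bigr),
\end{equation*}
in which every summand is nonnegative because $x-\tau_1\ge 0$, $a_2>0$ and $|\cos\theta|\le 1$. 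This yields $\Re\eta\ge0$; strictness comes from a short non-degeneracy check, namely that the four terms cannot vanish simultaneously: the term $2a_2\cos^2\theta$ has a strictly positive coefficient, so its vanishing forces $\cos\theta=0$, i.e.\ $\theta=\pm\pi/2$, but there $\cos3\theta=0$ and hence $|a_3|(1-\cos3\theta)=\tfrac27 s_0^{1/7}>0$. Thus $\Re\eta>0$ on the boundary, and therefore on the whole closed disc.

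Part (b) is the same coefficient bound read differently: $\eta + \tau_1 = a_1\xi + a_2\xi^2 + a_3\xi^3$, so for $|\xi|\le1$ one has $|\eta+\tau_1|\le |a_1| + a_2 + |a_3| = -x = |x|$, which is precisely $\eta = -\tau_1 + O(x)$ with implied constant $1$. For part (c) I read ``$\eta\to0$'' as uniform convergence on $\{|\xi|\le1\}$ as $s_0\to0$; evaluating at the interior point $\xi=0$ gives $\eta(0) = -\tau_1$, so $\eta\to0$ forces $\tau_1\to0$. Since the assumptions pin $\tau_1\le x\le 0$, the squeeze theorem then gives $x\to0$, and hence $\tau_1-x\to0$, i.e.\ $\tau_1\to x$. (Conversely, $\tau_1\to x\to 0$ together with part (b) gives $\sup_{|\xi|\le1}|\eta|\le |\tau_1|+|x|\to0$, so the characterisation is sharp; note that $\eta$ can tend to $0$ at an \emph{individual} boundary point without $x\to0$, which is why the uniform reading is the correct one.)

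The only genuinely delicate step is the positivity in part (a). It hinges on noticing that the constant $-x$ supplied by Assumption \ref{assumptionsthirdregime} equals $|a_3|+a_2+|a_1|$ exactly, which is what permits the regrouping into manifestly nonnegative pieces, and then that the otherwise negligible cubic term $\tfrac27 s_0^{1/7}(1-\cos3\theta)$ is precisely what restores strict positivity at $\theta=\pm\pi/2$, the only angles where the dominant quadratic term $2a_2\cos^2\theta$ degenerates. Everything else is routine bookkeeping with the coefficient bounds.
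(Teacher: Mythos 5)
Your proof is correct, and for part (a) it is actually more careful than the paper's. The paper establishes (a) by sandwiching $\eta$ between two explicit cubics and then using $1-\xi^3\ge 0$, $1+\xi^2\ge 1$, $1-\xi\ge 0$ --- a chain of scalar inequalities that only makes sense for real $\xi\in[-1,1]$, whereas the lemma (and its later use on the unit circle in the jump of $R$) concerns complex $\xi$ with $|\xi|\le 1$. Your route --- reduce to $|\xi|=1$ by the minimum principle for the harmonic function $\Re\eta$, then regroup $a_0+a_1\cos\theta+a_2\cos2\theta+a_3\cos3\theta$ into manifestly nonnegative pieces via the exact identity $-x=|a_3|+a_2+|a_1|$ --- closes that gap; the same identity is what the paper exploits implicitly when it replaces $-\tau_1$ by $-x$ in its lower bound. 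Parts (b) and (c) coincide with the paper's argument (the paper writes $\eta=\tilde\eta-\tau_1$ with $|\tilde\eta|\le|x|\le|\tau_1|$), and your remark that (c) is only true under a uniform-in-$\xi$ reading is well taken: at $\xi=\pm\i$ with $\tau_1=x$ and $\tau_3=0$ one has $\eta=\frac{2}{7}s_0^{1/7}(1\pm\i)\to 0$ without $x\to 0$. The only content of the paper's proof you do not reproduce is the closing quantitative observation that, because of the term $\frac{2}{7}s_0^{1/7}\xi^3$, $\eta$ can vanish at most at the rate $s_0^{1/7}$; this is not part of the statement but is invoked later in the error analysis, and your decomposition delivers it as well, since $|a_3|(1-\cos3\theta)+2a_2\cos^2\theta=s_0^{1/7}\left(\frac{2}{7}(1+3\cos\theta-4\cos^3\theta)+2(a_2 s_0^{-1/7})\cos^2\theta\right)$ is bounded below by a positive constant times $s_0^{1/7}$ uniformly in $\theta$ (the factor $1+3c-4c^3=-(c-1)(2c+1)^2$ vanishes only at $c=1,-\frac12$, where $\cos^2\theta$ does not).
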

\begin{proof}
For boundedness and positiveness, notice that under the conditions above
\begin{align*}
-\frac{2}{7}\xi^{3}s_0^{1/7}+\frac{M_2 s_0^{5/7}}{s_0^{4/7}}\xi^{2}-\frac{M_1 s_0^{2/7}}{s_0^{2/7}}\xi -x\leq \eta \leq -\frac{2}{7}\xi^{3}s_0^{1/7}+\frac{M_1 s_0^{4/7}}{s_0^{4/7}}\xi^{2}+\frac{M_1 s_0^{2/7}}{s_0^{2/7}}\xi-M_3,\\
\frac{2}{7}(1-\xi^{3})s_0^{1/7}+M_2 s_0^{1/7}(1+\xi^{2})+M_1(1-\xi) \leq \eta \leq -\frac{2}{7}\xi^{3}s_0^{1/7}+M_1 \xi^{2}+ M_1 \xi+M_3.
\end{align*}
Moreover, denoting $\tilde{\eta} = -\frac{2}{7}\xi^{3}s_0^{1/7}-\frac{\tau_5}{s_0^{4/7}}\xi^{2}-\frac{\tau_3}{s_0^{2/7}}\xi$ we see that $\eta = \tilde{\eta}-\tau_1$, $\tilde{\eta} = O(x)$ and $|\tilde{\eta}|\leq |\tau_1|$. This means that $\eta \to 0$ only on the regime $\tau_1 \to x$ and $x \to 0$. Because of the the term  $\frac{2}{7}\xi^{3}s_0^{1/7}$, we conclude that in such regime $\eta \to 0$ at most as $s_0^{1/7}$. This happens only when $\tau_5 \to -M_2 s_0^{5/7}$, $\tau_3 = O(s_0^{3/7})$ and $\tau_1 \to x$.
\end{proof}

The local parametrix in this asymptotic regime will be given by the following 

\begin{problem}\label{RHPHsigma}\hfill
\begin{enumerate}
\item $H_{\sigma}(\zeta)$ is analytic on $\mathbb{C}\setminus \mathbb R$, and has continuous boundary values $H_{\sigma,\pm}$ satisfying the jump relation
\begin{equation}
H_{\sigma,+}(\zeta)=
H_{\sigma,-}(\zeta)\begin{pmatrix}
1 & 1-\sigma(\zeta) \\ 0 & 1
\end{pmatrix},\quad \zeta\in \mathbb R.
\end{equation}
\item As $\zeta\to \infty$,
\begin{equation}\label{eq:asympHsigma}
\begin{split}H_{\sigma}(\zeta) = & \left[I+\frac{1}{\zeta}\begin{pmatrix}
q_{\sigma} & p_{\sigma} \\ r_{\sigma} & -q_{\sigma}
\end{pmatrix} +O(\zeta^{-2})\right]\zeta^{-\frac{\sigma_3}{4}}N\e^{\zeta^{1/2}\eta\sigma_3}\\
&\hspace{5cm}
\times \left\lbrace \begin{array}{cc}
I \hspace{0.3cm}&  |\arg \zeta|<\pi-\epsilon\\
\begin{pmatrix}
1 & 0 \\ \pm 1 & 1
\end{pmatrix}  \hspace{0.3cm}& \pi-\epsilon<\pm \arg \zeta<\pi  
\end{array}\right..\end{split}
\end{equation}
\end{enumerate}
\end{problem}
%
%By standard computations related to Lax pairs, one can easily deduce that \textcolor{blue}{Carla, are these $p_{\sigma}$ and $q_\sigma$ the same as $p_\sigma^0$ and $q_\sigma^0$ in my paper with Giulio and Tom? In this case, the equation below is not coherent with the equation immediately after (7.25). If a sign is wrong here, then it fixes the problem I believe?}
%\begin{equation}\label{eq:derivp}
%\frac{d}{d \eta} p_{\sigma}(\eta)= 2 q_{\sigma}(\eta) -p_{\sigma}^{2}(\eta).
%\end{equation}
%
From Lemma \ref{lemma:eta} it is possible to write the following Taylor expansions for $p_{\sigma}(\eta)$ and $q_{\sigma}(\eta)$:
\begin{align}
p_{\sigma}(\eta) &= p_{\sigma}(-\tau_1)+p_{\sigma}'(-\tau_1)(\eta + \tau_1)+O((\eta + \tau_1)^2) = p_{\sigma}(-\tau_1)+O(x), \label{pexpansion}\\
q_{\sigma}(\eta) &= q_{\sigma}(-\tau_1)+q_{\sigma}'(-\tau_1)(\eta + \tau_1)+O((\eta + \tau_1)^2) = q_{\sigma}(-\tau_1)+O(x) \label{qexpansion}. 
\end{align}
They are particularly useful when $x \to 0$ ($\tau_5 =o(s_0^{4/7})$, $\tau_3 = o(s_0^{2/7})$).\\

Some straightforward manipulations allow us to verify that, if $H_\sigma$ satisfies the Riemann-Hilbert problem \ref{RHPHsigma}, then
\begin{equation}
\Phi_{\sigma}(\zeta) := \e^{-\pi \i \sigma_3/4}\begin{pmatrix}
0 & -\i \\ -\i &0
\end{pmatrix} H_{\sigma}(\zeta),
\label{eq:connectCCR}
\end{equation}
satisfies the Riemann-Hilbert problem in Section 7.1 of \cite{CCR}. Below we transcribe some useful results from \cite{CCR} concerning to the small $\eta$ analysis of $H_{\sigma}(\zeta)$.
\begin{lemma}\label{lemma:ccr1}
Set $z:= \eta^2 \zeta$. Then, for $|z|>1$, it follows that as $\eta \to 0$,
\begin{equation}
H_{\sigma}(z \eta^{-2}) = \eta^{\sigma_3/2} S(z) H(z),
\end{equation}
where
\begin{align*}
H(z) &= \begin{pmatrix}
0& \i \\ \i &0
\end{pmatrix}\e^{\pi \i \sigma_3/4} \Phi_{\mathrm{Be}}(z \eta^{-2}) \\
S(z) &= I+\frac{\eta^2}{z}\begin{pmatrix}
\frac{3m_0(\sigma)}{16} & \frac{m_0(\sigma)}{2} \\ - \frac{9m_0(\sigma)}{128} & -\frac{3m_0(\sigma)}{16}
\end{pmatrix} + O(\eta^4/z),
\end{align*}
for $m_0(\sigma) = \int_{\mathbb{R}}(\chi_{[0,\infty)}(s)-\sigma(s)) \dd s$ and $\Phi_{\mathrm{Be}}(\zeta)$ is the same as in \cite{CCR} (Section 7.1):
\begin{equation}
	\Phi_{\mathrm{Be}}(\zeta) := \begin{pmatrix}
								1 & \frac{3 \i}8 \\
								0 & 1
							\end{pmatrix} \begin{pmatrix}
											\sqrt{\pi\zeta}\mathrm{I}_1(\sqrt{\zeta}) & -\i \sqrt{\frac{\zeta}\pi} \mathrm{K}_1(\sqrt{\zeta})\\
											\\
											-\i \sqrt{\pi}\mathrm{I}_0(\sqrt{\zeta}) & \frac{1}{\sqrt{\pi}} \mathrm{K}_0(\sqrt{\zeta})
										\end{pmatrix}, \quad \zeta \in \mathbb{C}\backslash(-\infty,0],
										\label{eq:defPhiBe}
\end{equation}
where $I_0, I_1, K_0$ and $K_1$ are the modified Bessel functions, see for instance \cite{DLMF}.
\end{lemma}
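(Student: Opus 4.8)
The statement is a transcription of the small-$\eta$ analysis of \cite{CCR}, and the plan is to import it through the explicit connection \eqref{eq:connectCCR}. First I would invert that relation: since $\e^{-\pi\i\sigma_3/4}\begin{pmatrix}0 & -\i\\ -\i & 0\end{pmatrix}$ is a constant invertible matrix whose inverse is $\begin{pmatrix}0 & \i\\ \i & 0\end{pmatrix}\e^{\pi\i\sigma_3/4}$, the function $H_\sigma$ of the Riemann--Hilbert problem \ref{RHPHsigma} can be written as $H_\sigma(\zeta)=\begin{pmatrix}0 & \i\\ \i & 0\end{pmatrix}\e^{\pi\i\sigma_3/4}\,\Phi_\sigma(\zeta)$, where $\Phi_\sigma$ is exactly the solution of the Riemann--Hilbert problem of Section 7.1 of \cite{CCR}. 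Left multiplication by a constant matrix leaves the jump $\begin{pmatrix}1 & 1-\sigma(\zeta)\\ 0 & 1\end{pmatrix}$ unchanged, so the whole content of the lemma is carried by $\Phi_\sigma$, and the prefactor $\begin{pmatrix}0 & \i\\ \i & 0\end{pmatrix}\e^{\pi\i\sigma_3/4}$ is precisely the one that turns the Bessel leading order into the matrix $H(z)$ of the statement.

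Next I would quote the small-$\eta$ construction of \cite{CCR}. After the rescaling $z=\eta^2\zeta$, the Riemann--Hilbert problem for $\Phi_\sigma$ is approximated, away from the origin, by a global parametrix whose building block is the Bessel parametrix $\Phi_{\mathrm{Be}}$ of \eqref{eq:defPhiBe}; the difference between $\Phi_\sigma$ and this parametrix is controlled by the small-norm theorem and produces an error matrix $R$ with $R=I+O(\eta^2)$. The point of the rescaling is that the only trace left by $\sigma$ in the leading correction is the mass deficit $m_0(\sigma)=\int_{\mathbb R}(\chi_{[0,\infty)}-\sigma)\,\d s$: it enters because the jump of $R$ on the relevant ray is governed by $\chi_{[0,\infty)}-\sigma$, and the leading residue of its Cauchy transform equals $m_0(\sigma)$ times a fixed nilpotent matrix. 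The hypothesis $|z|>1$ guarantees that the point $z\eta^{-2}$ lies in the region where the parametrix is glued to the outer solution, so that evaluating $R$ there gives a correction of size $\eta^2/z$, which becomes $S(z)-I$ after translating into the variables of the present paper.

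Finally I would transport the \cite{CCR} factorization back through the constant matrix $\begin{pmatrix}0 & \i\\ \i & 0\end{pmatrix}\e^{\pi\i\sigma_3/4}$, extracting the diagonal scaling factor $\eta^{\sigma_3/2}$ coming from the rescaling of the normalisation, and read off the explicit entries of $S(z)$. Commuting $\eta^{\sigma_3/2}$, which is diagonal, to the far left reshuffles the off-diagonal part of the correction, which accounts for the asymmetric magnitudes of the $(1,2)$ and $(2,1)$ entries in the statement.

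I expect the genuine obstacle to be the bookkeeping of the four explicit constants $\tfrac{3m_0(\sigma)}{16}$, $\tfrac{m_0(\sigma)}{2}$, $-\tfrac{9m_0(\sigma)}{128}$, $-\tfrac{3m_0(\sigma)}{16}$ in $S(z)$. These are produced by combining the first subleading term in the large-argument expansion of the modified Bessel functions $I_0,I_1,K_0,K_1$ entering $\Phi_{\mathrm{Be}}$ with the fixed upper-triangular factor $\begin{pmatrix}1 & \tfrac{3\i}8\\ 0 & 1\end{pmatrix}$ in \eqref{eq:defPhiBe}, and then conjugating by $\begin{pmatrix}0 & \i\\ \i & 0\end{pmatrix}\e^{\pi\i\sigma_3/4}$. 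Checking that these match, after the change of normalisation \eqref{eq:connectCCR}, the correction computed in \cite{CCR} is the one computational point; everything else is a direct translation of the small-norm analysis already carried out there.
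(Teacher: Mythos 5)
Your proposal follows essentially the same route as the paper, whose proof consists precisely of citing Lemma 7.1 and subsection 7.2 of \cite{CCR} together with the conjugation \eqref{eq:connectCCR}; you simply spell out how that import works (inverting the constant prefactor, quoting the Bessel-parametrix/small-norm analysis, and tracking where $m_0(\sigma)$ and the factor $\eta^{\sigma_3/2}$ come from). The only loose point is attributing the four explicit constants in $S(z)$ to the Bessel subleading terms alone — since they are all proportional to $m_0(\sigma)$ they arise from the $\chi_{[0,\infty)}-\sigma$ jump of the error matrix conjugated by the parametrix, as you in fact say correctly earlier in your argument — but this does not affect the validity of the approach.
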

\begin{proof} It follows from \cite{CCR} (Lemma 7.1 and subsection 7.2) together with Equation \eqref{eq:connectCCR}. \end{proof}
\begin{lemma}
As $\eta \to 0$,
\begin{align*}
p_{\sigma}(\eta) &= \frac{\eta}{2}m_0(\sigma)-\frac{1}{8\eta} + O(\eta^3), &
q_{\sigma}(\eta) &= \frac{3m_0(\sigma)}{16}+\frac{9}{128 \eta^2}+ O(\eta^2),\\
r_{\sigma}(\eta) &= -\frac{9 m_0(\sigma)}{128 \eta}+ \frac{39}{512 \eta^3}+ O(\eta).
\end{align*}
\end{lemma}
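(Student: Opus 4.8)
The plan is to read $p_\sigma,q_\sigma,r_\sigma$ off the $\zeta^{-1}$ coefficient matrix $M_\sigma:=\begin{pmatrix}q_\sigma&p_\sigma\\r_\sigma&-q_\sigma\end{pmatrix}$ in \eqref{eq:asympHsigma}, and to evaluate that matrix as $\eta\to0$ through the hard-edge factorization of Lemma \ref{lemma:ccr1}. First I would substitute the scaling $z=\eta^2\zeta$ into \eqref{eq:asympHsigma}; since $\zeta^{-\sigma_3/4}=z^{-\sigma_3/4}\eta^{\sigma_3/2}$ and $\zeta^{1/2}\eta=z^{1/2}$, this gives
\[H_\sigma(z\eta^{-2})=\Big[I+\tfrac{\eta^2}{z}M_\sigma+O(\eta^4 z^{-2})\Big]z^{-\sigma_3/4}\eta^{\sigma_3/2}N\,\e^{z^{1/2}\sigma_3}.\]
Comparing with $H_\sigma(z\eta^{-2})=\eta^{\sigma_3/2}S(z)H(z)$ from Lemma \ref{lemma:ccr1} and conjugating by $\eta^{-\sigma_3/2}$, the identity collapses to a matching of $z^{-1}$ coefficients,
\[\begin{pmatrix}\eta^2 q_\sigma&\eta p_\sigma\\ \eta^3 r_\sigma&-\eta^2 q_\sigma\end{pmatrix}=H_2+\eta^2 S_1+O(\eta^4),\]
where $S_1=\begin{pmatrix}\tfrac{3m_0(\sigma)}{16}&\tfrac{m_0(\sigma)}{2}\\ -\tfrac{9m_0(\sigma)}{128}&-\tfrac{3m_0(\sigma)}{16}\end{pmatrix}$ is the coefficient matrix of $S(z)$ in Lemma \ref{lemma:ccr1} and $H_2$ is the $z^{-1}$ coefficient of $H(z)$ once it is put into the normalized form $[I+\cdots]z^{-\sigma_3/4}N\e^{z^{1/2}\sigma_3}$. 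The conjugation by $\eta^{-\sigma_3/2}$ is precisely what deposits the factors $\eta^{-1},\eta$ on the off-diagonal entries.

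The only genuine computation is $H_2$. I would insert the classical large-argument asymptotics of $I_0,I_1,K_0,K_1$ into the definition \eqref{eq:defPhiBe} of $\Phi_{\mathrm{Be}}$. Organizing the four entries — the row index producing the power $\zeta^{\pm1/4}$ and the column index the exponential $\e^{\pm\sqrt\zeta}$ — one writes
\[\Phi_{\mathrm{Be}}(\zeta)=\big[I+B_1\zeta^{-1/2}+B_2\zeta^{-1}+\cdots\big]\,\zeta^{\sigma_3/4}\tfrac1{\sqrt2}\begin{pmatrix}1&-\i\\-\i&1\end{pmatrix}\e^{\sqrt\zeta\sigma_3},\]
where the triangular prefactor has been commuted through $\zeta^{\sigma_3/4}$ via $\begin{pmatrix}1&3\i/8\\0&1\end{pmatrix}\zeta^{\sigma_3/4}=\zeta^{\sigma_3/4}\begin{pmatrix}1&\tfrac{3\i}{8}\zeta^{-1/2}\\0&1\end{pmatrix}$. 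Left-multiplying by $G:=\begin{pmatrix}0&\i\\\i&0\end{pmatrix}\e^{\pi\i\sigma_3/4}$ turns $\zeta^{\sigma_3/4}$ into $\zeta^{-\sigma_3/4}$ (because $\begin{pmatrix}0&\i\\\i&0\end{pmatrix}\zeta^{\sigma_3/4}=\zeta^{-\sigma_3/4}\begin{pmatrix}0&\i\\\i&0\end{pmatrix}$) and sends the leading constant matrix to $N$; hence $H_2=GB_2G^{-1}$ and $H_1=GB_1G^{-1}$. Carrying the Bessel expansions to the relevant order yields the purely numerical entries $(H_2)_{11}=\tfrac{9}{128}$, $(H_2)_{12}=-\tfrac18$, $(H_2)_{21}=\tfrac{39}{512}$, together with $H_1=0$ (consistent with the absence of a $\zeta^{-1/2}$ term in \eqref{eq:asympHsigma}).

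Finally I would read the three entries off the matching identity. Dividing the $(1,1)$, $(1,2)$ and $(2,1)$ entries by $\eta^2$, $\eta$ and $\eta^3$ respectively gives
\[q_\sigma=\tfrac{9}{128\eta^2}+\tfrac{3m_0(\sigma)}{16}+O(\eta^2),\quad p_\sigma=-\tfrac1{8\eta}+\tfrac{\eta}{2}m_0(\sigma)+O(\eta^3),\quad r_\sigma=\tfrac{39}{512\eta^3}-\tfrac{9m_0(\sigma)}{128\eta}+O(\eta),\]
which is the claim; the error orders $O(\eta^2),O(\eta^3),O(\eta)$ come from the discarded $O(\eta^4/z)$ term of $S(z)$ after the same division.

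The main obstacle is the exact bookkeeping of $H_2$: one must carry the modified-Bessel asymptotics to second subleading order through both the triangular prefactor and the conjugation by $G$, and the $(2,1)$ entry $\tfrac{39}{512}$ in particular receives contributions from several terms, so sign and normalization slips are easy. A secondary point requiring care is the consistency of the two normalizations — verifying that $G$ and the factor $\eta^{\sigma_3/2}$ indeed reconcile the Bessel canonical form with the $N$-normalization of \eqref{eq:asympHsigma}, equivalently that $H_1=0$ — and confirming that the discarded $O(\eta^4/z)$ term of $S(z)$ contributes only at the stated error orders. All of these are routine once Lemma \ref{lemma:ccr1} and the connection \eqref{eq:connectCCR} to the Bessel parametrix of \cite{CCR} are in hand.
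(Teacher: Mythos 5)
Your proposal is correct and follows essentially the same route as the paper, whose proof of this lemma is a one-line deferral to \cite{CCR}, subsection 7.2, combined with the expansion \eqref{eq:asympHsigma}: namely, match the $z^{-1}$ coefficients of $H_\sigma(z\eta^{-2})$ against the factorization $\eta^{\sigma_3/2}S(z)H(z)$ of Lemma \ref{lemma:ccr1}, with $S(z)$ supplying the $m_0(\sigma)$ terms and the Bessel parametrix supplying the numerical ones. Your explicit values $(H_2)_{11}=\tfrac{9}{128}$, $(H_2)_{12}=-\tfrac18$, $(H_2)_{21}=\tfrac{39}{512}$ and $H_1=0$ check out against the large-argument asymptotics of $I_0,I_1,K_0,K_1$ inserted into \eqref{eq:defPhiBe}, so the only detail you are silently correcting is the argument of $\Phi_{\mathrm{Be}}$ in the statement of Lemma \ref{lemma:ccr1} (which must be $z$, not $z\eta^{-2}$, for the exponentials to match).
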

\begin{proof} It follows from \cite{CCR} (subsection 7.2) together with the asymptotic expansion for $H_{\sigma}$. \end{proof}

\subsection{Local and global parametrices}

We define the global and the local parametrices as follows
\begin{align*}
\begin{array}{ll}
G(\xi) := \xi^{-\sigma_3/4}N \e^{-\theta(\xi)\sigma_3},  & |\xi|>1\\
L(\xi) := s_0^{-\sigma_3/14} H_{\sigma}(s_0^{-2/7}\xi; \eta), \qquad & |\xi|<1,
\end{array}
\end{align*}
where we recall
$$
N = \frac{1}{\sqrt{2}}\begin{pmatrix}
					1 & 1 \\
					-1 & 1
		\end{pmatrix} \e^{-\pi \i \sigma_3/4}.
$$

The related small norm Riemann-Hilbert problem is given by

\begin{align}\label{eq:defR2}
R(\xi) = \left\{ \begin{array}{ll}
(\tau_1 s_0^{-1/7})^{-\sigma_3/2}Z(\xi)G^{-1}(\xi)(\tau_1 s_0^{-1/7})^{\sigma_3/2}, &\quad |\xi|>1\\
(\tau_1 s_0^{-1/7})^{-\sigma_3/2}Z(\xi)L^{-1}(\xi)(\tau_1 s_0^{-1/7})^{\sigma_3/2}, &\quad |\xi|<1.
\end{array}\right.
\end{align}

In the Lemma below $\Gamma := \left( \mathbb R\setminus [-1,1] \right) \cup S^1$ as in the previous Section.
\begin{lemma}\label{RThirdSection}
The matrix $R$ defined in \ref{eq:defR2} 
solves the following Riemann-Hilbert problem
\begin{enumerate}
\item $R$ is analytic on $\mathbb{C}\setminus \Gamma$, and has continuous boundary values $R_{\pm}$ satisfying the jump relation
\begin{align*}
J_R(\xi) = \left\{ \begin{array}{ll}
(\tau_1 s_0^{-1/7})^{-\sigma_3/2}G_-(\xi)\begin{pmatrix}1& \chi_{[0,\infty)}(\xi)-\sigma(s_0^{2/7} \xi)\\ 0 & 1 \end{pmatrix}G_-^{-1}(\xi)(\tau_1 s_0^{-1/7})^{\sigma_3/2}, & \xi \in \mathbb{R}/[-1,1]\\
(\tau_1 s_0^{-1/7})^{-\sigma_3/2}G(\xi)L^{-1}(\xi)(\tau_1 s_0^{-1/7})^{-\sigma_3/2}, &\quad |\xi|=1.
\end{array}\right.
\end{align*}
\item As $\xi \to \infty$ and $|\arg \xi | < \pi$,
\begin{align*}
R(\xi) = I+ R^{(1)}\xi^{-1} + O(\xi^{-2}).
\end{align*}
\end{enumerate}
\end{lemma}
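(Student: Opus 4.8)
The plan is to treat this as the final (small-norm) transformation of the steepest-descent analysis, so that Lemma \ref{RThirdSection} is a pure verification, structurally identical to the proof of the error Riemann--Hilbert problem \ref{RHPR} in the previous section. Abbreviate the conjugating matrix by $C := \left(\tau_1 s_0^{-1/7}\right)^{\sigma_3/2}$, so that $R = C^{-1}ZG^{-1}C$ for $|\xi|>1$ and $R = C^{-1}ZL^{-1}C$ for $|\xi|<1$, with $G$ the global and $L$ the local parametrix of \eqref{eq:defR2}. There are exactly two things to check: that $R$ is analytic on $\mathbb C\setminus\Gamma$ with the stated boundary values, and that $R(\xi)=I+R^{(1)}\xi^{-1}+O(\xi^{-2})$ at infinity. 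Since $\det G$ and $\det L$ are constant and nonzero, $R$ is automatically analytic wherever $Z$ is and off $\mathbb R\cup S^1$; the content is the cancellation of jumps across the segment $(-1,1)$ and the identification of the surviving jumps on $S^1$ and on $\mathbb R\setminus[-1,1]$.

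First I would establish analyticity across $(-1,1)$. By construction $L(\xi)=s_0^{-\sigma_3/14}H_\sigma(s_0^{-2/7}\xi;\eta)$, and $H_\sigma$ (Riemann--Hilbert problem \ref{RHPHsigma}) has jump $\begin{pmatrix}1 & 1-\sigma(\zeta)\\ 0 & 1\end{pmatrix}$ at $\zeta=s_0^{-2/7}\xi$, which is precisely the jump $J_Z$ of $Z$ on $(-1,1)$ from \eqref{eq:5.2}. Hence, writing $J$ for this common jump, $R_+ = C^{-1}Z_+L_+^{-1}C = C^{-1}Z_- J J^{-1} L_-^{-1}C = R_-$, so $R$ has no jump there. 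At the finitely many discontinuity points $s_0^{2/7}r_j\in(-1,1)$ (which lie inside the disk for $s_0$ small), I would reuse verbatim the $E_j$-representation argument from the proof of problem \ref{RHPR}: both $Z$ and $L$ carry the same $\tfrac{\ell_j}{2\pi i}\log(s_0^{-2/7}\xi-r_j)$ behaviour, so these logarithmic factors and the associated jumps cancel in $ZL^{-1}$, leaving $R$ analytic across each $s_0^{2/7}r_j$.

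Next I would compute the jumps on the remaining contour. On $S^1$, with the interior as the $+$ side, $R_-^{-1}R_+ = C^{-1}G Z^{-1} Z L^{-1} C = C^{-1}G L^{-1}C$, which is the claimed jump. On $\mathbb R\setminus[-1,1]$ both sides use the global parametrix, so $R_-^{-1}R_+ = C^{-1}G_-\,Z_-^{-1}Z_+\,G_+^{-1}C = C^{-1}G_-\,J_Z\,G_+^{-1}C$. Inserting $J_Z=\begin{pmatrix}1 & 1-\sigma\\ 0 & 1\end{pmatrix}$ together with the branch-cut jump of $G=\xi^{-\sigma_3/4}N\e^{-\theta(\xi)\sigma_3}$, and splitting into $(1,\infty)$ (where $\chi_{[0,\infty)}=1$, so $1-\sigma=\chi_{[0,\infty)}-\sigma$ and $G$ is continuous) and $(-\infty,-1)$ (where $\chi_{[0,\infty)}=0$), should collapse the product to the stated $C^{-1}G_-\begin{pmatrix}1 & \chi_{[0,\infty)}-\sigma\\ 0 & 1\end{pmatrix}G_-^{-1}C$. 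Finally, the normalization at infinity follows directly from the reformulated expansion \eqref{eq:asympZsect7}: there $Z=\big[I+\hat Z^{(1)}\xi^{-1}+O(\xi^{-3/2})\big]\xi^{-\sigma_3/4}N\e^{-\theta(\xi)\sigma_3}$, so $ZG^{-1}=I+\hat Z^{(1)}\xi^{-1}+O(\xi^{-3/2})$, and conjugating by the constant matrix $C$ preserves the leading $I$, giving $R=I+R^{(1)}\xi^{-1}+O(\xi^{-2})$ with $R^{(1)}=C^{-1}\hat Z^{(1)}C$.

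The main obstacle I expect is the jump computation on $(-\infty,-1)$: one must correctly combine the branch-cut behaviour of $\xi^{-\sigma_3/4}$ and $\e^{-\theta(\xi)\sigma_3}$ (the odd half-integer powers give $\theta_+=-\theta_-$) with the lens/sector triangular factors $\begin{pmatrix}1&0\\ \pm1&1\end{pmatrix}$ carried by $Z$ near the negative axis, and with the fact that $\chi_{[0,\infty)}-\sigma(s_0^{-2/7}\xi)\to 0$ there, in order to confirm that everything reduces to the clean upper-triangular form and that the conjugation by $C$ produces exactly $I$ (rather than a residual constant) in the expansion at infinity. This is the step where the precise choice of $G$ and of the conjugating power $\left(\tau_1 s_0^{-1/7}\right)^{\sigma_3/2}$ is doing the real work, and it is where I would invest the careful bookkeeping; the rest mirrors the analogous lemma for problem \ref{RHPR}.
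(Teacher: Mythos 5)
Your proposal is correct and follows essentially the same route as the paper: analyticity across $(-1,1)$ from the coincidence of the jumps of $Z$ and $L$ (with the $E_j$-type argument at the discontinuity points of $\sigma$), the jump on $\mathbb{R}\setminus[-1,1]$ computed as $C^{-1}G_-J_ZJ_G^{-1}G_-^{-1}C$ which collapses to the stated upper-triangular form, and the normalization at infinity read off from the reordered expansion \eqref{eq:asympZsect7}. The branch-cut bookkeeping you flag as the main obstacle is exactly the step the paper's proof performs, so no further work is needed.
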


\begin{proof} 
It is clear, by construction, that $R$ is analytic everywhere, except for $\mathbb R$ and $S^1$. As for the interval $[-1,1]$, using the fact that $Z$ and $L$ have the same jumps, one proves that $R$ does not have any discontinuities (using also arguments similar to the ones in the previous section to show that the points of discontinuities of $\sigma$ are removable singularities, and all of them are in the unit disk when $s_0$ is small enough). 
For $\xi \in \mathbb{R}\setminus [-1,1]$, the jump condition comes from the straightforward computation
\begin{align*}
(\tau_1 s_0^{1/7})^{\sigma_3/2}R_+(\xi)(\tau_1 s_0^{-1/7})^{-\sigma_3/2} =& Z_+(\xi) G_{+}^{-1}(\xi)\\
=& Z_-(\xi)J_Z(\xi)J_G^{-1}(\xi) G_{-}^{-1}\\
=& (\tau_1 s_0^{1/7})^{\sigma_3/2} R_-(\xi) (\tau_1 s_0^{-1/7})^{-\sigma_3/2} G_- (\xi)J_Z(\xi)J_G^{-1}(\xi) G_{-}^{-1}(\xi)\\
=&  (\tau_1 s_0^{1/7})^{\sigma_3/2}R_-(\xi)(\tau_1 s_0^{-1/7})^{\sigma_3/2}G_-(\xi) \begin{pmatrix}1& \chi_{[0,\infty)}(\xi)-\sigma(s_0^{2/7} \xi)\\ 0 & 1 \end{pmatrix} G_{-}^{-1}(\xi),
\end{align*}
and analogously for $\xi \in S^1$. %we have
%\begin{align*}
%R_+(\xi)=& (\tau_1 s_0^{-1/7})^{-\sigma_3/2}Z(\xi) L^{-1}(\xi)(\tau_1 s_0^{-1/7})^{\sigma_3/2}\\
%=& (\tau_1 s_0^{-1/7})^{-\sigma_3/2}Z(\xi) G^{-1}(\xi) G(\xi) L^{-1}(\xi)(\tau_1 s_0^{-1/7})^{\sigma_3/2}\\
%=& R_-(\xi) (\tau_1 s_0^{-1/7})^{-\sigma_3/2}G(\xi) L^{-1}(\xi)(\tau_1 s_0^{-1/7})^{\sigma_3/2}.
%\end{align*}
For the asymptotic condition, as $\xi \to \infty$
\begin{align*}
R(\xi)=& (\tau_1 s_0^{-1/7})^{-\sigma_3/2}Z(\xi)G^{-1}(\xi)(\tau_1 s_0^{-1/7})^{\sigma_3/2}\\
=& (\tau_1 s_0^{-1/7})^{-\sigma_3/2}\left[I+\frac{\hat{Z}^{(1)}}{\xi}+O(\xi^{-3/2})\right](\tau_1 s_0^{-1/7})^{\sigma_3/2} = I + O(\xi^{-1}),
\end{align*}
as claimed.\end{proof}

\begin{lemma}
Suppose that the Assumptions \ref{assumptionsthirdregime} are satisfied. Then, as $s_0 \to 0$,
$$
\norm{J_R-I}_{L^1(\Gamma)} = O(s_0^{2/7}\tau_1^{-2}), \quad \norm{J_R-I}_{L^2(\Gamma)} = O(s_0^{2/7}\tau_1^{-2}), \quad \norm{J_R-I}_{L^\infty(\Gamma)} = O(s_0^{2/7}\tau_1^{-2})
$$
In the equation above, $\norm{J_R-I}_{L^a(\Gamma)}, \; a = 1,2,\infty$, denotes the maximum of the entry-wise corresponding $L^a(\Gamma)$-norm.
\end{lemma}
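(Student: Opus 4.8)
The plan is to estimate $J_R-I$ separately on the two components of $\Gamma = (\mathbb R\setminus[-1,1])\cup S^1$, showing that the dominant contribution comes from the unit circle and is of order $s_0^{2/7}\tau_1^{-2}$, while the real-line part is super-exponentially small in $s_0$. Since all three norms are controlled by the same pointwise bounds (the circle has finite length, and the real-line weight is integrable), it suffices to produce an entrywise estimate of order $s_0^{2/7}\tau_1^{-2}$ on $S^1$ and to bound the real-line part by a negligible term.

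On $S^1$ I would insert the asymptotic expansion \eqref{eq:asympHsigma} of $H_\sigma$ into $L(\xi)=s_0^{-\sigma_3/14}H_\sigma(s_0^{-2/7}\xi;\eta)$. Writing $\zeta=s_0^{-2/7}\xi$, two cancellations occur: the factor $\zeta^{-\sigma_3/4}=s_0^{\sigma_3/14}\xi^{-\sigma_3/4}$ absorbs the prefactor $s_0^{-\sigma_3/14}$, and the phase $\zeta^{1/2}\eta=s_0^{-1/7}\xi^{1/2}\eta$ equals $-\theta(\xi)$ by the definition \eqref{eq:defeta} of $\eta$, hence matches the exponential in $G(\xi)=\xi^{-\sigma_3/4}N\e^{-\theta(\xi)\sigma_3}$. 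Consequently
\[
  GL^{-1}=I-\frac{s_0^{2/7}}{\xi}\,s_0^{-\sigma_3/14}\begin{pmatrix} q_\sigma & p_\sigma \\ r_\sigma & -q_\sigma \end{pmatrix}s_0^{\sigma_3/14}+O(s_0^{4/7}).
\]
Conjugating by $(\tau_1 s_0^{-1/7})^{\mp\sigma_3/2}$ as in Lemma \ref{RThirdSection} rescales the off-diagonal entries, and collecting the powers of $s_0$ shows that the $(1,1)$, $(1,2)$ and $(2,1)$ entries of $J_R-I$ are of order $s_0^{2/7}q_\sigma$, $s_0^{2/7}\tau_1^{-1}p_\sigma$ and $s_0^{2/7}\tau_1\, r_\sigma$, respectively.

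To turn these into the uniform bound $O(s_0^{2/7}\tau_1^{-2})$ I would invoke Lemma \ref{lemma:eta}, which gives $\eta=-\tau_1+O(x)$ with $\eta$ bounded and $\mathrm{Re}\,\eta>0$, together with the small-$\eta$ expansions of the preceding Lemma, namely $p_\sigma=O(\eta^{-1})$, $q_\sigma=O(\eta^{-2})$, $r_\sigma=O(\eta^{-3})$. Since $\eta$ is comparable to $|\tau_1|$, this yields $q_\sigma=O(\tau_1^{-2})$, $\tau_1^{-1}p_\sigma=O(\tau_1^{-2})$ and $\tau_1 r_\sigma=O(\tau_1^{-2})$, so every entry is $O(s_0^{2/7}\tau_1^{-2})$; continuity of $J_R$ on the compact circle then makes the $L^1$, $L^2$, $L^\infty$ norms over $S^1$ all of this order. (The assumption that $M_2$ be large enough forces $|\tau_1|\gtrsim s_0^{1/7}$, so that $s_0^{2/7}\tau_1^{-2}$ is indeed small and the small-norm theorem applies.) On $\mathbb R\setminus[-1,1]$ the jump equals $I+(\chi_{[0,\infty)}(\xi)-\sigma(s_0^{-2/7}\xi))\,(\tau_1 s_0^{-1/7})^{-\sigma_3/2}G_-E_{12}G_-^{-1}(\tau_1 s_0^{-1/7})^{\sigma_3/2}$; with $\iota=1$, Assumption \ref{assumption2} gives $|\chi_{[0,\infty)}(\xi)-\sigma(s_0^{-2/7}\xi)|\le k_1\e^{-k_2 s_0^{-6/7}|\xi|^3}$. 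The conjugation produces a factor $\e^{-2\theta(\xi)}$ times algebraic powers of $\xi$ and of $\tau_1 s_0^{-1/7}$; on the negative axis $\theta$ is purely imaginary and this factor is bounded, while on the positive axis it can grow like $\e^{2|\kappa_0|\xi^{5/2}}$ with $|\kappa_0|=O(s_0^{-1/7})$, which is dominated by $\e^{-k_2 s_0^{-6/7}|\xi|^3}$ for small $s_0$. Hence this piece is $O(\e^{-cs_0^{-6/7}})$, negligible against $s_0^{2/7}\tau_1^{-2}$, exactly as in the analogous second-regime estimate; combining both pieces gives the three claimed bounds.

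The main obstacle I anticipate is the bookkeeping on $S^1$: tracking precisely how the two successive diagonal conjugations (the $s_0^{\mp\sigma_3/14}$ intrinsic to $H_\sigma$ and the $(\tau_1 s_0^{-1/7})^{\mp\sigma_3/2}$ from the definition of $R$) act on each off-diagonal entry, and then matching the singular behaviour of $p_\sigma,q_\sigma,r_\sigma$ as $\eta\to0$ against the compensating powers of $\tau_1$ so that all three entries collapse to the single order $s_0^{2/7}\tau_1^{-2}$. The one delicate analytic point on the real-line part is verifying that the super-exponential decay of $\chi_{[0,\infty)}-\sigma$ genuinely dominates the growth of $\e^{-2\theta}$ on the positive axis, where $\kappa_0$ is large; this hinges on $6/7>1/7$ and on Assumption \ref{assumptionsthirdregime}.
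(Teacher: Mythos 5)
Your overall strategy (split $\Gamma$ into $\mathbb R\setminus[-1,1]$ and $S^1$, show the real-line part is super-exponentially small, and extract the $s_0^{2/7}\tau_1^{-2}$ bound entrywise on the circle) is the same as the paper's, and your real-line estimate and the leading-order bookkeeping of the conjugations on $S^1$ are correct. However, there is a gap in the circle estimate. The expansion \eqref{eq:asympHsigma} is an asymptotic expansion of $H_\sigma(\zeta;\eta)$ as $\zeta\to\infty$ \emph{for fixed} $\eta$; under Assumption \ref{assumptionsthirdregime} the parameter $\eta$ is allowed to tend to $0$ (as fast as $s_0^{1/7}$, by Lemma \ref{lemma:eta}(c)), and in that regime the expansion degenerates: not only do the coefficients $p_\sigma,q_\sigma,r_\sigma$ blow up, but the remainder $O(\zeta^{-2})$ is not a priori uniform in $\eta$. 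You substitute the small-$\eta$ asymptotics of the \emph{coefficients} while silently keeping the fixed-$\eta$ remainder, which is exactly the step that needs justification.

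The paper closes this gap by splitting the circle analysis into two cases. When $\eta$ is bounded away from zero it uses \eqref{eq:asympHsigma} directly (the expansion then holds uniformly) and gets $J_R=I+O(s_0^{2/7}\tau_1^{-1})$. When $\eta\to 0$ it replaces \eqref{eq:asympHsigma} by the uniform Bessel-parametrix representation of Lemma \ref{lemma:ccr1}, $H_\sigma(z\eta^{-2})=\eta^{\sigma_3/2}S(z)H(z)$ with $z=\eta^2\zeta$ (valid since $|\eta^2 s_0^{-2/7}\xi|>1$ for $M_2$ large), which shows the true expansion parameter on $S^1$ is $s_0^{2/7}\eta^{-2}$ and yields the controlled remainder $O(s_0^{4/7}\eta^{-4})$ in \eqref{eq:expRThirdSection2}. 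Only after this does the bound $I+O(s_0^{2/7}\eta^{-2})=I+O(s_0^{2/7}\tau_1^{-2})$ follow. To complete your argument you would need to either reproduce this two-case split with Lemma \ref{lemma:ccr1}, or otherwise prove that the remainder in \eqref{eq:asympHsigma} is $O(\eta^{-4}\zeta^{-2})$ uniformly as $\eta\to 0$; as written, the uniformity is asserted rather than proved.
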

\begin{proof} 
For $\xi \in \mathbb{R}/[-1,1]$,
\begin{align*}
J_R - I =&(\tau_1 s_0^{-1/7})^{-\sigma_3/2}G_-(\xi)\begin{pmatrix}0& \chi_{[0,\infty)}(\xi)-\sigma(s_0^{2/7} \xi)\\ 0 & 0 \end{pmatrix}G_-^{-1}(\xi)(\tau_1 s_0^{-1/7})^{\sigma_3/2}
\end{align*}
and, using the explicit expression of $G$ one finds that each entry of $J_R - I$ is less or equal to
$$
 \left|\sigma_1(\xi)-\sigma(s_0^{2/7} \xi)\right| \left| \tau_1^{-1}s_0^{2/7}\xi^{-\frac{1}{2}}\right|\left| \exp\left\{
-\frac{4}{7}\xi^{\frac{7}{2}}-2\kappa_0\xi^{\frac{5}{2}}-2\kappa_1\xi^{\frac{3}{2}}-2\kappa_2\xi^{\frac{1}{2}}\right\} \right|,
$$
that is, the jump is exponentially close to the identity. Consequently, Assumption \ref{assumption2} implies the existence of a constant $c>0$ such that
$$ J_R(\xi) - I = O(\e^{-s_0^{-2/7}c}), \quad \xi \in S^1.$$
%\textcolor{blue}{arrived here}

For the error in the unit circle we must take into account two different regimes. \\
%\textcolor{blue}{Carla, for me there were bits that were unclear. Can you check that, as this is written now, it is correct?} \\
We first consider the case in which $\eta$ is bounded away from zero. In this case, the asymptotic condition for $H_{\sigma}$ holds uniformly. Then,
\begin{align}\label{eq:expRThirdSection}
J_R(\xi) = (\tau_1 s_0^{-1/7})^{-\sigma_3/2}G(\xi) L^{-1}(\xi)(\tau_1 s_0^{-1/7})^{\sigma_3/2} =& \tau_1^{-\sigma_3/2}\left[I-\frac{s_0^{2/7}H_{\sigma}^{(1)}}{\xi}+O \left(\frac{s_0^{2/7}}{\xi}\right)^{3/2}\right]\tau_1^{\sigma_3/2},
\end{align}
leading to
\begin{align*}
J_R(\xi) =& I+ O(s_0^{2/7}\tau_1^{-1}).
\end{align*}

We now consider the case in which $\eta \to 0$. In this case, by Lemma \ref{lemma:eta}, point c), $\eta = -\tau_1+O(x)$ goes to zero at most as $s_0^{1/7}$.

Hence, using Lemma \ref{lemma:ccr1}, for $M_2$ large enough we have $|\eta^2 s_0^{-2/7} \xi| > 1$. Therefore, %Moreover, $\eta = -\tau_1+O(x)$. Therefore,
\begin{align*}
L(\xi)=& s_0^{-\sigma_3/14} H_{\sigma}(s_0^{-2/7}\xi; \eta) \nonumber\\
=& s_0^{-\sigma_3/14} \eta^{\sigma_3/2}S(s_0^{-2/7}\xi \eta^2)H(s_0^{-2/7}\xi \eta^2) \nonumber\\
G(\xi) L^{-1}(\xi) =& \eta^{\sigma_3/2} s_0^{-\sigma_3/14} \left[I-\frac{s_0^{2/7}}{\xi\eta^2}H^{(1)}+O\left(s_0^{4/7}\eta^{-4}\right)\right] \left[I-\frac{s_0^{2/7}}{\xi}S^{(1)}+ O\left(s_0^{4/7}\right)\right] \eta^{-\sigma_3/2} s_0^{\sigma_3/14} \nonumber,
\end{align*}
and the jump becomes

\begin{align}
J_R =& (\tau_1 s_0^{-1/7})^{-\sigma_3/2}G(\xi) L^{-1}(\xi)(\tau_1 s_0^{-1/7})^{\sigma_3/2} \nonumber \\
=& (\eta/\tau_1)^{\sigma_3/2} \left[I-\frac{s_0^{2/7}}{\xi\eta^2}H^{(1)}+O\left(s_0^{4/7}\eta^{-4}\right)\right]\left[I-\frac{s_0^{2/7}}{\xi}S^{(1)}+O\left(s_0^{4/7}\right)\right] (\eta/\tau_1)^{-\sigma_3/2} \nonumber\\
=& I-\frac{s_0^{2/7}}{\xi}\left[\eta^{-2}\left(\frac{\eta}{\tau_1}\right)^{\sigma_3/2} H^{(1)}\left(\frac{\eta}{\tau_1}\right)^{-\sigma_3/2} + \left(\frac{\eta}{\tau_1}\right)^{\sigma_3/2} S^{(1)}\left(\frac{\eta}{\tau_1}\right)^{-\sigma_3/2} \right] + O(s_0^{4/7}\eta^{-4}) \nonumber\\
=& I-\frac{s_0^{2/7}}{\xi}\left[\eta^{-2} \begin{pmatrix}
\frac{9}{128}& -\frac{\eta}{8\tau_1} \\ \frac{39\tau_1}{512\eta} & -\frac{9}{128}
\end{pmatrix} + m_0(\sigma)\begin{pmatrix}
\frac{3}{16} & \frac{\eta}{2\tau_1} \\ - \frac{9\tau_1}{128\eta} & -\frac{3}{16}
\end{pmatrix} \right] + O(s_0^{4/7}\eta^{-4}) \label{eq:expRThirdSection2}\\
=& I + O(s_0^{2/7}\eta^{-2})= I + O(s_0^{2/7}\tau_1^{-2}), \nonumber
\end{align}
and the claimed result follows.
\end{proof}
We can now compute the first term correction in the asymptotic expansion of $R$ in Lemma \ref{RThirdSection}. When $\eta$ is far from zero, using equation \eqref{eq:expRThirdSection} of the previous Lemma and the fact that the jumps on the real line are exponentially small, we obtain
\begin{align*}
R^{(1)}(\tau_1, s_0) =& \frac{1}{2\pi i} \int_{S^1} J_1(s)\dd s\\
=& -\frac{1}{2\pi i} s_0^{2/7}\tau_1^{-\sigma_3/2}H_{\sigma}^{(1)}\tau_1^{\sigma_3/2}\int_{S^1} \frac{1}{s}\dd s+ O(s_0^{4/7}\tau_1^{-4})\\
=& -s_0^{2/7}\tau_1^{-\sigma_3/2}H_{\sigma}^{(1)}\tau_1^{\sigma_3/2} + O(s_0^{4/7}\tau_1^{-1}).
\end{align*}
When $\eta \to 0$, we use instead \eqref{eq:expRThirdSection2} and obtain

\begin{align*}
R^{(1)}(\tau_1, s_0)=& s_0^{2/7}\left[\eta^{-2} \begin{pmatrix}
\frac{9}{128}& -\frac{\eta}{8\tau_1} \\ \frac{39\tau_1}{512\eta} & -\frac{9}{128}
\end{pmatrix} + m_0(\sigma)\begin{pmatrix}
\frac{3}{16} & \frac{\eta}{2\tau_1} \\ - \frac{9\tau_1}{128\eta} & -\frac{3}{16}
\end{pmatrix} \right] + O(s_0^{4/7}\eta^{-4}).
\end{align*}

Combined with the asymptotic expression for $H_{\sigma}$ as $\eta \to 0$, we finally obtain that, uniformly %for 
%$-C_1 s_0^{4/7} \leq \tau_5 \leq -C_2 s_0^{5/7}$, $| \tau_3 | \leq C_1 s_0^{2/7}$, $-C_3 \leq \tau_1 \leq x$, where $$x=(-2s_0^{1/7}/7+\tau_5/s_0^{4/7}-|\tau_3|/s_0^{2/7})$$ and $C_1, C_2, C_3$ real positive constants with $C_3 \geq (2s_0^{1/7}/7+2C_1)$, 
in the regime given by Assumption \ref{assumptionsthirdregime},
\begin{equation}\label{eq:R1ThirdSection}
	R^{(1)}(\tau_1, s_0) = s_0^{2/7} \begin{pmatrix}
								q_{\sigma}(\eta) & p_{\sigma}(\eta) \tau_1^{-1}\\
								r_{\sigma}(\eta) \tau_1 & -q_{\sigma}(\eta)
								\end{pmatrix}+ O(s_0^{4/7}\eta^{-4}).
\end{equation}

\subsection{Asymptotics for $u$ and $v$}

Using the small-norm theorem for the matrix $R$ defined in \eqref{eq:defR} we obtain
\begin{align*}
\tilde{Z}^{(1)} &= (\tau_1s_0^{-1/7})^{\sigma_3/2} R^{(1)}(\tau_1, s_0)(\tau_1 s_0^{-1/7})^{-\sigma_3/2}\\
&= s_0^{2/7}s_0^{-\sigma_3/14}H_{\sigma}^{(1)}s_0^{\sigma_3/14}+ O(s_0^{3/7}\tau_1^{-3}).
\end{align*}
where $\tilde Z^{(1)}$ appears in \eqref{eq:asympZsect7}.  Combining the latter with the asymptotic expansion \eqref{eq:asympZ} we obtain
\begin{align*}
Z^{(1)} &= \frac{\tilde{Z}_{12}^{(1)}}{2}\begin{pmatrix}
-1 & \i \\ \i & 1
\end{pmatrix} = \frac{s_0^{1/7}p_{\sigma}}{2}\begin{pmatrix}
-1 & \i \\ \i & 1
\end{pmatrix} + O(s_0^{3/7}\tau_1^{-3}),\\
Z^{(2)} &= \frac{1}{2}\begin{pmatrix}\tilde{Z}_{11}^{(1)}+\tilde{Z}_{22}^{(1)} & \i \tilde{Z}_{11}^{(1)}-\i \tilde{Z}_{22}^{(1)}\\-\i (\tilde{Z}_{11}^{(1)}-\i \tilde{Z}_{22}^{(1)}) & \tilde{Z}_{11}^{(1)}+\tilde{Z}_{22}^{(1)}\end{pmatrix}\\
=& \begin{pmatrix} * & \i s_0^{2/7}q_{\sigma}\\ * & * \end{pmatrix}+ O(s_0^{3/7}\tau_1^{-3}),
\end{align*}
where we denoted with $*$ the entries that are not relevant for the following computations.
%and for $Z^{(2)}$ it is given by
%\begin{align*} 
%Z^{(2)}=& \frac{1}{2}\begin{pmatrix}\tilde{Z}_{11}^{(1)}+\tilde{Z}_{22}^{(1)} & i\tilde{Z}_{11}^{(1)}-i\tilde{Z}_{22}^{(1)}\\-i(\tilde{Z}_{11}^{(1)}-i\tilde{Z}_{22}^{(1)}) & \tilde{Z}_{11}^{(1)}+\tilde{Z}_{22}^{(1)}\end{pmatrix}\\
%=& \begin{pmatrix}* & -i s_0^{2/7}q_{\sigma}\\* & *\end{pmatrix}+ O(s_0^{3/7}\tau_1^{-3}).
%\end{align*}

The asymptotic behavior of the relevant functions is summarised below.

\begin{lemma}\label{lemma6.8}
Let $-M_1 \tau_7^{4/7} \leq \tau_5 \leq -M_2 \tau_7^{5/7}$, $| \tau_3 | \leq M_1 \tau_7^{2/7}$, $-M_3 \leq \tau_1 \leq x$ where 
$$x=-\left(\frac{2}{7}\right)^{6/7}\tau_7^{1/7}+\tau_5\left(\frac{2}{7\tau_7}\right)^{4/7}-|\tau_3|\left(\frac{2}{7\tau_7}\right)^{2/7},$$ 
$M_1, M_2, M_3$ real positive constants such that $(-M_3,x) \neq \emptyset$. Then, as $\tau_7 \to 0$,
\begin{align*}
u &= -p_{\sigma}(-\tau_1) + O(x),\\
v &= -p_{\sigma}^{2}(-\tau_1) + 2 q_{\sigma}(-\tau_1) + O(x),\\
%\gamma &= -\left(\frac{7\tau_7}{2} \right)^{1/7}p_{\sigma}(\eta) + O(\tau_7^{3/7}\tau_1^{-3}).
\end{align*}
\end{lemma}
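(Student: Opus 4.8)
The plan is to reduce the statement to a direct substitution into the two closed formulas \eqref{eq:uvsX} and \eqref{eq:vvsX}, which already express $u$ and $v$ as explicit (rational) functions of the expansion coefficients $X^{(1)}, X^{(2)}$ of the solution of the Riemann--Hilbert problem \ref{RHPZ}. First I would record the scaling dictionary between $X$ and $Z$: comparing the two asymptotic expansions at infinity through the definition \eqref{eq:defZ} gives $X^{(j)} = s_0^{-j/7}Z^{(j)}$, so that $u = s_0^{-1/7}\bigl(Z^{(1)}_{11} + \i Z^{(1)}_{12}\bigr)$ and $v$ becomes a quadratic expression in $s_0^{-1/7}Z^{(1)}$ together with the single off-diagonal entry $s_0^{-2/7}Z^{(2)}_{12}$. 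Since the matrices $Z^{(1)}$ and $Z^{(2)}$ have just been computed, in the paragraph preceding the lemma, in terms of $\tilde Z^{(1)}$ and hence of $p_\sigma(\eta)$ and $q_\sigma(\eta)$ via the small-norm analysis culminating in \eqref{eq:R1ThirdSection}, the remaining work is purely the substitution, the collection of the cancellations, and the passage to the $\tau_7$ variables.

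Then I would carry out the two computations. For $u$, inserting the pre-computed $Z^{(1)}$ yields $Z^{(1)}_{11} + \i Z^{(1)}_{12} = -s_0^{1/7}p_\sigma(\eta)$, whence $u = -p_\sigma(\eta) + (\text{error})$. For $v$ I would use that the symmetry structure gives $X^{(1)}_{11} = -\tfrac12 p_\sigma(\eta)$, $X^{(1)}_{12} = \tfrac{\i}{2}p_\sigma(\eta)$ and $X^{(2)}_{12} = \i q_\sigma(\eta)$; substituting into \eqref{eq:vvsX}, the two quadratic contributions $-2\i X^{(1)}_{11}X^{(1)}_{12}$ and $2(X^{(1)}_{12})^2$ each equal $-\tfrac12 p_\sigma^2(\eta)$ and combine to $-p_\sigma^2(\eta)$, while $-2\i X^{(2)}_{12} = 2q_\sigma(\eta)$, so that $v = -p_\sigma^2(\eta) + 2q_\sigma(\eta) + (\text{error})$. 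This is the one place where I would recheck the bookkeeping of the factors of $\i$ and of the sign of each cancellation. Finally, by Lemma \ref{lemma:eta} one has $\eta = -\tau_1 + O(x)$ with $\eta$ bounded and of positive real part, so the Taylor expansions \eqref{pexpansion} and \eqref{qexpansion} replace $p_\sigma(\eta)$ by $p_\sigma(-\tau_1)$ and $q_\sigma(\eta)$ by $q_\sigma(-\tau_1)$; rewriting $s_0 = \tfrac{7}{2}\tau_7$ turns the $x$ of Assumption \ref{assumptionsthirdregime} into the $x$ stated in the lemma, giving the two displayed formulas.

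The main obstacle is not the algebra but the \emph{uniform} control of the remainders in the single symbol $O(x)$ over the whole regime of Assumption \ref{assumptionsthirdregime}. One must show that the small-norm error inherited from \eqref{eq:R1ThirdSection} after the conjugation by $(\tau_1 s_0^{-1/7})^{\sigma_3/2}$, together with the linear Taylor remainder $p_\sigma'(-\tau_1)(\eta+\tau_1)$ and its analogue for $q_\sigma$, is genuinely absorbed into $O(x)$. The delicate point is the corner where $\eta \to 0$: there both $p_\sigma$ and its derivative blow up (like $\eta^{-1}$ and $\eta^{-2}$), and the crude term-by-term bounds are not by themselves conclusive, so the estimate must exploit the quantitative content of Lemma \ref{lemma:eta} — the boundedness and positivity of $\eta$, the inequality $|\eta+\tau_1| = |\tilde\eta| \le |\tau_1|$, and the fact that $\eta$ tends to $0$ no faster than $s_0^{1/7}$. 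It is precisely these facts that tie the inverse powers of $\tau_1$ to the size of $x$ and keep the combined remainder of order $O(x)$ uniformly as $\tau_7 \to 0$.
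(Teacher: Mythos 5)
Your proposal is correct and follows essentially the same route as the paper's own proof: substitute the small-norm expressions for $Z^{(1)}$ and $Z^{(2)}$ (obtained from $\tilde Z^{(1)}$ and \eqref{eq:R1ThirdSection}) into \eqref{eq:uvsX} and \eqref{eq:vvsX} via the rescaling $X^{(j)} = s_0^{-j/7}Z^{(j)}$, then use Lemma \ref{lemma:eta} and the Taylor expansions \eqref{pexpansion}--\eqref{qexpansion} to replace $\eta$ by $-\tau_1$. Your algebra for the cancellations producing $-p_\sigma^2(\eta)+2q_\sigma(\eta)$ checks out, and your remark on uniformity near $\eta\to 0$ correctly identifies the quantitative content of Lemma \ref{lemma:eta} as the ingredient that absorbs the inverse powers of $\tau_1$ into $O(x)$.
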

\begin{proof} %The new constants in hypotheses can be expressed by means of $C_1, C_2, C_3, C_4$ by the substitution $s_0 = 7 \tau_7/2$, resulting in
%\begin{align*}
%M_1 &= C_1 \left(7/2\right)^{4/7} & M_2 &= C_2 \left(7/2\right)^{5/7} \\
%M_3 &= C_3 \left(7/2\right)^{2/7}, & M_4 &= C_4 \geq \left(2/7\right)^{6/7}\tau_7^{1/7}+M_1\left(2/7\right)^{4/7}+M_3\left(2/7\right)^{2/7}
%\end{align*}
For $u$, using equation \eqref{eq:uvsX} together with $X_{11}^{(1)}+\i X_{12}^{(1)}= s_0^{-1/7}(Z^{(1)}_{11}+\i Z^{(1)}_{12})$ and the expression above for $Z^{(1)}$ above Lemma \ref{lemma6.8}, we obtain
$$
u = -p_{\sigma}(\eta) + O(\tau_7^{2/7}\tau_1^{-3}) = -p_{\sigma}(-\tau_1) + O(x),
$$
where in the last equation we used Lemma \ref{lemma:eta}.
For $v$, we start with \eqref{eq:vvsX} and, similarly, as before, we obtain
\begin{align*}
v =& s_0^{2/7}[-2\i Z^{(1)}_{11}Z^{(1)}_{12}+2(Z^{(1)}_{12})^2-2\i Z^{(2)}_{12}]\\
=& s_0^{2/7}[-s_0^{2/7}p_{\sigma}^2(\eta)-2s_0^{2/7}q_{\sigma}(\eta)+O(\tau_7^{3/7}\tau_1^{-3})]\\
=& -p_{\sigma}^2(\eta)+2q_{\sigma}(\eta)+O(\tau_7^{1/7}\tau_1^{-3}) = -p_{\sigma}^{2}(-\tau_1) + 2 q_{\sigma}(-\tau_1) + O(x),
\end{align*}
again using Lemma \ref{lemma:eta} in the last equality.
%At last, the behavior for $\gamma$ also follows by the definition $\gamma = -2s_0^{1/7}iX_{12}^{(1)}$. The final result follows from the change in variables $s_0 \mapsto \tau_7$. $\blacksquare$
\end{proof}

\section*{Acknowledgement}
We are grateful to Thomas Bothner, Tom Claeys and Giulio Ruzza for very fruitful discussions on preliminary versions of the results presented here. M.C. acknowledges the support of the Centre Henri Lebesgue, program ANR-11-LABX0020-0, and the International Research Project PIICQ, funded by CNRS-Math\'ematiques. C.M.S.P. acknowledges the support of the São Paulo Research Foundation (FAPESP), grants $\#$2021/10819-3 and $\#$2023/14157-0. We also want to thank the two anonymous referees, whose comments greatly improved the quality of the exposition.

\appendix

\section{Explicit computations for the potential KdV equations}
\label{App:pKdV}

The aim of this appendix is to give a full derivation of equations \eqref{eq:ApppKdV}.
Expanding $\left(\frac{\partial}{\partial {\tau_1}} X\right)X^{-1}$ at $\zeta = \infty$ and setting to zero the terms in $\zeta^{-1}$ and $\zeta^{-2}$ the following non-trivial relations between the entries of $X^{(j)}$, defined in \eqref{eq:asympX}, are found:
\begin{align*}
X_{11}^{(2)} =& \frac{\left(X_{11}^{(1)}\right)^2 + \left(X_{12}^{(1)}\right)^2}{2} \\
X_{11}^{(4)}=& -\frac{1}{2}\left(\left(X_{11}^{(2)}\right)^{2}+ \left(X_{12}^{(2)}\right)^2\right)  + X_{11}^{(1)} X_{11}^{(3)}+ X_{12}^{(1)} X_{12}^{(3)}.
\end{align*}
These equations will be used in subsequent simplifications, without mentioning them.
Next, expanding $\left(\frac{\partial}{\partial {\tau_3}} X\right)X^{-1}$ we obtain new expressions for $b_1$ and $b_2$:
\begin{align*}
b_1 &= - 2 \i \left(X_{11}^{(1)}\right)^{2} X_{12}^{(1)}+ 4 X_{11}^{(1)} \left(X_{12}^{(1)}\right)^{2} - 2 \i X_{11}^{(1)} X_{12}^{(2)} + 2\i X_{11}^{(2)} X_{12}^{(1)} \\
&+ 2 \i \left(X_{12}^{(1)}\right)^{3} + 4 X_{12}^{(1)} X_{12}^{(2)} - 2 \i X_{12}^{(3)}\\
b_2 &= 2 \i \left(X_{11}^{(1)}\right)^{3} X_{12}^{(1)} - 6 \left(X_{11}^{(1)}\right)^{2} \left(X_{12}^{(1)}\right)^{2} + 2 \i \left(X_{11}^{(1)}\right)^{2} X_{12}^{(2)} - 4 \i X_{11}^{(1)} X_{11}^{(2)} X_{12}^{(1)} \\
&- 6 \i X_{11}^{(1)} \left(X_{12}^{(1)}\right)^{3} - 8X_{11}^{(1)} X_{12}^{(1)} X_{12}^{(2)} + 2 \i X_{11}^{(1)} X_{12}^{(3)} + 4 X_{11}^{(2)} \left(X_{12}^{(1)}\right)^{2} - 2 \i X_{11}^{(2)} X_{12}^{(2)} \\
&+ 2 \i X_{11}^{(3)} X_{12}^{(1)} + 2 \left(X_{12}^{(1)}\right)^{4} - 6 \i \left(X_{12}^{(1)}\right)^{2} X_{12}^{(2)} - 4 X_{12}^{(1)} X_{12}^{(3)} - 2 \left(X_{12}^{(2)}\right)^{2} \\
&+ 2 \i X_{12}^{(4)} - \frac{\partial}{\partial \tau_{3}} X_{11}^{(1)} - \i \frac{\partial}{\partial \tau_{3}} X_{12}^{(1)}
\end{align*}
as well as the identity
\begin{align*}
\frac{\partial}{\partial \tau_{3}} u =& - \i \left(X_{11}^{(1)}\right)^{2} X_{12}^{(2)} - 2 \i X_{11}^{(1)} X_{12}^{(3)} - 2 \i X_{11}^{(3)} X_{12}^{(1)} - \i \left(X_{12}^{(1)}\right)^{2} X_{12}^{(2)} \\
&+ 4 X_{12}^{(1)} X_{12}^{(3)} - 2 \left(X_{12}^{(2)}\right)^{2} - 2 \i X_{12}^{(4)}.
\end{align*}

The formulas for $b_0$ and $b_2$ allow us to verify the first equation in \eqref{eq:ApppKdV}.
\begin{equation*}
b_2 = \frac{b_0^2}{4} -2\frac{\partial}{\partial \tau_{3}}u.
\end{equation*}

Next, expanding $\left(\frac{\partial}{\partial {\tau_5}} X\right)X^{-1}$, and using relations coming from the fact that the term in $\zeta^{-1}$ is equal to $0$, we further obtain
\begin{align*}
b_3&= \frac{1}{4} \i \left(X_{11}^{(1)}\right)^{4} X_{12}^{(1)} + \frac{1}{2} \i \left(X_{11}^{(1)}\right)^{2} \left(X_{12}^{(1)}\right)^{3} + 2 \left(X_{11}^{(1)}\right)^{2} X_{12}^{(1)} X_{12}^{(2)} - \i \left(X_{11}^{(1)}\right)^{2} X_{12}^{(3)} \\
&- 2 \i X_{11}^{(1)} X_{11}^{(3)} X_{12}^{(1)} + 4 X_{11}^{(1)} X_{12}^{(1)} X_{12}^{(3)} - 2 \i X_{11}^{(1)} X_{12}^{(4)} + 4 X_{11}^{(3)} \left(X_{12}^{(1)}\right)^{2} - 2 \i X_{11}^{(3)} X_{12}^{(2)} \\
&+ \frac{1}{4} \i \left(X_{12}^{(1)}\right)^{5} + 2 \left(X_{12}^{(1)}\right)^{3} X_{12}^{(2)} + 5 \i \left(X_{12}^{(1)}\right)^{2} X_{12}^{(3)} - 3 \i X_{12}^{(1)} \left(X_{12}^{(2)}\right)^{2} + 4 X_{12}^{(1)} X_{12}^{(4)} \\
&- 2 \i X_{12}^{(5)}\\
b_4&= - 2 \i \left(X_{11}^{(1)}\right)^{3} X_{12}^{(1)} + 2 \left(X_{11}^{(1)}\right)^{2} \left(X_{12}^{(1)}\right)^{2} - 2 \i \left(X_{11}^{(1)}\right)^{2} X_{12}^{(2)} + 4 \i X_{11}^{(1)} X_{11}^{(2)} X_{12}^{(1)} \\
&- 2 \i X_{11}^{(1)} \left(X_{12}^{(1)}\right)^{3} - 2 \i X_{11}^{(1)} X_{12}^{(3)} - 4 X_{11}^{(2)} \left(X_{12}^{(1)}\right)^{2} + 2 \i X_{11}^{(2)} X_{12}^{(2)} - 2 \i X_{11}^{(3)} X_{12}^{(1)} \\
&+ 2 \left(X_{12}^{(1)}\right)^{4} - 2 \i \left(X_{12}^{(1)}\right)^{2} X_{12}^{(2)} + 4 X_{12}^{(1)} X_{12}^{(3)} - 2 \left(X_{12}^{(2)}\right)^{2} - 2 \i X_{12}^{(4)}\\
%b_5 &=  b_2+\partial_{\tau_3}u\\
b_6 &=  \left(X_{11}^{(1)}\right)^{4} \left(X_{12}^{(1)}\right)^{2} + 2 \left(X_{11}^{(1)}\right)^{2} \left(X_{12}^{(1)}\right)^{4} - 4 \i \left(X_{11}^{(1)}\right)^{2} \left(X_{12}^{(1)}\right)^{2} X_{12}^{(2)} - 4 \left(X_{11}^{(1)}\right)^{2} X_{12}^{(1)} X_{12}^{(3)} \\
&- 8 X_{11}^{(1)} X_{11}^{(3)} \left(X_{12}^{(1)}\right)^{2} - 8 \i X_{11}^{(1)} \left(X_{12}^{(1)}\right)^{2} X_{12}^{(3)} - 8 X_{11}^{(1)} X_{12}^{(1)} X_{12}^{(4)} - 8 \i X_{11}^{(3)} \left(X_{12}^{(1)}\right)^{3}\\
& - 8 X_{11}^{(3)} X_{12}^{(1)} X_{12}^{(2)} + \left(X_{12}^{(1)}\right)^{6} - 4 \i \left(X_{12}^{(1)}\right)^{4} X_{12}^{(2)} + 4 \left(X_{12}^{(1)}\right)^{3} X_{12}^{(3)} - 4 \left(X_{12}^{(1)}\right)^{2} \left(X_{12}^{(2)}\right)^{2}\\
& - 8 \i \left(X_{12}^{(1)}\right)^{2} X_{12}^{(4)} - 8 X_{12}^{(2)} X_{12}^{(4)} + 4 \left(X_{12}^{(3)}\right)^{2} - 2 \frac{\partial}{\partial \tau_{5}} u.
\end{align*}

Once again, the explicit form for such entries allows us to verify the second equation in \eqref{eq:ApppKdV}:
\begin{equation}
b_6 =-b_1^2+\frac{b_0^3}{8}-b_0b_4 -2\frac{\partial}{\partial \tau_{5}}u. 
\end{equation}
 
Analogously, using the expansion of $\left(\frac{\partial}{\partial {\tau_7}} X\right)X^{-1}$ one gets
\begin{align*}
b_{10}
= & -\frac{1}{2} \left(X_{11}^{(1)}\right)^{6} \left(X_{12}^{(1)}\right)^{2} - \frac{3}{2} \left(X_{11}^{(1)}\right)^{4} \left(X_{12}^{(1)}\right)^{4} + \i \left(X_{11}^{(1)}\right)^{4} \left(X_{12}^{(1)}\right)^{2} X_{12}^{(2)}\\
& + \left(X_{11}^{(1)}\right)^{4} X_{12}^{(1)} X_{12}^{(3)}+ 4 \left(X_{11}^{(1)}\right)^{3} X_{11}^{(3)} \left(X_{12}^{(1)}\right)^{2} - \frac{3}{2}\left(X_{11}^{(1)}\right)^{2} \left(X_{12}^{(1)}\right)^{6}\\
& + 2 \i \left(X_{11}^{(1)}\right)^{2} \left(X_{12}^{(1)}\right)^{4} X_{12}^{(2)} + 6 \left(X_{11}^{(1)}\right)^{2} \left(X_{12}^{(1)}\right)^{3} X_{12}^{(3)} - 2 \left(X_{11}^{(1)}\right)^{2} \left(X_{12}^{(1)}\right)^{2} \left(X_{12}^{(2)}\right)^{2}\\
&- 4\i \left(X_{11}^{(1)}\right)^{2} \left(X_{12}^{(1)}\right)^{2} X_{12}^{(4)} - 4 \left(X_{11}^{(1)}\right)^{2} X_{12}^{(1)} X_{12}^{(5)} + 4 X_{11}^{(1)} X_{11}^{(3)} \left(X_{12}^{(1)}\right)^{4}\\
& - 8 \i X_{11}^{(1)} X_{11}^{(3)} \left(X_{12}^{(1)}\right)^{2} X_{12}^{(2)} - 8 X_{11}^{(1)} X_{11}^{(3)} X_{12}^{(1)} X_{12}^{(3)} - 8 X_{11}^{(1)} X_{11}^{(5)} \left(X_{12}^{(1)}\right)^{2}\\
& - 8 \i X_{11}^{(1)} \left(X_{12}^{(1)}\right)^{2} X_{12}^{(5)} - 8 X_{11}^{(1)} X_{12}^{(1)} X_{12}^{(6)} - 4 \left(X_{11}^{(3)}\right)^{2} \left(X_{12}^{(1)}\right)^{2} \\
&- 8 \i X_{11}^{(3)} \left(X_{12}^{(1)}\right)^{2} X_{12}^{(3)} - 8 X_{11}^{(3)} X_{12}^{(1)} X_{12}^{(4)} - 8 \i X_{11}^{(5)} \left(X_{12}^{(1)}\right)^{3} - 8 X_{11}^{(5)} X_{12}^{(1)} X_{12}^{(2)}\\
& - \frac{1}{2} \left(X_{12}^{(1)}\right)^{8} + \i \left(X_{12}^{(1)}\right)^{6} X_{12}^{(2)} + 5\left(X_{12}^{(1)}\right)^{5} X_{12}^{(3)} - 2 \left(X_{12}^{(1)}\right)^{4} \left(X_{12}^{(2)}\right)^{2}\\
& - 4 \i \left(X_{12}^{(1)}\right)^{4} X_{12}^{(4)} - 8 \i \left(X_{12}^{(1)}\right)^{3} X_{12}^{(2)} X_{12}^{(3)} + 4 \left(X_{12}^{(1)}\right)^{3} X_{12}^{(5)} + 4 \i \left(X_{12}^{(1)}\right)^{2} \left(X_{12}^{(2)}\right)^{3}\\
& - 8 \left(X_{12}^{(1)}\right)^{2} X_{12}^{(2)} X_{12}^{(4)} - 4 \left(X_{12}^{(1)}\right)^{2} \left(X_{12}^{(3)}\right)^{2} - 8 \i \left(X_{12}^{(1)}\right)^{2} X_{12}^{(6)}\\
& + 4 X_{12}^{(1)} \left(X_{12}^{(2)}\right)^{2} X_{12}^{(3)} - 8 X_{12}^{(2)} X_{12}^{(6)} + 8 X_{12}^{(3)} X_{12}^{(5)} - 4 \left(X_{12}^{(4)}\right)^{2}- 2\frac{\partial}{\partial \tau_{7}}u,
\end{align*}
and
\begin{align*}
b_8 =& \frac{1}{4} \i \left(X_{11}^{(1)}\right)^{4} X_{12}^{(2)} + \frac{1}{2} \i \left(X_{11}^{(1)}\right)^{2} \left(X_{12}^{(1)}\right)^{2} X_{12}^{(2)} - \i \left(X_{11}^{(1)}\right)^{2} X_{12}^{(4)} - 2 \i X_{11}^{(1)} X_{11}^{(3)} X_{12}^{(2)}\\
& - 2 \i X_{11}^{(1)} X_{12}^{(5)} - 2 \i X_{11}^{(3)} X_{12}^{(3)} - 2 \i X_{11}^{(5)} X_{12}^{(1)} + \frac{1}{4} \i \left(X_{12}^{(1)}\right)^{4} X_{12}^{(2)} - \i \left(X_{12}^{(1)}\right)^{2} X_{12}^{(4)}\\
& - 2 \i X_{12}^{(1)} X_{12}^{(2)} X_{12}^{(3)} + 4 X_{12}^{(1)} X_{12}^{(5)} + \i \left(X_{12}^{(2)}\right)^{3} - 4 X_{12}^{(2)} X_{12}^{(4)} + 2 \left(X_{12}^{(3)}\right)^{2} - 2 \i X_{12}^{(6)}.
\end{align*}

Substituting $\frac{\partial}{\partial \tau_{5}}u = b_8$ in the formula for $b_6$, one finds the third equation of \eqref{eq:ApppKdV}:
\begin{equation}
b_{10} = \frac{b_0b_6}{2}-2b_1b_3-b_4b_5 -2\frac{\partial}{\partial \tau_{7}}u.
\end{equation}

\section{The Lax pair associated to the $\mathrm{PI}^{(2)}$ equation}
\label{AppendixPI}

The aim of this Appendix, added for the reader's convenience, is to write explicitly the Lax system associated to the Riemann-Hilbert problem \ref{RHP1}, and explain how the equations in Proposition \ref{prop:eqPI2} are obtained. We also gather information about the asymptotics of the density of the point process $\cal X$ on the negative real axis, see Corollary \ref{cor:density} below. Since the jump on the Riemann-Hilbert problem \ref{RHP1} is constant, using the asymptotic expansion \eqref{eq:asympPhi}, one proves that there exist three polynomial (in $z$) matrices $U,W$ and $V$ such that the following system of equations is satisfied:
\begin{equation}
	\begin{cases}
		\dfrac{\partial}{\partial t_0} \Phi(t_0,t_1;z) = W(t_0,t_1;z) \Phi(t_0,t_1;z), \\
		\dfrac{\partial}{\partial t_1} \Phi(t_0,t_1;z) = V(t_0,t_1;z) \Phi(t_0,t_1;z), \\
		\dfrac{\partial}{\partial z} \Phi(t_0,t_1;z) = U(t_0,t_1;z) \Phi(t_0,t_1;z).
	\end{cases}
\end{equation}
These Lax matrices, which can be written in terms of  the $h$ and $y$ appearing in the asymptotic expansion \eqref{eq:asympPhi}, read explicitly
\begin{equation}
	W(z) = \begin{pmatrix}0 & -2\\ \\ - 2 z + 2 y + h_{t_0} & 0\end{pmatrix}, \quad V(z) = \begin{pmatrix}\frac{1}{6}y_{t_0} & \frac{2}{3} z + \frac{2}{3} y \\ \\ \frac{2}{3} z^{2}- \frac{2}{3} z y + \frac{2}{3} y^{2} +2 h_{t_1} & -\frac{1}{6}y_{t_0} \end{pmatrix}
\end{equation}
and
\begin{equation}
	U(z) = \begin{pmatrix}
 \frac{1}{4} y_{t_0} z+\frac{3}4y y_{t_0}+\frac{1}{64}y_{t_0t_0t_0} & z^2+yz+ \frac{3y^2}{2} +\frac{1}{16}y_{t_0t_0} +t_{1} \\
 \\
z^3-yz^2+( t_{1} - \frac{1}{2} y^2 -\frac{1}{16}y_{t_0t_0})z+ 2y^3+\frac{1}{8}y y_{t_0t_0}-\frac{1}{16}y_{t_0}^2 - 2t_{0} & -\frac{1}{4} y_{t_0} z - \frac{3}4y y_{t_0} - \frac{1}{64}y_{t_0t_0t_0}
\end{pmatrix}.
\end{equation}
Moreover, using the fact that $\lim_{z \to \infty} z\left[\left(\frac{\partial}{\partial t_0} \Phi(z)\right)\Phi^{-1}(z) - W(z)\right]_{12} = 0$, we also obtain the relation \eqref{eq:handy} $h_{t_0} = 2y$. Equations \eqref{eq:PI2} and \eqref{eq:KdV} are derived, respectively, from the Lax compatibility conditions 
$$
 \frac{\partial}{\partial z}W - \frac{\partial}{\partial t_0} U + [W,U] = 0 \quad \text{and} \quad \frac{\partial}{\partial t_1} W - \frac{\partial}{\partial t_0} V + [W ,V] = 0. 
$$
As for the potential KdV equation \eqref{eq:pKdV}, using the fact that $$\lim_{z \to \infty} z\left[\dfrac{\partial}{\partial z} \Phi(t_0,t_1;z)\Phi^{-1}(t_0,t_1,z) - U(z) \right]  = 0,$$ 
by inspection of the entries we find that $3 y^2 + \frac{1}8 y_{t_0} + 3h_{t_1} = 0$, which gives \eqref{eq:pKdV} using equation \eqref{eq:handy}.

The explicit form of the Lax matrices associated to $\Phi$ gives also informations about the density of the point process with kernel \eqref{eq:introkernel}.
\begin{corollary}\label{cor:density}
	$K(u,u) = \frac{|u|^{\frac{5}{2}}}{\pi}+O(|u|^{1/2})$ as $u \to -\infty$.
\end{corollary}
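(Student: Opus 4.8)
The plan is to reduce $K(u,u)$ to an expression in the $z$-Lax matrix $U$ written above and then extract its large-$|u|$ behaviour from the factorised asymptotics \eqref{eq:asympPhi}. First I would take the confluent limit $\mu\to u$ in \eqref{secchapter:introkernel}: since the numerator vanishes on the diagonal, L'Hôpital gives $K(u,u)=\frac{1}{2\pi\i}\Big(\phi_1\phi_2'-\phi_2\phi_1'\Big)$, where the primes denote $\partial_z$ and all functions are the $+$ boundary values at $u\in\mathbb R_-$. Because $\det\Phi\equiv1$ (the jump is unipotent and the leading asymptotics are normalised, so $\det\Phi$ is entire and tends to $1$) and $(\phi_1,\phi_2)^T$ is the first column of $\Phi$ solving $\partial_z\Phi=U\Phi$, this Wronskian equals $\Big(\Phi_+^{-1}U\Phi_+\Big)_{21}$; hence $K(u,u)=\frac{1}{2\pi\i}\Big(\Phi_+^{-1}U\Phi_+\Big)_{21}$.

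Next I would insert the factorisation coming from \eqref{eq:asympPhi}, writing $\Phi_+=P(z)\,\e^{-\theta(z)\sigma_3}C$ with $P(z)=z^{-\sigma_3/4}N\big(I+O(z^{-1/2})\big)$ and $C=\begin{pmatrix}1&0\\1&1\end{pmatrix}$ the constant Stokes matrix valid as $\arg z\to\pi^-$. Using $U=(\partial_z\Phi)\Phi^{-1}$ and that $\sigma_3$ commutes with $\e^{-\theta\sigma_3}$, one obtains $\Phi_+^{-1}U\Phi_+=C^{-1}\e^{\theta\sigma_3}\big(P^{-1}\partial_z P-\theta'\sigma_3\big)\e^{-\theta\sigma_3}C$. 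The key observation is that differentiating the asymptotic series gives $P^{-1}\partial_z P=O(z^{-1})$, with leading term $-\tfrac1{4z}N^{-1}\sigma_3N$ (which is purely off-diagonal); since $\theta$ is \emph{purely imaginary} on $\mathbb R_-$, the conjugating factors $\e^{\pm2\theta}$ are bounded, so the entire $P^{-1}\partial_z P$ block contributes only $O(z^{-1})$ to the $(2,1)$ entry. The remaining diagonal term $-\theta'\sigma_3$, conjugated by $C$, contributes exactly $2\theta'$ to that entry, whence $K(u,u)=\frac{1}{2\pi\i}\big(2\theta'(u)+O(|u|^{-1})\big)$.

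It then remains to evaluate $\theta'(z)=z^{5/2}+t_1z^{1/2}-t_0z^{-1/2}$ on the upper side of the negative axis. With the principal branch $u^{1/2}=\i|u|^{1/2}$ one finds $\theta'(u)=\i\big(|u|^{5/2}+t_1|u|^{1/2}+t_0|u|^{-1/2}\big)$, and therefore
\begin{equation*}
K(u,u)=\frac{1}{\pi}\Big(|u|^{5/2}+t_1|u|^{1/2}+t_0|u|^{-1/2}\Big)+O(|u|^{-1})=\frac{|u|^{5/2}}{\pi}+O(|u|^{1/2}),
\end{equation*}
as claimed; in passing this reproduces the integrated-density heuristic $\frac{2}{7\pi}r^{7/2}$ of the introduction, and the half-integer structure of $\theta'$ explains why the non-oscillatory part skips directly from $|u|^{5/2}$ to $|u|^{1/2}$ with no intermediate powers.

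The main obstacle is analytic rather than algebraic: I must justify that the expansion \eqref{eq:asympPhi} can be differentiated term by term in $z$, so that $P^{-1}\partial_z P=O(z^{-1})$ holds as a genuine asymptotic uniformly up to the boundary ray $\arg z=\pi$, and keep careful track of the branch cuts and of the correct Stokes sector producing $C$. A useful consistency check is that $K(u,u)$ must be real and non-negative, which forces the two oscillatory $O(z^{-1})$ terms $-W_{12}\e^{2\theta}$ and $W_{21}\e^{-2\theta}$ (with $W:=P^{-1}\partial_z P$) to recombine into a real quantity; one may also verify the leading cancellation more concretely by substituting the explicit entries of $U$ together with the leading asymptotics of $\phi_1,\phi_2$ into the Wronskian, where the oscillatory factors collapse via $\cos^2+\sin^2=1$.
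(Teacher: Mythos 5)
Your proposal is correct, and it arrives at the same identity the paper uses -- namely the confluent limit $K(u,u)=\frac{1}{2\pi\i}\bigl(\phi_1\phi_2'-\phi_2\phi_1'\bigr)$ combined with the $z$-equation of the Lax system and the asymptotics \eqref{eq:asympPhi} -- but the execution is genuinely different. The paper substitutes the explicit polynomial entries of $U$ into the Wronskian, obtaining a quadratic form in $\phi_1,\phi_2$ whose coefficients involve $y$, $y_{t_0}$, $y_{t_0t_0}$, etc., and then feeds in the leading asymptotics of $\phi_1,\phi_2$ so that the oscillatory factors collapse. You instead rewrite the Wronskian as $\bigl(\Phi_+^{-1}\partial_z\Phi_+\bigr)_{21}$ (using $\det\Phi\equiv 1$), factor $\Phi_+=P\,\e^{-\theta\sigma_3}C$ with the correct Stokes matrix $C=\left(\begin{smallmatrix}1&0\\1&1\end{smallmatrix}\right)$ for $\arg z\to\pi^-$, and observe that the entire contribution of $P^{-1}\partial_z P=O(z^{-1})$ is killed because $\theta$ is purely imaginary on $\mathbb R^-$, leaving $2\theta'(u)$ as the exact source of the leading terms. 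This buys two things: you never need the explicit form of $U$ (so the Painlev\'e quantities $y,h$ drop out of the argument entirely), and the cancellation of the oscillatory terms is structural rather than a computation. What it costs is the need to justify term-by-term differentiation of \eqref{eq:asympPhi} up to the ray $\arg z=\pi$, which you correctly flag; this is standard for Riemann--Hilbert asymptotic expansions (and can in any case be sidestepped by noting $P^{-1}\partial_zP = P^{-1}UP+\theta'\sigma_3$ with $U$ the known polynomial). Your final expansion $\frac{1}{\pi}\bigl(|u|^{5/2}+t_1|u|^{1/2}+t_0|u|^{-1/2}\bigr)+O(|u|^{-1})$ agrees with the paper's $2\pi\i K(u,u)=2\i|u|^{5/2}+2\i t_1|u|^{1/2}+O(|u|^{-1/2})$, so the argument is sound.
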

\begin{proof}
We have
\begin{align*}
K(u,u) &= \lim_{v \to u} \frac{\phi_1(u)\phi_2(v)-\phi_1(v)\phi_2(u)}{-2\pi i (u-v)}\\
&= \frac{1}{2\pi i}\lim_{v \to u} \phi_1(u)\frac{\partial}{\partial v}\phi_2(v)-\frac{\partial}{\partial v}\phi_1(v)\phi_2(u)\\
&= \frac{1}{2\pi i}\left\{\left(u^3-yu^2+\left(t_{1} - \frac{1}{2} y^2 -\frac{1}{16}y_{t_0t_0}\right)u+ 2y^3+\frac{1}{8}y y_{t_0t_0}-\frac{1}{16}y_{t_0}^{2} - 2t_{0} \right)\phi_1^2(u) \right. + \\
& \left. -2\left ( \left(  \frac{1}{4} y_{t_0}\right)u - y_{t_1}\right)\phi_1(u)\phi_2(u)- \left(u^2+yu+\frac{3y^2}{2} +\frac{1}{16} y_{t_0t_0} +t_{1}\right)\phi_2^2(u)\right\}.
\end{align*}
At last, from Equation \eqref{eq:asympPhi}, one obtains
\begin{align*}
2\pi \i K(u,u) =&  2 \i |u|^{\frac{5}{2}}+ 2 \i|u|^{\frac{1}{2}} t_{1} + O(|u|^{-\frac{1}{2}}).
\end{align*}
%The proof is then obtained observing that $u \to -\infty$, $\phi_1(u) = O(|u|^{-1/4})$ and $\phi_2(u) = O(|u|^{1/4})$ (see equation \eqref{eq:asympPhi}).
%Therefore, $K(u,u) = O(|u|^{5/2})$ as $u \to -\infty$.
\end{proof}

\end{document}